\newcommand*\samethanks[1][\value{footnote}]{\footnotemark[#1]}
\DeclareMathSymbol{\shortminus}{\mathbin}{AMSa}{"39}
\DeclareMathOperator*{\Expop}{\mathbb{E}}
\Crefname{equation}{Eq.}{Eqs.}
\Crefname{assumption}{Assumption}{Assumptions}
\Crefname{condition}{Condition}{Conditions}
\newcommand{\defeq}{:=}
\numberwithin{equation}{section}
\def\grad{\nabla}
\DeclareFontFamily{U}{mathx}{\hyphenchar\font45}
\DeclareFontShape{U}{mathx}{m}{n}{
      <5> <6> <7> <8> <9> <10>
      <10.95> <12> <14.4> <17.28> <20.74> <24.88>
      mathx10
      }{}
\DeclareSymbolFont{mathx}{U}{mathx}{m}{n}
\DeclareMathAccent{\widecheck}{0}{mathx}{"71}
\DeclareMathAccent{\wideparen}{0}{mathx}{"75}
\newcommand{\ignore}[1]{}
\newcommand{\R}{\mathbb{R}}
\newcommand{\opnorm}[1]{\|#1\|_{\op}}
\newcommand{\opnormm}[1]{\left\|#1\right\|_{\op}}
\newcommand{\prnn}[1]{\left(#1\right)}
\newcommand{\calN}{\mathcal{N}}
\newcommand{\matx}{\mathbf{x}}
\newcommand{\matu}{\mathbf{u}}
\newcommand{\matw}{\mathbf{w}}
\newcommand{\iidsim}{\overset{\mathrm{i.i.d}}{\sim}}
\newcommand{\dimx}{d_{x}}
\newcommand{\dimu}{d_{u}}
\newcommand{\mate}{\mathbf{e}}
\newcommand{\matz}{\mathbf{z}}
\newcommand{\op}{\mathrm{op}}
\newcommand{\fro}{\mathrm{F}}
	\theoremstyle{plain}
	\newtheorem{theorem}{Theorem}
	\newtheorem{lemma}{Lemma}[section]
	\newtheorem{fact}[lemma]{Fact}
	\newtheorem{corollary}{Corollary}[section]
	\newtheorem{proposition}[lemma]{Proposition}
	\theoremstyle{definition}
	\newtheorem{definition}{Definition}[section]
  \newtheorem{assumption}{Assumption}
  \newtheorem{condition}{Condition}[section]
\newcommand{\neutralize}[1]{\expandafter\let\csname c@#1\endcsname\count@}
\newtheorem*{theorem*}{Theorem}
\newtheorem*{lemma*}{Lemma}
\newtheorem*{corollary*}{Corollary}
\newtheorem*{proposition*}{Proposition}
\newtheorem*{claim*}{Claim}
\newtheorem*{fact*}{Fact}
\newtheorem*{observation*}{Observation}
\newtheorem*{definition*}{Definition}
\newtheorem*{remark*}{Remark}
\newtheorem*{example*}{Example}
\newtheoremstyle{named}{}{}{\itshape}{}{\bfseries}{}{.5em}{\Cref{#3} {\normalfont (informal)} }
{}
\theoremstyle{named}
\theoremstyle{plain}
\DeclareMathAlphabet{\mathbfsf}{\encodingdefault}{\sfdefault}{bx}{n}
\DeclareMathOperator*{\argmin}{arg\,min}
\let\Pr\relax
\DeclareMathOperator{\Pr}{\mathbb{P}}
\newcommand{\norm}[1]{\|#1\|}
\newcommand{\ceil}[1]{\lceil #1 \rceil}
\newcommand{\E}{\mathbb{E}}
\newcommand{\trace}{\mathrm{tr}}
\newcommand{\poly}{\mathrm{poly}}
\renewcommand{\leq}{~\le~}
\renewcommand{\geq}{~\ge~}
\let\oldtfrac\tfrac
\renewcommand{\tfrac}[2]{\smash{\oldtfrac{#1}{#2}}}
\let\nablaold\nabla
\renewcommand{\nabla}{\nablaold\mkern-2.5mu}
\newcommand{\fronorm}[1]{\left\|#1\right\|_{\fro}}
\newcommand{\calH}{\mathcal{H}}
\newcommand{\diag}{\mathrm{diag}}
\newcommand{\calC}{\mathcal{C}}
\newcommand{\mb}[1]{\mathbf{#1}}
\newcommand{\matv}{\mb{v}}
\renewcommand{\P}[2]{P_{#1, #2}}
\newcommand{\traceb}[1]{\trace\left[ #1 \right]}
\newcommand{\Kst}{K_\star}
\newcommand{\Acl}{A_{\mathrm{cl}}}
\renewcommand{\epsilon}{\varepsilon}
\newcommand{\dlyap}[2]{\mathsf{dlyap}(#1, #2)}
\newcommand{\fnl}{f_{\mathrm{nl}}}
\newcommand{\flin}{G_{\mathrm{lin}}}
\newcommand{\Glin}{\flin}
\newcommand{\Jnl}{J_{\mathrm{nl}}}
\newcommand{\Jlin}{J_{\mathrm{lin}}}
\newcommand{\czero}{\betanl}
\newcommand{\cone}{\betanl}
\newcommand{\xtnl}{\matx_{t,\mathrm{nl}}}
\newcommand{\xtnlp}{\matx_{t+1,\mathrm{nl}}}
\newcommand{\xtnlm}{\matx_{t-1,\mathrm{nl}}}
\renewcommand{\nl}{\mathrm{nl}}
\newcommand{\fnorm}[1]{\| #1 \|_{\mathrm{F}}}
\newcommand{\Dk}{\mathrm{D}_K}
\newcommand{\eval}{\Big|}
\newcommand{\Pky}{P_{K,\gamma}}
\newcommand{\noisygrad}{\widetilde{\grad}}
\newcommand{\Ccost}{\calC_{\mathrm{cost}}}
\newcommand{\Pst}{P_\star}
\newcommand{\dxsphere}{\mathcal{S}^{\dimx -1}}
\newcommand{\fbar}{\overline{f}}
\newcommand{\Khat}{\widehat{K}}
\newcommand{\Gnl}{G_{\mathrm{nl}}}
\newcommand{\betanl}{\beta_{\mathrm{nl}}}
\newcommand{\rnl}{r_{\mathrm{nl}}}
\newcommand{\epsgrad}[1][\epsilon]{#1\text{-}\mathtt{Grad}}
\newcommand{\epseval}[1][\epsilon]{#1\text{-}\mathtt{Eval}}
\newcommand{\Phat}{\widehat{P}}
\newcommand{\Ajac}{A_{\mathrm{jac}}}
\newcommand{\Bjac}{B_{\mathrm{jac}}}
\newcommand{\calX}{\mathcal{X}}
\newcommand{\rst}{r_{\star}}
\newcommand{\mnl}{M_{\mathrm{nl}}}
\newcommand{\ts}{{T_s}} 
\newcommand{\nsmp}{N} 
\newcommand{\nstp}{M} 
\newcommand{\rlen}{H} 
\newcommand{\lrate}{\eta} 
\newcommand{\rroa}{{r_\text{roa}}} 
\newcommand{\hinf}{{\mathcal{H}_\infty}} 
\title{Stabilizing Dynamical Systems via Policy Gradient Methods}
\author{Juan C. Perdomo\thanks{University of California, Berkeley, \texttt{jcperdomo@berkeley.edu}}   \and Jack Umenberger \thanks{MIT, \texttt{umnbrgr@mit.edu, msimchow@csail.mit.edu}}
 \and Max Simchowitz \samethanks[2] }
\date{\today}
\begin{document}

\maketitle
\begin{abstract}
Stabilizing an unknown control system is one of the most fundamental problems in control systems engineering.  In this paper, we provide a simple, model-free algorithm for stabilizing fully observed dynamical systems.  While model-free methods have become increasingly popular in practice due to their simplicity and flexibility, stabilization via direct policy search has received surprisingly little attention. Our algorithm proceeds by solving a series of discounted LQR problems, where the discount factor is gradually increased. We prove that this method efficiently recovers a stabilizing controller for linear systems, and for smooth, nonlinear systems within a neighborhood of their equilibria. Our approach overcomes a significant limitation of prior work, namely the need for a pre-given stabilizing control policy. 
We empirically evaluate the effectiveness of our approach on common control benchmarks.  
\end{abstract}

\setcounter{tocdepth}{0}


\section{Introduction}

Stabilizing an unknown control system is one of the most fundamental problems in control systems engineering. A wide variety of tasks - from maintaining a dynamical system around a desired equilibrium point, to tracking a reference signal (e.g a pilot's input to a plane) - can be recast in terms of stability. More generally, synthesizing an initial stabilizing controller is often a necessary first step towards solving more complex tasks, such as adaptive or robust control design \citep{sontag1999nonlinear, sontag2009}.

In this work, we consider the problem of finding a stabilizing controller for an unknown dynamical system via direct policy search methods. We introduce a simple procedure based off policy gradients which provably stabilizes a dynamical system around an equilibrium point. Our algorithm only requires access to a simulator which can return rollouts of the system under different control policies, and can efficiently stabilize both linear and smooth, nonlinear systems.

Relative to model-based approaches, model-free procedures, such as policy gradients, have two key advantages: they are conceptually simple to implement, and they are easily adaptable; that is, the same method can be applied in a wide variety of domains without much regard to the intricacies of the underlying dynamics. Due to their simplicity and flexibility, direct policy search methods have become increasingly popular amongst practitioners, especially in settings with complex, nonlinear dynamics which may be challenging to model. In particular, they have served as the main workhorse for recent breakthroughs in reinforcement learning and control \citep{silver2016mastering, mnih2015human, andrychowicz2020learning}.

Despite their popularity amongst practitioners, model-free approaches for continuous control have only recently started to receive attention from the theory community \citep{fazel2018global, kakade2020information, tu2019gap}. While these analyses have begun to map out the computational and statistical tradeoffs that emerge in choosing between model-based and model-free approaches, they all share a common assumption: that the unknown dynamical system in question is stable, or that an initial stabilizing controller is known. As such, they do not address the perhaps more basic question, \emph{how do we arrive at a stabilizing controller in the first place?}

\subsection{Contributions} 

We establish a reduction from stabilizing an unknown dynamical system to solving a series of discounted, infinite-horizon LQR problems via policy gradients, for which no knowledge of an initial stable controller is needed.  Our approach, which we call \emph{discount annealing}, gradually increases the discount factor and yields a control policy which is near optimal for the undiscounted LQR objective. To the best of our knowledge, our algorithm is the first model-free procedure shown to provably stabilize unknown dynamical systems, thereby solving an open problem from  \citet{fazel2018global}. 

We begin by studying linear, time-invariant dynamical systems with full state observation and assume  access to \emph{inexact} cost and gradient evaluations of the discounted, infinite-horizon LQR cost of a state-feedback controller $K$. Previous analyses (e.g., \citep{fazel2018global}) establish how such evaluations can be implemented with access to (finitely many, finite horizon) trajectories sampled from a simulator.  We show that our method recovers the controller $K_{\star}$ which is the optimal solution of the \emph{undiscounted} LQR problem in a bounded number of iterations, up to optimization and simulator error. The stability of the resulting $\Kst$ is guaranteed by known stability margin results for LQR. In short, we prove the following guarantee:
\newtheorem*{inf_theorem:linear_algorithm}{\Cref{theorem:linear_algorithm} (informal)} \begin{inf_theorem:linear_algorithm} 
For linear systems, discount annealing returns a stabilizing state-feedback controller which is also near-optimal for the LQR problem. It uses at most polynomially many, $\epsilon$-inexact gradient and cost evaluations, where the tolerance $\epsilon$ also depends polynomially on the relevant problem parameters. 
\end{inf_theorem:linear_algorithm}
Since both the number of queries and error tolerance are polynomial, discount annealing can be efficiently implemented using at most polynomially many samples from a simulator. 
 
Furthermore, our results extend to smooth, \emph{nonlinear} dynamical systems. Given access to a simulator that can return damped system rollouts, we show that our algorithm finds a controller that attains near-optimal LQR cost for the Jacobian linearization of the nonlinear dynamics at the equilibrium. We then show that this controller stabilizes the nonlinear system within a neighborhood of its equilibrium.

\newtheorem*{thm:nonlinear_informal}{\Cref{theorem:nonlinear_algorithm} (informal)}
\begin{thm:nonlinear_informal} 
Discount annealing returns a state-feedback controller which is exponentially stabilizing for smooth, nonlinear systems within a neighborhood of their equilibrium, using again only polynomially many samples drawn from a simulator.
\end{thm:nonlinear_informal}

In each case, the algorithm returns a near optimal solution $K$ to the relevant LQR problem (or local approximation thereof). Hence, the stability properties of $K$ are, in theory, no better than those of the optimal LQR controller $K_\star$. Importantly, the latter may have worse stability guarantees than the optimal solution of a corresponding robust control objective (e.g. $\mathcal{H}_{\infty}$ synthesis). Nevertheless, we focus on the LQR subroutine in the interest of simplicity, clarity, and in order to leverage prior analyses of model-free methods for LQR. Extending our procedure to robust-control objectives is an exciting direction for future work. 

Lastly, while our theoretical analysis only guarantees that the resulting controller will be stabilizing within a small neighborhood of the equilibrium, our simulations on nonlinear systems, such as the nonlinear cartpole, illustrate that discount annealing produces controllers that are competitive with established robust control procedures, such as $\mathcal{H}_{\infty}$ synthesis, without requiring any knowledge of the underlying dynamics.




\subsection{Related work}
Given its central importance to the field, stabilization of unknown and uncertain dynamical systems has received extensive attention within the controls literature. We review some of the relevant literature and point the reader towards classical texts for a more comprehensive treatment \citep{sontag2009,sastry2011adaptive,zhou1996robust, callier2012linear, zhou1998essentials}. 

\textbf{Model-based approaches.} Model-based methods construct approximate system models in order to synthesize stabilizing control policies. Traditional analyses consider stabilization of both linear and nonlinear dynamical systems in the asymptotic limit of sufficient data \citep{sastry2011adaptive,sastry2013nonlinear}. More recent, non-asymptotic studies  have focused almost entirely on \emph{linear} systems, where the controller is generated using data from multiple independent trajectories \citep{fiechter1997pac, dean2019sample, faradonbeh2018finite, faradonbeh2019randomized}.
Assuming the model is known, stabilizing policies may also be synthesized via convex optimization \citep{prajna2004nonlinear} by combining a `dual Lyapunov theorem' \citep{rantzer2001dual} with sum-of-squares programming \citep{parrilo2003semidefinite}. Relative to these analysis our focus is on strengthening the theoretical foundations of model-free procedures and establishing rigorous guarantees that policy gradient methods can also be used to generate stabilizing controllers.

\textbf{Online control.} Online control studies the problem of adaptively \emph{fine-tuning} the performance of an already-stabilizing control policy on a \emph{single} trajectory \citep{dean2018regret,faradonbeh2018optimality,cohen2019learning,mania2019certainty,simchowitz2020naive,hazan2020nonstochastic,simchowitz2020improper,kakade2020information}. Though early papers in this direction consider systems without pre-given stabilizing controllers \citep{abbasi2011regret}, their guarantees degrade exponentially in the system dimension (a penalty ultimately shown to be unavoidable by \cite{chen2021black}). Rather than fine-tuning an already stabilizing controller, we focus on the more basic problem of finding a controller which is stabilizing in the first place, and allow for the use of  multiple independent trajectories.

\textbf{Model-free approaches.} Model-free approaches eschew trying to approximate the underlying dynamics and instead directly search over the space of control policies. The landmark paper of \cite{fazel2018global} proves that, despite the non-convexity of the problem, direct policy search on the infinite-horizon LQR objective efficiently converges to the globally optimal policy, assuming the search is initialized at an already stabilizing controller. \cite{fazel2018global} pose the synthesis of this initial stabilizing controller via policy gradients as an open problem; one that we solve in this work. 

Following this result, there have been a large number of works studying policy gradients procedures in continuous control, see for example \cite{feng2020escaping, malik2019derivative, mohammadi2020linear, mohammadi2021convergence, zhang2021derivative} just to name a few. Relative to our analysis, these papers consider questions of policy finite-tuning, derivative-free methods, and robust (or distributed) control which are important, yet somewhat orthogonal to the stabilization question considered herein. The recent analysis by \cite{lamperski2020computing} is perhaps the most closely related piece of prior work. It  proposes a model-free, off-policy algorithm for computing a stabilizing controller for
deterministic LQR systems. Much like discount annealing, the algorithm also works by alternating between policy optimization (in their case by a closed-form policy improvement step based on the Riccati update) and increasing a damping factor. However, whereas we provide precise finite-time convergence guarantees to a stabilizing controller for both linear and nonlinear systems, the guarantees in \cite{lamperski2020computing} are entirely asymptotic and restricted to linear systems. Furthermore, we pay special attention to quantifying the various error tolerances in the gradient and cost queries to ensure that the algorithm can be efficiently implemented in finite samples.




\subsection{Background on stability of dynamical systems}
\label{sec:stability_defs}

Before introducing our results, we first review some of the basic concepts and definitions regarding stability of dynamical systems. In this paper, we study discrete-time, noiseless, time-invariant dynamical systems with states $\matx_t \in \R^{\dimx}$ and control inputs $\matu_t \in \R^{\dimu}$. In particular, given an initial state $\matx_0$, the dynamics evolves according to $\matx_{t+1} = G(\matx_t, \matu_t)$ where $G:\R^{\dimx}\times \R^{\dimu} \rightarrow \R^{\dimx}$ is a state transition map. An equilibrium point of a dynamical system is a state $\matx_\star \in \R^{\dimx}$ such that $G(\matx_\star, 0) = \matx_\star$. As per convention, we assume that the origin $\matx_\star = 0$ is the desired equilibrium point around which we wish to stabilize the system.

 This paper restricts its attention to static state-feedback policies of the form $\matu_t = K\matx_t$ for a fixed matrix $K \in \R^{\dimu \times \dimx}$. Abusing notation slightly, we conflate the matrix $K$ with its induced policy. Our aim is to find a policy $K$ which is \emph{exponentially stabilizing} around the  equilbrium point. 
 
 Time-invariant, linear systems, where $G(\matx,\matu) = A \matx + B \matu$ are stabilizable if and only if there exists a  $K$ such that $A+BK$ is a stable matrix \citep{callier2012linear}. That is if $\rho(A+BK) < 1$, where $\rho(X)$ denotes the spectral radius, or the largest eigenvalue magnitude, of a matrix $X$. For general nonlinear systems, our goal is to find controllers which satisfy the following general, quantitative definition of exponential stability (e.g Chapter 5.2 in \cite{sastry2013nonlinear}). Throughout, $\|\cdot\|$ denotes the Euclidean norm.
 \begin{definition} A controller $K$ is \emph{$(m,\alpha)$-exponentially stable} for dynamics $G$ if there exist constants $m, \alpha >0$ such that if inputs are chosen according to $\matu_t = K \matx_t$, the sequence of states $\matx_{t+1} = G(\matx_t,\matu_t)$ satisfy 
 \begin{align}
 \label{eq:exp_stable}
  \|\matx_{t}\| \le m \cdot \exp(- \alpha \cdot t) \|\matx_0\|.
 \end{align}
Likewise, $K$ is \emph{$(m,\alpha)$-exponentially stable on radius $r > 0$} if \eqref{eq:exp_stable} holds for all $\matx_0$ such that $\|\matx_0\| \le r$.
\end{definition}
For linear systems, a controller $K$ is stabilizing if and only if it is stable over the entire state space, however, the restriction to stabilization over a particular radius is in general needed for nonlinear systems. Our approach for stabilizing nonlinear systems relies on analyzing their \emph{Jacobian linearization} about the origin equilibrium. Given a continuously differentiable transition operator $G$, the local dynamics can be approximated by the Jacobian linearization $(A_{\mathrm{jac}},B_{\mathrm{jac}})$ of $G$ about the zero equilibrium; that is
\begin{align}
\label{eq:jacobian_linearization}
\Ajac \defeq \nabla_{\matx} G(\matx,\matu)\big{|}_{(\matx,\matu)=(0,0)}, \quad \Bjac \defeq \nabla_{\matu} G(\matx,\matu)\big{|}_{(\matx,\matu)=(0,0)}.
\end{align}
In particular, for $\matx$ and $\matu$ sufficiently small, $G(\matx,\matu) =\Ajac \matx + \Bjac \matu + f_{\mathrm{nl}}(\matx,\matu)$, where $f_{\mathrm{nl}}(\matx,\matu)$ is a nonlinear remainder from the Taylor expansion of $G$. To ensure stabilization via state-feedback is feasible, we assume throughout our presentation that the linearized dynamics $(\Ajac, \Bjac)$ are stabilizable. 



\section{Stabilizing Linear Dynamical Systems}
\label{sec:main_linear_results}
We now present our main results establishing how 
our algorithm, discount annealing, provably stabilizes linear dynamical systems via a reduction to direct policy search methods. We begin with the following preliminaries on the Linear Quadratic Regulator (LQR). 
\begin{definition}[LQR Objective]
\label{def:lqr_objective}
For a given starting state $\matx$, we define the LQR problem $\Jlin$ with discount factor $\gamma \in (0,1]$, dynamic matrices $(A,B)$, and state feedback controller $K$ as,
\begin{align*}
\Jlin(K \mid \matx, \gamma, A, B) \defeq  \sum_{t=0}^\infty \gamma^t \left( \matx_t^\top Q \matx_t + \matu_t^\top R \matu_t\right) 
\text{ s.t. }\,  \matu_t = K\matx_t,\,  \matx_{t+1} = A\matx_t + B\matu_t, \, \matx_0 = \matx.
\end{align*}
Here, $\matx_t \in \R^{\dimx}$, $\matu_t \in \R^{\dimu}$, and $Q, R$ are positive definite matrices. Slightly overloading notation, we define
\begin{align*}
	\Jlin(K \mid \gamma, A, B) \defeq  \Expop_{\matx_0 \sim \sqrt{\dimx} \cdot \dxsphere}[\Jlin(K \mid \matx, \gamma, A, B)],
\end{align*}
to be the same as the problem above, but where the initial state is now drawn from the uniform distribution over the sphere in $\R^{\dimx}$ of radius $\sqrt{\dimx}$.\footnote{This scaling is chosen so that the initial state distribution has identity covariance, and yields cost equivalent to $\matx_0 \sim \mathcal{N}(0,I)$.}
\end{definition}
To simplify our presentation, we adopt the shorthand $\Jlin(K \mid \gamma) \defeq \Jlin(K \mid \gamma, A, B)$ in cases where the system dynamics $(A,B)$ are understood from context. Furthermore, we assume that $(A,B)$ is stabilizable and that $\lambda_{\min}(Q),\lambda_{\min}(R) \ge 1$. It is a well-known fact that $K_{\star, \gamma} \defeq \argmin_K \Jlin(K \mid \gamma, A, B)$ achieves the minimum LQR cost over all possible control laws. We begin our analysis with the observation that the discounted LQR problem is equivalent to the undiscounted LQR problem with damped dynamics matrices.\footnote{This lemma is folklore within the controls community, see e.g. \cite{lamperski2020computing}.}
\begin{lemma}
\label{lemma:equivalence_lemma}
For all controllers $K$ such that $\Jlin(K\mid \gamma, A,B) < \infty$, 
\begin{align*}
\Jlin(K \mid \gamma, A, B) = \Jlin(K\mid 1, \sqrt{\gamma} \cdot A, \sqrt{\gamma} \cdot B).
\end{align*}
\end{lemma}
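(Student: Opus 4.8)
The plan is to exploit the closed form of the closed-loop trajectory under a static state-feedback law, and to check that damping the dynamics by $\sqrt{\gamma}$ exactly reproduces the geometric weights $\gamma^t$ in the discounted objective. First I would fix a controller $K$ with $\Jlin(K \mid \matx, \gamma, A, B) < \infty$ and an arbitrary starting state $\matx$. Let $\matx_t$ be the state sequence generated by $\matx_{t+1} = A\matx_t + B\matu_t$, $\matu_t = K\matx_t$, $\matx_0 = \matx$, so that $\matx_t = (A+BK)^t \matx$ and $\matu_t = K(A+BK)^t \matx$. Let $\matx_t'$ denote the analogous sequence for the damped system $\matx_{t+1}' = \sqrt{\gamma}\,A\matx_t' + \sqrt{\gamma}\,B\matu_t'$, $\matu_t' = K\matx_t'$, $\matx_0' = \matx$. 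The damped closed-loop map is $\sqrt{\gamma}\,(A+BK)$, so a one-line induction gives $\matx_t' = \gamma^{t/2}(A+BK)^t\matx = \gamma^{t/2}\matx_t$ and hence $\matu_t' = \gamma^{t/2}\matu_t$.

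Next I would substitute these identities into the undiscounted cost of the damped system: for each $t$ we have ${\matx_t'}^\top Q \matx_t' + {\matu_t'}^\top R \matu_t' = \gamma^t\big(\matx_t^\top Q\matx_t + \matu_t^\top R\matu_t\big)$, and summing over $t \ge 0$ yields $\Jlin(K \mid \matx, 1, \sqrt{\gamma}A, \sqrt{\gamma}B) = \Jlin(K \mid \matx, \gamma, A, B)$. Because $Q$ and $R$ are positive definite, every stage cost is nonnegative, so the partial sums of both series agree term by term and are nondecreasing; this means no conditional-convergence subtlety arises and, in particular, finiteness of the discounted cost transfers verbatim to the damped undiscounted cost. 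Finally, since the map $\matx \mapsto \Jlin(K \mid \matx, \cdot)$ is measurable and nonnegative, I would integrate the pointwise identity over $\matx_0 \sim \sqrt{\dimx}\cdot\dxsphere$ (Tonelli) to obtain the averaged form $\Jlin(K \mid \gamma, A, B) = \Jlin(K \mid 1, \sqrt{\gamma}A, \sqrt{\gamma}B)$ claimed in the lemma.

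I do not anticipate a genuine obstacle: the proof is essentially the time-rescaling observation $\gamma^t = (\sqrt{\gamma})^{2t}$ combined with linearity of the closed-loop recursion. The only point deserving an explicit sentence is the bookkeeping that the two sums are equal \emph{including} when one is infinite, which is immediate from the term-by-term equality and nonnegativity of the stage costs; everything else is routine.
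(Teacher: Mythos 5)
Your proposal is correct and follows essentially the same route as the paper: the paper likewise writes the discounted cost as $\sum_t \gamma^t \matx_0^\top (\Acl^t)^\top (Q + K^\top R K)\Acl^t \matx_0$ with $\Acl = A+BK$ and absorbs $\gamma^t$ into $(\sqrt{\gamma}\Acl)^t$, which is exactly your identity $\matx_t' = \gamma^{t/2}\matx_t$ in matrix form. Your added remarks about nonnegativity of stage costs and Tonelli are harmless extra care that the paper omits.
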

From this equivalence, it follows from basic facts about LQR that a controller $K$ satisfies $\Jlin(0 \mid \gamma, A, B) < \infty$ if and only if $\sqrt{\gamma}(A+BK)$ is stable. Consequently, for $\gamma < \rho(A)^{-2}$, the zero controller is stabilizing and one can solve the discounted LQR problem via direct policy search initialized at $K$ = 0 \citep{fazel2018global}. At this point, one may wonder whether the solution to this highly discounted problem yields a controller which stabilizes the undiscounted system. If this were true, running policy gradients (defined in \Cref{eq:policy_gradients_def}) to convergence, on a single discounted LQR problem, would suffice to find a stabilizing controller.
\begin{align}
\label{eq:policy_gradients_def}
	K_{t+1}  = K_t - \eta \grad_K \Jlin(K_t \mid \gamma),\quad   K_0 = 0, \quad \eta > 0
\end{align}
 Unfortunately, the following proposition shows that this is not the case.
\begin{proposition}[Impossibility of Reward Shaping]
\label{prop:reward_shaping}
Fix $A = \diag(0, 2)$. For any positive definite cost matrices $Q,R$ and discount factor $\gamma$ such that $\sqrt{\gamma}A$ is stable, there exists a matrix $B$ such that $(A,B)$ is controllable (and thus stabilizable), yet the optimal controller $K_{\star, \gamma} \defeq \argmin_{K} J(K \mid \gamma, A , B)$ on the discounted problem is such that $A+BK_{\star, \gamma}$ is unstable.
\end{proposition}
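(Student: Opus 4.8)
The plan is to invoke \Cref{lemma:equivalence_lemma} to reduce to the \emph{undiscounted} LQR problem on the damped pair $(\sqrt{\gamma}A,\sqrt{\gamma}B)$---for which the zero controller already has finite cost, since $\sqrt{\gamma}A=\diag(0,2\sqrt{\gamma})$ is stable---and then to exploit the following tension: if $B$ excites the unstable coordinate of $A$ (the eigenvalue-$2$ direction) only through a tiny factor $\epsilon$, then steering that coordinate requires control inputs of size $\Theta(1/\epsilon)$, whose quadratic cost blows up, whereas in the damped system that coordinate decays on its own. Hence the LQR-optimal damped controller has no incentive to act on it, and the \emph{undamped} closed loop remains unstable.

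Concretely, I would take $B\in\R^{2\times\dimu}$ with first row $e_1^\top$ and second row $\epsilon e_1^\top$ (equivalently, only the first input channel is active and it acts in direction $(1,\epsilon)^\top$), where $\epsilon>0$ is fixed at the end. Then the controllability matrix $[B\mid AB]$ contains the linearly independent vectors $(1,\epsilon)^\top$ and $(0,2\epsilon)^\top$, so $(A,B)$ is controllable; consequently $(\sqrt{\gamma}A,\sqrt{\gamma}B)$ is controllable, and since $Q,R\succ 0$ the minimizer $\Ksty=\argmin_K\Jlin(K\mid\gamma,A,B)$ exists, is unique, and attains finite cost---in particular $\Jlin(\Ksty\mid\gamma,A,B)\le\Jlin(0\mid\gamma,A,B)$.

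It then suffices to establish two estimates. First, an upper bound \emph{independent of $B$}: with $K=0$ the closed loop is $\diag(0,2\sqrt{\gamma})$, and a one-line value-function (Lyapunov) computation gives $\Jlin(0\mid\gamma,A,B)=Q_{11}+Q_{22}/(1-4\gamma)=:C_0$, finite precisely because $\sqrt{\gamma}A$ is stable, and depending only on $Q$ and $\gamma$. Second, a lower bound for controllers that stabilize the \emph{undamped} system: since the initial-state law has identity covariance and all stage costs are nonnegative, $\Jlin(K\mid\gamma,A,B)\ge\Exp[\matu_0^\top R\matu_0]=\trace(K^\top RK)\ge\lambda_{\min}(R)\|K\|_{\fro}^2\ge\|K\|_{\fro}^2$. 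With the above $B$, the closed loop $A+BK$ depends only on the first row $(k_1,k_2)$ of $K$, with $\det(A+BK)=2k_1$ and $\trace(A+BK)=k_1+2+\epsilon k_2$; the necessary direction of the Jury test for $2\times 2$ matrices ($\rho(M)<1\Rightarrow|\det M|<1$ and $|\trace M|<1+\det M$) then forces $|k_1|<\tfrac12$ and $k_1+2+\epsilon k_2<1+2k_1$, whence $\epsilon k_2<k_1-1<-\tfrac12$ and therefore $\|K\|_{\fro}^2\ge k_2^2>1/(4\epsilon^2)$.

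Finally, choosing any $\epsilon<1/(2\sqrt{C_0})$ makes $1/(4\epsilon^2)>C_0\ge\Jlin(\Ksty\mid\gamma,A,B)$, so by the contrapositive of the second estimate $\rho(A+B\Ksty)\ge 1$: the optimal discounted controller fails to stabilize the undiscounted system, while $(A,B)$ is controllable by construction, which is the claim. The delicate points---and where I expect the real work to lie---are making the upper bound genuinely independent of $B$ and carrying the lower bound through for a general positive-definite matrix $R$ and arbitrary input dimension $\dimu$; both are handled by the single-active-channel choice of $B$, which makes the closed loop depend on only one row of $K$. Existence and uniqueness of $\Ksty$, and the Jury criterion, are standard.
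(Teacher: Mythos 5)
Your proposal is correct and follows essentially the same route as the paper's proof: take $B$ proportional to $(1,\epsilon)^\top$ (the paper writes $\beta$ for your $\epsilon$ and uses $\dimu=1$), upper-bound the optimal discounted cost by the zero controller's cost, which is independent of $B$, show that any controller stabilizing the undamped pair must have a gain entry of magnitude at least $1/(2\epsilon)$ and hence cost at least $\sigma_{\min}(R)/(4\epsilon^2)$, and choose $\epsilon$ small; the only substantive differences are that you use the Jury/Schur--Cohn criterion where the paper invokes Gershgorin, and that you should carry $\lambda_{\min}(R)$ (rather than dropping it via the standing assumption $\lambda_{\min}(R)\ge 1$) into the final choice of $\epsilon$, since the proposition is stated for arbitrary positive definite $R$.
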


\begin{figure}[t!]
\setlength{\fboxsep}{2mm}
\begin{boxedminipage}{\textwidth}
\begin{center}
\underline{Discount Annealing} \\
\end{center}
{\bf Initialize:} Objective $J(\cdot \mid \cdot)$, $\gamma_0 \in (0, \rho(A)^{-2})$,\; $K_0 \leftarrow 0$,\; and  $Q \leftarrow I$, $R \leftarrow I$\\
\\
{\bf For $t = 0, 1, \dots$}
\begin{enumerate}
	\item If $\gamma_t = 1$, run policy gradients once more as in Step 2, break, and return the resulting $K'$.
	\item Using policy gradients (see \Cref{eq:policy_gradients_def}) initialized at $K_{t}$, find $K'$ such that:
	\begin{align}
	\label{eq:approx_discount_annealing}
		\Jlin(K' \mid \gamma_t) - \min_{K} \Jlin(K \mid \gamma_t) \leq \dimx.
	\end{align}
	\item Update initial controller $K_{t+1} \leftarrow K'$.

	\item Using binary or random search, find a discount factor $\gamma' \in [\gamma_t, 1]$ such that  
	\begin{align}
	\label{eq:discount_factor_search}
		2.5 J(K_{t+1} \mid \gamma_t) \leq J(K_{t+1} \mid \gamma') \leq 8 J(K_{t+1} \mid \gamma_t).
	\end{align}
	\item Update the discount factor $\gamma_{t+1} \leftarrow \gamma'$.
\end{enumerate}
\end{boxedminipage}
\caption{Discount annealing algorithm. The procedure is identical for both linear and nonlinear systems. For linear, we initialize $J = \Jlin(\cdot \mid \gamma_0)$ and for nonlinear $J = \Jnl(\cdot \mid \gamma_0, r_\star)$ where $r_\star$ is chosen as in \Cref{theorem:nonlinear_algorithm}. See \Cref{theorem:linear_algorithm}, \Cref{theorem:nonlinear_algorithm}, and \Cref{section:finite_samples} for details regarding policy gradients and binary (or random) search. The  constants above are chosen for convenience, any constants $c_1, c_2$ such that $1 < c_1 < c_2$ suffice.}
\label{fig:discount_annealing}
\end{figure}

We now describe the discount annealing procedure for linear systems (\Cref{fig:discount_annealing}), which provably recovers a stabilizing controller $K$. For simplicity, we present the algorithm assuming access to noisy, bounded cost and gradient evaluations which satisfy the following definition. Employing standard arguments from \citep{fazel2018global, flaxman2005}, we illustrate how these evaluations can be efficiently implemented using polynomially many samples drawn from a simulator in \Cref{section:finite_samples}.

\begin{definition}[Gradient and Cost Queries]\label{defn:queries} Given an error parameter $\epsilon>0$ and a function $J:\R^{d} \rightarrow \R$,  $\epsgrad(J, \matz)$ returns a vector $\noisygrad$ such that $\fnorm{\noisygrad - \grad J(\matz)} \leq \epsilon$. Similarly, $\epseval(J, \matz, c)$ returns a scalar $v$ such that $|v - \min\{J(\matz), c\}| \leq \epsilon$.
\end{definition}

 The procedure leverages the equivalence (\Cref{lemma:equivalence_lemma}) between discounted costs and damped dynamics for LQR, and the consequence that the zero controller  is stabilizing if we choose $\gamma_0$ sufficiently small. Hence, for this discount factor, we may apply policy gradients initialized at the zero controller in order to recover a controller $K_1$ which is near-optimal for the $\gamma_0$ discounted objective. 

Our key insight is that, due to known stability margins for LQR controllers, $K_1$ is stabilizing for the $\gamma_1$ discounted dynamics for some discount factor $\gamma_1 > (1+c) \gamma_0$, where $c$ is a small constant that has a uniform lower bound. Therefore, $K_1$ has finite cost on the $\gamma_1$ discounted problem, so that we may again use policy gradients initialized at $K_1$ to compute a near-optimal controller $K_2$ for this larger discount factor. By iterating, we have that $\gamma_t \geq (1+ c)^t \gamma_0$ and can increase the discount factor up to 1, yielding a near-optimal stabilizing controller for the undiscounted LQR objective. 

The rate at which we can increase the discount factors $\gamma_t$ depends on certain properties of the (unknown) dynamical system. Therefore, we opt for binary search to compute the desired $\gamma$ in the absence of system knowledge. This yields the following guarantee, which we state in terms of properties of the matrix $\Pst$, the optimal value function for the undiscounted LQR problem, which satisfies $\min_K \Jlin(K \mid 1) =\traceb{\Pst}$ (see \Cref{sec:linear_appendix} for further details).

\begin{theorem}[Linear Systems]
\label{theorem:linear_algorithm}
Let $M_{\mathrm{lin}} \defeq \max\{ 16 \traceb{P_\star}, \Jlin(K_0 \mid \gamma_0)\}$. The following statements are true regarding the discount annealing algorithm when run on linear dynamical systems:
\begin{enumerate}[a)]
	\item Discount annealing returns a controller $\Khat$ which is $(\sqrt{2\trace[\Pst]}, (4 \traceb{\Pst})^{-1})$-exponentially stable.
	\item If $\gamma_0 < 1$, the algorithm is guaranteed to halt whenever $t$ is greater than $64 \traceb{\Pst}^4 \log(1 / \gamma_0)$.
\end{enumerate}
Furthermore, at each iteration $t$:
\begin{enumerate}[c)]
	\item Policy gradients as defined in \Cref{eq:policy_gradients_def} achieves the guarantee in \Cref{eq:approx_discount_annealing} using only $\poly(M_{\mathrm{lin}}, \opnorm{A} , \opnorm{B})$ many queries to $\epsgrad(\cdot , \Jlin(\cdot \mid \gamma))$ as long as $\epsilon$ is less than $ \poly(M_{\mathrm{lin}}^{-1}, \opnorm{A}^{-1}, \opnorm{B}^{-1})$. 
	
	\item The noisy binary search algorithm (see \Cref{fig:noisy_binary_search}) returns a discount factor $\gamma'$ satisfying \Cref{eq:discount_factor_search} using at most $\ceil{4\log( \traceb{\Pst})} + 10$ many queries to $\epseval(\cdot, \Jlin(\cdot \mid \gamma))$ for $\epsilon = .1\dimx$.
\end{enumerate}
\end{theorem}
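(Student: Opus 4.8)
The whole analysis rests on two invariants that Discount Annealing maintains, and most of the four claims are short consequences of them. The first invariant is that the controller handed to policy gradients at round $t$ has uniformly bounded cost, $\Jlin(K_t\mid\gamma_t)\le M_{\mathrm{lin}}$: for $t=0$ this is the definition of $M_{\mathrm{lin}}$, while for $t\ge 1$ the previous round's Step~4 gives $\Jlin(K_t\mid\gamma_t)\le 8\,\Jlin(K_t\mid\gamma_{t-1})\le 16\traceb{\Pst}\le M_{\mathrm{lin}}$, where $\Jlin(K_t\mid\gamma_{t-1})\le 2\traceb{\Pst}$ follows from \eqref{eq:approx_discount_annealing} together with $\min_K\Jlin(K\mid\gamma)=\traceb{P_{\star,\gamma}}\le\traceb{\Pst}$ (the discounted value matrix is monotone in $\gamma$) and $\traceb{\Pst}\ge\dimx$ (since $Q=I$ forces $\Pst\succeq I$). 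The second invariant, proved below, is that bounded cost forces the discount factor to be substantially increasable, via stability margins. Part~(a) is then immediate: the returned $\Khat$ comes from a final policy-gradient run on $\Jlin(\cdot\mid 1)$, so $\Jlin(\Khat\mid 1)\le\traceb{\Pst}+\dimx\le 2\traceb{\Pst}$ by \eqref{eq:approx_discount_annealing}; in particular $\Khat$ is stabilizing and $\traceb{P_{\Khat}}=\Jlin(\Khat\mid 1)\le 2\traceb{\Pst}$ (the initial-state distribution has identity covariance). Since $P_{\Khat}\succeq Q=I$ and $(A+B\Khat)^\top P_{\Khat}(A+B\Khat)=P_{\Khat}-(Q+\Khat^\top R\Khat)\preceq P_{\Khat}-I$, the quadratic $\matx\mapsto\matx^\top P_{\Khat}\matx$ is a Lyapunov function for the closed loop with $\matx_{t+1}^\top P_{\Khat}\matx_{t+1}\le(1-\opnorm{P_{\Khat}}^{-1})\matx_t^\top P_{\Khat}\matx_t$; iterating and using $\|\matx_t\|^2\le\matx_t^\top P_{\Khat}\matx_t\le\opnorm{P_{\Khat}}\|\matx_0\|^2$ and $\opnorm{P_{\Khat}}\le\traceb{P_{\Khat}}\le 2\traceb{\Pst}$ yields $\|\matx_t\|\le\sqrt{2\traceb{\Pst}}\exp(-t/(4\traceb{\Pst}))\|\matx_0\|$, which is exactly the asserted $(\sqrt{2\traceb{\Pst}},(4\traceb{\Pst})^{-1})$-exponential stability.

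For part~(b), I would prove a quantitative margin lemma. Writing $g_K(\gamma):=\Jlin(K\mid\gamma)=\traceb{P^K_\gamma}$ with $P^K_\gamma=Q+K^\top RK+\gamma M^\top P^K_\gamma M$, $M:=A+BK$, differentiation in $\gamma$ gives $\partial_\gamma P^K_\gamma=\sum_{k\ge 0}\gamma^k(M^\top)^k(M^\top P^K_\gamma M)M^k$; bounding $M^\top P^K_\gamma M\preceq\gamma^{-1}(P^K_\gamma-I)\preceq\gamma^{-1}P^K_\gamma$ and $\sum_k\gamma^k(M^\top)^kXM^k\preceq\opnorm{X}P^K_\gamma$ (using $Q+K^\top RK\succeq I$) yields the clean estimate $g_K'(\gamma)\le\dimx\,\gamma^{-1}g_K(\gamma)^2$. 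Integrating, $g_K(\gamma')\ge 2.5\,g_K(\gamma)$ forces $\gamma'/\gamma\ge\exp(0.6/(\dimx\,g_K(\gamma)))$. Applying this with $K=K_{t+1}$, $\gamma=\gamma_t$, using $g_{K_{t+1}}(\gamma_t)\le 2\traceb{\Pst}$ and $\dimx\le\traceb{\Pst}$, and the fact that whenever Step~4 runs it enforces $\Jlin(K_{t+1}\mid\gamma_{t+1})\ge 2.5\,\Jlin(K_{t+1}\mid\gamma_t)$, one gets $\gamma_{t+1}/\gamma_t\ge 1+c/\traceb{\Pst}^2$ for an absolute constant $c$. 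Hence $\gamma_t\ge(1+c/\traceb{\Pst}^2)^t\gamma_0$ and $\gamma_t$ reaches $1$ (triggering Step~1) within $O(\traceb{\Pst}^2\log(1/\gamma_0))$ — a fortiori $\le 64\traceb{\Pst}^4\log(1/\gamma_0)$ — iterations, once one also accounts for the at-most-one round in which no $\gamma'<1$ satisfies \eqref{eq:discount_factor_search} (then $\gamma_{t+1}$ is set to $1$ directly and the next round terminates).

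Part~(c) reduces to the policy-gradient analysis of \citet{fazel2018global}: by \Cref{lemma:equivalence_lemma} the $\gamma_t$-discounted LQR with $(A,B)$ is the undiscounted LQR with $(\sqrt{\gamma_t}A,\sqrt{\gamma_t}B)$, and since $\opnorm{\sqrt{\gamma_t}A}\le\opnorm{A}$ (likewise $B$), their smoothness, Lipschitz, and gradient-domination constants over the sublevel set $\{K:\Jlin(K\mid\gamma_t)\le M_{\mathrm{lin}}\}$ are $\poly(M_{\mathrm{lin}},\opnorm{A},\opnorm{B})$; starting from the bounded initializer $\Jlin(K_t\mid\gamma_t)\le M_{\mathrm{lin}}$, inexact gradient descent with $\epsilon$-accurate $\epsgrad$ queries decreases the objective up to $O(\epsilon)$ slack (so iterates stay in a mild enlargement of the sublevel set) and, by gradient domination, reaches suboptimality $O(\epsilon^2\cdot\poly)$ in $\poly(M_{\mathrm{lin}},\opnorm{A},\opnorm{B})$ steps, which is below the target $\dimx$ in \eqref{eq:approx_discount_annealing} once $\epsilon\le\poly(M_{\mathrm{lin}}^{-1},\opnorm{A}^{-1},\opnorm{B}^{-1})$. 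For part~(d), let $g:=g_{K_{t+1}}$; it is continuous and strictly increasing from $g_0:=g(\gamma_t)\in[\dimx,2\traceb{\Pst}]$ (lower bound because $g_0\ge\min_K\Jlin(K\mid\gamma_t)=\traceb{P_{\star,\gamma_t}}\ge\dimx$). The derivative bound $g'(\gamma)\le\dimx\,\gamma^{-1}g(\gamma)^2$ shows the acceptance set — the points where $g(\gamma)\in[2.5g_0,8g_0]$ — is an interval whose width, measured in the (rescaled) search parameter, is at least $1/\poly(\traceb{\Pst})$; a noisy bisection on $\gamma\in[\gamma_t,1]$ (or a monotone rescaling of it) that queries $\epseval(\cdot,g,\cdot)$ with $\epsilon=0.1\dimx\le 0.1g_0$, steps right when the returned value is $<2.6g_0$, left when it is $>7.9g_0$, and returns otherwise, therefore lands in that interval within $O(\log\traceb{\Pst})$ steps — bounded by $\lceil 4\log\traceb{\Pst}\rceil+10$ once the $O(1)$ endpoint evaluations are included (these also flag the degenerate case $g(1)<2.5g_0$, where it returns $\gamma'=1$), and the $0.1g_0$ slack guarantees every returned $\gamma'$ genuinely satisfies \eqref{eq:discount_factor_search}.

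The conceptual and technical heart is the quantitative margin estimate powering (b) and (d): converting ``a near-optimal discounted-LQR controller is robustly stable'' into the explicit statement ``raising the discount from $\gamma$ to $\gamma(1+c/\poly(\traceb{\Pst}))$ increases the LQR cost by at most a constant factor,'' with the polynomial uniform over all stabilizable systems, and — in part (d) — doing the bisection bookkeeping so that the query count is free of $\gamma_0$. The Lyapunov computation in (a), the black-box appeal to \citet{fazel2018global} in (c), and the noise-robustness of the bisection are comparatively routine once the derivative bound $g_K'(\gamma)\le\dimx\,\gamma^{-1}g_K(\gamma)^2$ and the margin it implies are in hand.
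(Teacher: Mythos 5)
Your proposal is correct and follows the paper's overall architecture: the same two invariants (a uniform bound $\Jlin(K_t\mid\gamma_t)\le M_{\mathrm{lin}}$ on each round's initializer, and a guaranteed multiplicative increase of the discount factor), the same Lyapunov decay argument for part (a) via $\traceb{\Phat}\le 2\traceb{\Pst}$, the same black-box appeal to \citet{fazel2018global} for part (c), and the same feasible-interval-width argument feeding the noisy bisection in part (d). Where you genuinely diverge is in the key quantitative ingredient behind (b) and (d). The paper obtains the margin $\gamma_{t+1}\ge\bigl(1+\tfrac{1}{8\opnorm{P_{K_{t+1},\gamma_t}}^4}\bigr)^2\gamma_t$ by invoking an external Lyapunov perturbation bound (\Cref{lemma:lower_bound_gamma}, citing Proposition C.7 of \citet{perdomo2021dimensionfree}) that compares $\dlyap{\sqrt{\gamma_1}(A+BK)}{\cdot}$ with $\dlyap{\sqrt{\gamma_2}(A+BK)}{\cdot}$ directly. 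You instead derive and integrate the self-contained differential inequality $\partial_\gamma\traceb{P_{K,\gamma}}\le\gamma^{-1}\traceb{P_{K,\gamma}}^2$; your series formula for $\partial_\gamma P_{K,\gamma}$ checks out, the superfluous factor of $\dimx$ is harmless, and the integration is justified because $\Jlin(K_{t+1}\mid\gamma_{t+1})\le 8\,\Jlin(K_{t+1}\mid\gamma_t)<\infty$ keeps the whole interval $[\gamma_t,\gamma_{t+1}]$ inside the region where $\sqrt{\gamma}(A+BK)$ is stable and $P_{K,\gamma}$ is smooth. Your route is more elementary, avoids importing the perturbation lemma, and yields a sharper per-round rate ($1+c/\traceb{\Pst}^2$ versus the paper's $1+c/\traceb{\Pst}^4$; both imply the stated $64\traceb{\Pst}^4\log(1/\gamma_0)$ bound). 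The paper's route has the advantage of reusing machinery (\Cref{lemma:dlyap}, \Cref{lemma:stability_margin}) that it needs anyway for the nonlinear analysis. Your explicit treatment of the degenerate round in which no $\gamma'<1$ satisfies \Cref{eq:discount_factor_search}, and your rescaling of the bisection so that the query count in (d) does not pick up a $\log(1/\gamma_0)$ term, are bookkeeping points the paper passes over silently.
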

We remark that since $\epsilon$ need only be polynomially small in the relevant problem parameters, each call to $\epsgrad$ and $\epseval$ can be carried out using only polynomially many samples from a  simulator which returns finite horizon system trajectories under various control policies. We make this claim formal in \Cref{section:finite_samples}.

\begin{proof}
We prove part $b)$ of the theorem and defer the proofs of the remaining parts of  to \Cref{sec:linear_appendix}. Define $\Pky$ to be the solution to the discrete-time Lyapunov equation. That is for $\sqrt{\gamma}(A+BK)$ stable, $\Pky$ solves:
\begin{align}
\label{eq:def_pky}
	\Pky \defeq Q + K^\top R K + \gamma(A+BK)^\top \Pky (A + BK).
\end{align}
Using this notation, $\Pst = P_{\Kst,1}$ is the solution to the above Lyapunov equation with $\gamma=1$. The key step of the proof is \Cref{lemma:lower_bound_gamma}, which uses  Lyapunov theory to verify the following: given the current discount factor $\gamma_t$, an idealized discount factor $\gamma_{t+1}'$ defined by 
\begin{align*}
	\gamma_{t+1}' \defeq  \left(\frac{1}{8 \opnorm{P_{K_{t+1}, \gamma_{t}}}^4}  + 1 \right)^2 \gamma_t,
\end{align*}
satisfies $\Jlin(K_{t+1} \mid \gamma_{t+1}') = \trace[P_{K_{t+1}, \gamma_{t+1}'}] \leq 2\traceb{P_{K_{t+1}, \gamma_{t}}} = 2 \Jlin(K_{t+1} \mid \gamma_t)$. Since the control cost is non-decreasing in $\gamma$, the binary search update in Step 4 ensures that the actual $\gamma_{t+1}$ also satisfies
\begin{align*}
	\gamma_{t+1} \geq \left(\frac{1}{8 \opnorm{P_{K_{t+1}, \gamma_{t}}}^4}  + 1 \right)^2 \gamma_t \geq \left(\frac{1}{ 128 \traceb{\Pst}^4}  + 1 \right)^2 \gamma_t,  
\end{align*}
The following calculation (which uses $\dimx \le \traceb{\Pst}$ for $\lambda_{\min}(Q) \ge 1$) justifies the second inequality above:
\begin{align*}
\opnorm{P_{K_{t+1}, \gamma_{t}}}^4 \leq \traceb{P_{K_{t+1}, \gamma_{t}}}^4 = \Jlin(K_{t+1} \mid \gamma_t)^4 \leq (\min_K \Jlin(K\mid \gamma_t) + \dimx)^4 \leq 16 \traceb{\Pst}^4.
\end{align*}
Therefore, $\gamma_t \geq ( 1 / (128 \traceb{\Pst}^4)  + 1 )^{2t} \gamma_0$. The precise bound follows from taking logs of both sides and using the numerical inequality $\log(1+x)\leq x$ to simplify the denominator.

\end{proof}


\section{Stabilizing Nonlinear Dynamical Systems}

We now extend the guarantees of the discount annealing algorithm to smooth, nonlinear systems. Whereas our study of linear systems explicitly leveraged the equivalence of discounted costs and damped dynamics, our analysis for nonlinear systems \emph{requires} access to system rollouts under damped dynamics, since the previous equivalence between discounting and damping breaks down in nonlinear settings. 

More specifically, in this section, we assume access to a simulator which given a controller $K$, returns trajectories generated according to $\matx_{t+1} = \sqrt{\gamma}\Gnl(\matx_t, K \matx_{t})$ for any damping factor $\gamma \in (0,1]$, where $\Gnl$ is the transition operator for the nonlinear system. While such trajectories may be infeasible to generate on a physical system, we believe these are reasonable to consider when dynamics are represented using software simulators, as is often the case in practice \citep{lewis2003robot, sim2real}. 

The discount annealing algorithm for nonlinear systems is almost identical to the algorithm for linear systems. It again works by repeatedly solving a series of quadratic cost objectives on the nonlinear dynamics as defined below, and progressively increasing the damping factor $\gamma$.

 \begin{definition}[Nonlinear Objective]
\label{def:nonlinear_objective}
For a statefeedback controller $K: \R^{\dimx} \rightarrow \R^{\dimu}$, damping factor $\gamma \in (0,1]$, and an initial state $\matx$, we define:
\begin{align}
&\Jnl(K  \mid \matx, \gamma) \defeq  \sum_{t=0}^\infty  \matx_t^\top Q \matx_t + \matu_t^\top R \matu_t \label{eq:nonlinear_cost} \\
&\text{ s.t }  \matu_t = K\matx_t,\quad   \matx_{t+1} = \sqrt{\gamma} \cdot \Gnl( \matx_t, \matu_t), \quad \matx_0 = \matx. \label{eq:nonlinear_dynamics}
\end{align}
Overloading notation as before, we let $\Jnl(K \mid \gamma, r)  \defeq \E_{\matx \sim r \cdot \dxsphere} \left[ \Jnl(K \mid \gamma,  \matx) \right] \times \frac{\dimx}{r^2}$.
\end{definition}
The normalization by $\dimx/r^2$ above is chosen so that the nonlinear objective coincides with the LQR objective when $\Gnl$ is in fact linear. Relative to the linear case, the only algorithmic difference for nonlinear systems is that we introduce an extra parameter $r$ which determines the radius for the initial state distribution. As established in \Cref{theorem:nonlinear_algorithm}, this parameter must be chosen small enough to ensure that discount annealing succeeds. Our analysis pertains to dynamics which satisfy the following smoothness definition.

\begin{assumption}[Local Smoothness]
\label{ass:smooth_dynamics}
The transition map $\Gnl$ is continuously differentiable. Furthermore, there exist  $\rnl, \betanl > 0$ such that for all $(\matx,\matu) \in \R^{\dimx+\dimu}$ with $\norm{\matx} + \norm{\matu} \le \rnl$,  
\begin{align*}
	\opnorm{\grad_{\matx,\matu}\Gnl(\matx, \matu) - \grad_{\matx, \matu} \Gnl(0,0)} \leq \betanl ( \norm{\matx} + \norm{\matu}).
\end{align*}
\end{assumption}
For simplicity, we assume $\betanl \geq 1$ and $\rnl \leq 1$. Using \Cref{ass:smooth_dynamics}, we can apply Taylor's theorem to rewrite $\Gnl$ as its Jacobian linearization around the equilibrium point, plus a nonlinear remainder term.
\begin{lemma}
\label{lemma:jacobian}
If $\Gnl$ satisfies \Cref{ass:smooth_dynamics}, then all $\matx, \matu$ for which $\|\matx\| + \|\matu\| \leq \rnl$, 
\begin{align}
\label{eq:nonlinear_assumption}
\Gnl(\matx,\matu) = \Ajac \matx + \Bjac \matu + \fnl(\matx,\matu),
\end{align}
where $\norm{\fnl(\matx,\matu)} \leq \betanl (\|\matx\|^2+\|\matu\|^2)$, $\|\nabla \fnl(\matx,\matu)\| \leq \betanl (\|\matx\|+\|\matu\|)$, and where $(\Ajac,\Bjac)$ are the system's Jacobian linearization matrices defined in \Cref{eq:jacobian_linearization}. 
\end{lemma}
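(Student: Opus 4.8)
The plan is to read \Cref{lemma:jacobian} as nothing more than the integral (mean-value) form of Taylor's theorem, applied to a map whose Jacobian is Lipschitz near the origin. Write the joint variable as $\matz \defeq (\matx,\matu) \in \R^{\dimx + \dimu}$ and let $J \defeq \grad_{\matx,\matu}\Gnl(0,0)$, which by \Cref{eq:jacobian_linearization} is the block row $[\,\Ajac \mid \Bjac\,]$, so that $J\matz = \Ajac\matx + \Bjac\matu$. Then the claimed decomposition simply \emph{defines} $\fnl(\matx,\matu) \defeq \Gnl(\matx,\matu) - \Ajac\matx - \Bjac\matu$, and since the origin is an equilibrium we have $\Gnl(0,0) = 0$, so $\fnl(\matx,\matu) = \Gnl(\matz) - \Gnl(0) - J\matz$. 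It remains to establish the two quantitative bounds.

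The gradient bound is immediate: differentiating the definition gives $\grad_{\matx,\matu}\fnl(\matx,\matu) = \grad_{\matx,\matu}\Gnl(\matx,\matu) - \grad_{\matx,\matu}\Gnl(0,0)$, so $\opnorm{\grad \fnl(\matx,\matu)} \le \betanl(\norm{\matx} + \norm{\matu})$ is exactly \Cref{ass:smooth_dynamics}, valid on $\{\norm{\matx} + \norm{\matu} \le \rnl\}$. For the value bound, I would apply the fundamental theorem of calculus along the segment $s \mapsto s\matz$, $s \in [0,1]$, which is legitimate since $\Gnl$ is $C^1$ and the whole segment satisfies $\norm{s\matx} + \norm{s\matu} \le \norm{\matx} + \norm{\matu} \le \rnl$, hence stays in the region where \Cref{ass:smooth_dynamics} applies. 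This yields
\begin{align*}
\fnl(\matz) = \Gnl(\matz) - \Gnl(0) - J\matz = \int_0^1 \big(\grad_{\matx,\matu}\Gnl(s\matz) - J\big)\matz \, \rmd s.
\end{align*}
Bounding the integrand with \Cref{ass:smooth_dynamics} gives $\opnorm{\grad_{\matx,\matu}\Gnl(s\matz) - J} \le \betanl(\norm{s\matx} + \norm{s\matu}) = s\,\betanl(\norm{\matx} + \norm{\matu})$, so taking norms inside the integral,
\begin{align*}
\norm{\fnl(\matz)} \le \Big(\int_0^1 s \, \rmd s\Big)\,\betanl(\norm{\matx} + \norm{\matu})\norm{\matz} = \tfrac{1}{2} \betanl (\norm{\matx} + \norm{\matu})\norm{\matz}.
\end{align*}
Finally the elementary inequality $(\norm{\matx} + \norm{\matu})\norm{\matz} \le \sqrt{2}\,\norm{\matz}^2 = \sqrt{2}\,(\norm{\matx}^2 + \norm{\matu}^2)$ gives $\norm{\fnl(\matx,\matu)} \le \tfrac{\sqrt 2}{2}\betanl(\norm{\matx}^2 + \norm{\matu}^2) \le \betanl(\norm{\matx}^2 + \norm{\matu}^2)$, as required.

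There is essentially no hard step here: the only points requiring care are verifying that the line segment from $(0,0)$ to $(\matx,\matu)$ remains inside the radius-$\rnl$ ball on which \Cref{ass:smooth_dynamics} is assumed, and the routine conversion between the norms $\norm{\matx} + \norm{\matu}$ and $\sqrt{\norm{\matx}^2 + \norm{\matu}^2}$; the slack factor $\tfrac{\sqrt 2}{2} < 1$ in the final bound comfortably absorbs the latter, so the stated constant $\betanl$ need not be optimized.
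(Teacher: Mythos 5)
Your proposal is correct and follows essentially the same route as the paper's proof: define $\fnl$ as the difference between $\Gnl$ and its Jacobian linearization at the origin (using $\Gnl(0,0)=0$), get the gradient bound directly from \Cref{ass:smooth_dynamics}, and obtain the value bound by integrating the gradient of the remainder along the segment $s \mapsto (s\matx, s\matu)$, which stays inside the radius-$\rnl$ region. The only difference is cosmetic bookkeeping in the final norm conversion — you use $\norm{\matx}+\norm{\matu}\le \sqrt{2}\,\norm{(\matx,\matu)}$ while the paper uses $(\norm{\matx}+\norm{\matu})^2 \le 2(\norm{\matx}^2+\norm{\matu}^2)$ — and both land within the stated constant $\betanl$.
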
 

Rather than trying to directly understand the behavior of stabilization procedures on the nonlinear system, the key insight of our nonlinear analysis is that we can reason about the performance of a state-feedback controller on the nonlinear system via its behavior on the system's Jacobian linearization. In particular, the following lemma establishes how any controller which achieves finite discounted LQR cost for the Jacobian linearization is guaranteed to be exponentially stabilizing on the damped nonlinear system for initial states that are small enough. Throughout the remainder of this section, we define $\Jlin(\cdot \mid \gamma) \defeq \Jlin(\cdot \mid \gamma, \Ajac, \Bjac)$ as the LQR objective from \Cref{def:lqr_objective} where $(A,B) = (\Ajac,\Bjac)$. 
\begin{lemma}[Restatement of \Cref{lemma:exp_convergence}]\label{lem:stable_for_linear_to_nl} Suppose that $\calC_{J} = \Jlin(K \mid \gamma) < \infty$, then $K$ is $(\calC_{J}^{1/2}, (4\calC_{J})^{-1})$ exponentially stable on the damped system $\sqrt{\gamma}\Gnl$ over  radius $r = \rnl \; / \;(\betanl \calC_{J}^{3/2})$.
\end{lemma}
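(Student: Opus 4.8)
Since \Cref{lem:stable_for_linear_to_nl} restates \Cref{lemma:exp_convergence}, I describe the strategy of the latter. The plan is to promote the value matrix of the \emph{damped linear} closed loop to a Lyapunov function for the \emph{damped nonlinear} closed loop, and then to argue that the Taylor remainder supplied by \Cref{lemma:jacobian} perturbs the resulting one-step Lyapunov decrease only negligibly, provided the state stays inside a small ball.

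First I would set up the linear object. Write $\bar{A} \defeq \sqrt{\gamma}(\Ajac+\Bjac K)$ for the damped closed-loop matrix. By \Cref{lemma:equivalence_lemma} and standard LQR facts, $\calC_J = \Jlin(K\mid\gamma) < \infty$ forces $\bar{A}$ to be stable and the solution $P$ of the Lyapunov equation $P = Q + K^\top R K + \bar{A}^\top P \bar{A}$ to exist with $\traceb{P} = \Jlin(K\mid\gamma) = \calC_J$; since $\lambda_{\min}(Q)\ge 1$ this gives $I \preceq P \preceq \calC_J I$, and since $P \succeq K^\top R K \succeq K^\top K$ it gives $\|K\matx\|^2 \le \matx^\top P\matx$ for every $\matx$. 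Take $V(\matx) \defeq \matx^\top P\matx$, so $\|\matx\|^2 \le V(\matx) \le \calC_J\|\matx\|^2$, and write $\|\matv\|_P^2 \defeq \matv^\top P\matv$. Along the damped nonlinear trajectory $\matx_{t+1} = \sqrt{\gamma}\Gnl(\matx_t,K\matx_t)$, \Cref{lemma:jacobian} lets us write $\matx_{t+1} = \bar{A}\matx_t + \matdel_t$ with $\matdel_t \defeq \sqrt{\gamma}\fnl(\matx_t, K\matx_t)$, valid whenever $\|\matx_t\| + \|K\matx_t\| \le \rnl$, and there $\|\matdel_t\| \le \betanl(\|\matx_t\|^2 + \|K\matx_t\|^2) \le 2\betanl V(\matx_t)$. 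On the linear part the Lyapunov equation yields the exact identity $V(\bar{A}\matx_t) = V(\matx_t) - \matx_t^\top(Q+K^\top R K)\matx_t \le (1-\calC_J^{-1})V(\matx_t)$, using $Q + K^\top R K \succeq I$ and $\opnorm{P}\le\calC_J$; and $\|\matdel_t\|_P \le \calC_J^{1/2}\|\matdel_t\| \le 2\betanl\calC_J^{1/2}V(\matx_t)$. The triangle inequality in $\|\cdot\|_P$ then gives
\begin{align*}
\sqrt{V(\matx_{t+1})} \;\le\; \sqrt{V(\bar{A}\matx_t)} + \|\matdel_t\|_P \;\le\; \left( \sqrt{1-\calC_J^{-1}} \;+\; 2\betanl\calC_J^{1/2}\sqrt{V(\matx_t)} \right)\sqrt{V(\matx_t)}.
\end{align*}

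Next I would run an induction on $t$ to keep the trajectory small. Choosing $\|\matx_0\| \le r$ for the stated $r$ makes $V(\matx_0) \le \calC_J r^2$ small enough that at each step (i) $\|\matx_t\| + \|K\matx_t\| \le 2\sqrt{V(\matx_t)} \le \rnl$, so \Cref{lemma:jacobian} applies, and (ii) the parenthesized factor above is at most $\exp(-(4\calC_J)^{-1})$ --- here one compares $\sqrt{1-\calC_J^{-1}} \le 1-(2\calC_J)^{-1}$ with $\exp(-(4\calC_J)^{-1}) \ge 1-(4\calC_J)^{-1}$, which leaves a margin of order $\calC_J^{-1}$ for the perturbation term once $\sqrt{V(\matx_t)}$ is a small enough multiple of $(\betanl\calC_J^{3/2})^{-1}$. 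This gives $V(\matx_{t+1}) \le \exp(-(2\calC_J)^{-1})V(\matx_t)$, which keeps $V(\matx_{t+1}) \le V(\matx_0)$ and hence closes the induction. Unrolling and using $\|\matx_t\|^2 \le V(\matx_t)$ together with $V(\matx_0)\le\calC_J\|\matx_0\|^2$ yields $\|\matx_t\| \le \calC_J^{1/2}\exp(-t/(4\calC_J))\|\matx_0\|$, i.e.\ $(\calC_J^{1/2},(4\calC_J)^{-1})$-exponential stability on radius $r$.

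The step I expect to be the main obstacle is this last induction: one must simultaneously ensure the state never leaves the region where \Cref{lemma:jacobian}'s Taylor bound is valid and ensure the quadratic remainder $\|\matdel_t\| = O(\betanl V(\matx_t))$ stays dominated by the order-$\calC_J^{-1}$ per-step contraction. Because the perturbation enters the displayed recursion proportionally to $\sqrt{V(\matx_t)}$ times $\sqrt{V(\matx_t)}$ rather than linearly, the tolerable size of $\sqrt{V(\matx_0)}$ --- and hence of the radius, after the lossy conversion $V(\matx_0)\le\calC_J\|\matx_0\|^2$ --- is driven down to scale like $\rnl/(\betanl\calC_J^{3/2})$. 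Pinning down the exact constants and exponents of $\calC_J$ in that chain of inequalities is the delicate bookkeeping carried out in the full proof of \Cref{lemma:exp_convergence}; everything else is a routine Lyapunov argument.
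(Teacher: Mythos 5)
Your proposal is correct and follows essentially the same route as the paper: the paper's \Cref{lemma:nonlinear_lyapunov} establishes exactly your one-step Lyapunov decrease (its optimized AM--GM bound on the cross term is the squared form of your triangle inequality in the $P$-weighted norm), and \Cref{lemma:exp_convergence} then iterates it and converts via $\opnorm{\Pky}\le\traceb{\Pky}=\calC_J$ at the end, just as you do. The constant-level slack you flag in converting the condition on $V(\matx_0)$ into a Euclidean radius of order $\rnl/(\betanl\calC_J^{3/2})$ is present to the same degree in the paper's own restatement.
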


The second main building block of our nonlinear analysis is the observation that if the dynamics are locally smooth around the equilibrium point, then by \Cref{lemma:jacobian}, decreasing the radius $r$ of the initial state distribution $\matx_0 \sim r \cdot \dxsphere$ reduces the magnitude of the nonlinear remainder term $\fnl$. Hence, the nonlinear system smoothly approximates its Jacobian linearization. More precisely, we establish that the difference in gradients and costs between $\Jnl(K \mid \gamma, r)$ and $\Jlin(K \mid \gamma)$ decrease linearly with the radius $r$.
\begin{proposition}
\label{prop:diff_lin_nl}
Assume $\Jlin(K \mid \gamma) < \infty$. Then, for $\Pky$ defined as in \Cref{eq:def_pky}:
\begin{enumerate}[a)]
	\item If $r \leq \frac{\rnl}{2 \czero \opnorm{\Pky}^2}$, then 	$\big| \Jnl(K \mid \gamma, r) - \Jlin(K \mid \gamma)\big| \leq 8  \dimx \czero \opnorm{\Pky}^4 \cdot r.$

	\item If $r \leq \frac{1}{12 \betanl \opnorm{\Pky}^{5/2}}$, then, $\fnorm{\grad_K \Jnl(K \mid \gamma, r) - \grad_K \Jlin(K \mid \gamma)} \leq 48 \dimx  \czero (1 + \opnorm{B})  \opnorm{\Pky}^{7} \cdot r$
\end{enumerate}
\end{proposition}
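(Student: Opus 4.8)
The plan is to turn this into a perturbation estimate comparing the damped \emph{nonlinear} rollout with the damped \emph{linear} rollout started from the same state, using the Lyapunov matrix $\Pky$ of \Cref{eq:def_pky} as a common energy function. By \Cref{lemma:equivalence_lemma} the discounted linear objective equals the undiscounted one with damped matrices, so writing $\Acl\defeq\sqrt{\gamma}(\Ajac+\Bjac K)$ and $\ell(\matx)\defeq\matx^\top(Q+K^\top R K)\matx$, the linear trajectory is $\matx^{\mathrm{lin}}_t=\Acl^t\matx_0$, $\Pky$ solves $\Pky=Q+K^\top R K+\Acl^\top\Pky\Acl$, and by homogeneity $\Jlin(K\mid\gamma)=\tfrac{\dimx}{r^2}\,\E_{\matx_0\sim r\cdot\dxsphere}\big[\sum_{t\ge0}\ell(\matx^{\mathrm{lin}}_t)\big]$. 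Hence both objectives are $\tfrac{\dimx}{r^2}\E_{\matx_0\sim r\cdot\dxsphere}$ of an infinite-horizon cost along a trajectory from the \emph{same} $\matx_0$ (with $\|\matx_0\|=r$), and differentiating under the expectation, it suffices to bound, pointwise in $\matx_0$, the cost difference (part a) and its $K$-gradient (part b), then multiply by $\dimx/r^2$. First I would record the elementary consequences of $\lambda_{\min}(Q),\lambda_{\min}(R)\ge1$: $\Pky\succeq I$, $Q+K^\top R K\preceq\Pky$, $\opnorm{K}^2\le\opnorm{\Pky}$, $\Acl^\top\Pky\Acl\preceq(1-\opnorm{\Pky}^{-1})\Pky$, hence $\opnorm{\Acl^j}\le\opnorm{\Pky}^{1/2}(1-\opnorm{\Pky}^{-1})^{j/2}$ and $\sum_t\ell(\matx^{\mathrm{lin}}_t)=\matx_0^\top\Pky\matx_0\le\opnorm{\Pky}r^2$.

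The technical core is a bootstrapping argument controlling the nonlinear rollout. Let $W_t\defeq\matx_t^\top\Pky\matx_t$ along the damped nonlinear trajectory. While $\|\matx_t\|+\|K\matx_t\|\le\rnl$, \Cref{lemma:jacobian} gives $\matx_{t+1}=\Acl\matx_t+\sqrt{\gamma}\,\fnl(\matx_t,K\matx_t)$ with $\|\fnl(\matx_t,K\matx_t)\|\le\betanl(\|\matx_t\|^2+\|K\matx_t\|^2)\le\betanl\,\matx_t^\top(Q+K^\top R K)\matx_t\le\betanl W_t$; expanding $W_{t+1}$ and using the Lyapunov facts yields $W_{t+1}\le(1-\opnorm{\Pky}^{-1})W_t+2\betanl\opnorm{\Pky}^{1/2}W_t^{3/2}+\betanl^2\opnorm{\Pky}\,W_t^2$. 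I would then show by induction that, provided $W_0\le\opnorm{\Pky}r^2$ lies below a threshold of order $(\betanl^2\opnorm{\Pky}^3)^{-1}$ — which is exactly what $r\le\rnl/(2\czero\opnorm{\Pky}^2)$ buys, using $\czero=\betanl$ and $\rnl\le1$ — the last two terms are absorbed into half the Lyapunov decrement, so $W_{t+1}\le(1-\tfrac12\opnorm{\Pky}^{-1})W_t$ for all $t$; consequently $W_t\le\opnorm{\Pky}r^2\,e^{-t/(2\opnorm{\Pky})}\to0$ and $\|\matx_t\|^2+\|K\matx_t\|^2\le W_t\le\opnorm{\Pky}r^2\le\rnl^2$, so the rollout never leaves the smooth region and the induction closes. (This is a sharpening of \Cref{lem:stable_for_linear_to_nl}, with $\opnorm{\Pky}$ in place of $\traceb{\Pky}$.) I expect this step — isolating the right invariant (the energy $W_t$, not the norm $\|\matx_t\|$) and checking it closes under precisely the stated radius — to be the main obstacle, since a naive norm-based bootstrap loses a power of $\opnorm{\Pky}$.

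Part a) then follows from a telescoping identity. Using $Q+K^\top R K=\Pky-\Acl^\top\Pky\Acl$ and $\Acl\matx_t=\matx_{t+1}-\sqrt{\gamma}\,\fnl(\matx_t,K\matx_t)$ one gets $\ell(\matx_t)=W_t-W_{t+1}+2\sqrt{\gamma}\,\fnl(\matx_t,K\matx_t)^\top\Pky\matx_{t+1}-\gamma\,\fnl(\matx_t,K\matx_t)^\top\Pky\,\fnl(\matx_t,K\matx_t)$; summing and using $W_t\to0$ together with $W_0=\matx_0^\top\Pky\matx_0=\sum_t\ell(\matx^{\mathrm{lin}}_t)$ gives
\[
\sum_{t}\ell(\matx_t)-\sum_t\ell(\matx^{\mathrm{lin}}_t)=\sum_{t\ge0}\Big(2\sqrt{\gamma}\,\fnl(\matx_t,K\matx_t)^\top\Pky\,\matx_{t+1}-\gamma\,\fnl(\matx_t,K\matx_t)^\top\Pky\,\fnl(\matx_t,K\matx_t)\Big),
\]
a pure ``nonlinearity correction.'' Bounding it with $\|\Pky^{1/2}\fnl(\matx_t,K\matx_t)\|\le\betanl\opnorm{\Pky}^{1/2}W_t$, $\|\Pky^{1/2}\matx_{t+1}\|\le W_{t+1}^{1/2}\le W_t^{1/2}$ and the geometric decay $W_t\le\opnorm{\Pky}r^2e^{-t/(2\opnorm{\Pky})}$, the sums $\sum_tW_t^{3/2}$ and $\sum_tW_t^2$ each contribute one more factor $\O(\opnorm{\Pky})$; multiplying by $\dimx/r^2$ and using the radius bound once more to absorb the $O(r^2)$ term yields $\big|\Jnl(K\mid\gamma,r)-\Jlin(K\mid\gamma)\big|=\O(\dimx\,\czero\,\opnorm{\Pky}^4\,r)$, as claimed.

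For part b) I would differentiate through the rollout. A perturbation $E\in\R^{\dimu\times\dimx}$ of $K$ induces directional derivatives $D_E\matx_t$ satisfying $D_E\matx_{t+1}=\sqrt{\gamma}(A_t+B_tK)D_E\matx_t+\sqrt{\gamma}B_tE\matx_t$, $D_E\matx_0=0$, where $A_t=\Ajac+\grad_\matx\fnl(\matx_t,K\matx_t)$, $B_t=\Bjac+\grad_\matu\fnl(\matx_t,K\matx_t)$, whereas the linear rollout has $A_t\equiv\Ajac$, $B_t\equiv\Bjac$ (so $\sqrt{\gamma}(A_t+B_tK)\equiv\Acl$). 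By \Cref{lemma:jacobian}, $\|\grad\fnl(\matx_t,K\matx_t)\|\le\betanl(1+\opnorm{K})\|\matx_t\|\lesssim\betanl\opnorm{\Pky}^{1/2}\|\matx_t\|$, which under $r\le(12\betanl\opnorm{\Pky}^{5/2})^{-1}$ is small relative to the contraction gap $\opnorm{\Pky}^{-1}$ of $\Acl$; hence $\sqrt{\gamma}(A_t+B_tK)$ remains a $\Pky$-contraction and the sensitivity recursion is stable. I would then (i) bound $D_E\matx^{\mathrm{lin}}_t$ and $D_E\matx_t-D_E\matx^{\mathrm{lin}}_t$ by repeating the part-a) convolution/geometric-sum bookkeeping one level up — the extra $\Bjac$ in the forcing is where the $(1+\opnorm{B})$ factor and one further power of $\opnorm{\Pky}$ enter — and (ii) write $\grad_K\Jnl(K\mid\gamma,r)-\grad_K\Jlin(K\mid\gamma)$ as $\tfrac{\dimx}{r^2}\E_{\matx_0}$ of a sum over $t$ of an explicit part $2RK(\matx_t\matx_t^\top-\matx^{\mathrm{lin}}_t(\matx^{\mathrm{lin}}_t)^\top)$ and an implicit part built from $D_E\matx_t$ and $D_E\matx^{\mathrm{lin}}_t$ — equivalently, via the adjoint/costate form, in which the linear costate is exactly $2\Pky\matx^{\mathrm{lin}}_t$ — plug in the bounds from (i) and the part-a) estimates on $\matx_t-\matx^{\mathrm{lin}}_t$, and sum the resulting geometric series to get $\fnorm{\grad_K\Jnl(K\mid\gamma,r)-\grad_K\Jlin(K\mid\gamma)}=\O(\dimx\,\czero\,(1+\opnorm{B})\,\opnorm{\Pky}^7\,r)$. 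Besides the bootstrap, this last step — tracking which terms are $O(r)$ versus $O(r^2)$ and tallying the powers of $\opnorm{\Pky}$ through the nested Lyapunov summations — is where the bookkeeping is most delicate.
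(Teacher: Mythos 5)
Your plan is correct and follows essentially the same route as the paper: your telescoping identity in part (a) is exactly the paper's performance-difference lemma (\Cref{lemma:performance_difference}), your energy bootstrap on $W_t=\matx_t^\top\Pky\matx_t$ is the content of \Cref{lemma:nonlinear_lyapunov} and \Cref{lemma:exp_convergence}, and your sensitivity recursion for $D_E\matx_t$ is the paper's recursion in \Cref{lem:final_bound_psit} combined with the stability-margin lemma. The only organizational difference is in part (b), where the paper differentiates the performance-difference identity directly (so every summand already carries a factor of $\fnl$ or $\Dk \fnl$ and is manifestly of the right order in $r$), whereas you subtract two separately differentiated costs and rely on cancellation between paired terms --- workable, but with somewhat heavier bookkeeping.
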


Lastly, because policy gradients on linear dynamical systems is robust to inexact gradient queries, we show that for $r$ sufficiently small, running policy gradients on $\Jnl$ converges to a controller which has performance close to the optimal controller for the LQR problem with dynamic matrices $(\Ajac, \Bjac)$. As noted previously, we can then use \Cref{lem:stable_for_linear_to_nl} to translate the performance of the optimal LQR controller for the Jacobian linearization to an exponential stability guarantee for the nonlinear dynamics. Using these insights, we establish the following  theorem regarding discount annealing for nonlinear dynamics.
\begin{theorem}[Nonlinear Systems] 
\label{theorem:nonlinear_algorithm}
Let $\mnl \defeq \max \{ 21\traceb{\Pst}, \Jlin(K_0 \mid \gamma_0) \}$. The following statements are true regarding the discount annealing algorithm for nonlinear dynamical systems when $r_\star$ is less than a fixed quantity that is $\poly(1/ \mnl, 1 / \opnorm{A} , 1 / \opnorm{B}, \rnl / \betanl)$
	\begin{enumerate}[a)]
\item Discount annealing returns a controller $\Khat$ which is $(\sqrt{2\trace[\Pst]}, (8 \traceb{\Pst})^{-1})$-exponentially stable over a radius $r = \rnl \;/ \;(8 \betanl \traceb{\Pst}^2)$
	\item If $\gamma_0 < 1$, the algorithm is guaranteed to halt whenever $t$ is greater than $ 64 \traceb{\Pst}^4 \log(1 / \gamma_0)$.
\end{enumerate}
Furthermore, at each iteration $t$:
\begin{enumerate}[c)]
	\item Policy gradients achieves the guarantee in \Cref{eq:approx_discount_annealing} using only $\poly(M_{\mathrm{nl}}, \opnorm{A} , \opnorm{B})$ many queries to $\epsgrad(\cdot , \Jnl(\cdot \mid \gamma))$ as long as $\epsilon$ is less than some fixed polynomial $ \poly(M_{\mathrm{nl}}^{-1}, \opnorm{A}^{-1}, \opnorm{B}^{-1})$.  
	\item Let $c_0$ denote a universal constant. With probability $1-\delta$, the noisy random search algorithm (see \Cref{fig:noisy_binary_search}) returns a discount factor $\gamma'$ satisfying \Cref{eq:discount_factor_search} using at most $c_0 \cdot \traceb{\Pst}^4 \log(1 / \delta)$ queries to $\epseval(\cdot, \Jnl(\cdot \mid \gamma, \rst))$ for $\epsilon = .01 \dimx$.
	\end{enumerate}
\end{theorem}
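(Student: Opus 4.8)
The plan is to run the linear algorithm of \Cref{theorem:linear_algorithm} ``in the shadows'' on the Jacobian linearization $(\Ajac,\Bjac)$, passing between $\Jnl(\cdot\mid\gamma,\rst)$ and $\Jlin(\cdot\mid\gamma):=\Jlin(\cdot\mid\gamma,\Ajac,\Bjac)$ through \Cref{prop:diff_lin_nl}. Recall $\min_K\Jlin(K\mid 1)=\traceb{\Pst}$, $\dimx\le\traceb{\Pst}$, and $\Jlin(K\mid\gamma)=\traceb{P_{K,\gamma}}$ is continuous and non-decreasing in $\gamma$ wherever finite. The first step is to fix $\rst$ once and for all --- polynomially small in the advertised parameters --- so that at \emph{every} $(K,\gamma)$ the algorithm can encounter (all of which will satisfy $\Jlin(K\mid\gamma)\lesssim\traceb{\Pst}\le\mnl$, hence $\opnorm{P_{K,\gamma}}\le\traceb{P_{K,\gamma}}\lesssim\traceb{\Pst}$, by the nested induction described below) the hypotheses of \Cref{prop:diff_lin_nl} hold and the gaps obey $|\Jnl(K\mid\gamma,\rst)-\Jlin(K\mid\gamma)|\le\tfrac{1}{10}\dimx$ and $\fnorm{\grad_K\Jnl(K\mid\gamma,\rst)-\grad_K\Jlin(K\mid\gamma)}\le\tfrac{1}{2}\epsilon'$, where $\epsilon'$ is the gradient tolerance demanded by \Cref{theorem:linear_algorithm}(c) for the LQR instance $(\Ajac,\Bjac)$ (a polynomial in $\mnl,\opnorm{A},\opnorm{B}$ only, not in $\rst$). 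Because $\czero=\betanl$ and $\opnorm{P_{K,\gamma}}$ enters \Cref{prop:diff_lin_nl} only to a fixed power while $\traceb{\Pst}\le\mnl$, such an $\rst=\poly(1/\mnl,1/\opnorm{A},1/\opnorm{B},\rnl/\betanl)$ exists, matching the hypothesis. Granting this, every $\epsgrad$ query on $\Jnl(\cdot\mid\gamma,\rst)$ with $\epsilon\le\tfrac{1}{2}\epsilon'$ doubles as an $\epsilon'$-accurate gradient query for $\Jlin(\cdot\mid\gamma)$, and every $\epseval$ query on $\Jnl(\cdot\mid\gamma,\rst)$ with $\epsilon=.01\dimx$ doubles as an $O(\dimx)$-accurate cost query for $\Jlin(\cdot\mid\gamma)$; the annealing algorithm run on $\Jnl$ therefore tracks the linear algorithm run on $\Jlin$.

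\textbf{Approximate optimality (part (c)) and the bootstrap.} At round $t$, policy gradients runs on $\Jnl(\cdot\mid\gamma_t,\rst)$ from $K_t$, which --- by the previous round and the geometric growth below --- stabilizes $\sqrt{\gamma_t}(\Ajac,\Bjac)$ with $\Jlin(K_t\mid\gamma_t)\lesssim\traceb{\Pst}$. Each update uses a gradient agreeing with $\grad_K\Jlin(\cdot\mid\gamma_t)$ up to total error $\le\epsilon'$ (query noise plus linearization gap), so the robustness analysis behind \Cref{theorem:linear_algorithm}(c) keeps the iterates in a sublevel set $\{\Jlin(\cdot\mid\gamma_t)=O(\mnl)\}$ and, after $\poly(\mnl,\opnorm{A},\opnorm{B})$ steps, returns $K'$ with $\Jlin(K'\mid\gamma_t)-\min_K\Jlin(K\mid\gamma_t)\le\tfrac{1}{2}\dimx$; translating both terms through \Cref{prop:diff_lin_nl}(a) (legitimate because $K'$ and the linearized optimum both lie in that sublevel set) gives $\Jnl(K'\mid\gamma_t,\rst)-\min_K\Jnl(K\mid\gamma_t,\rst)\le\dimx$, which is \Cref{eq:approx_discount_annealing}. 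I expect the main obstacle to lie exactly here: the gaps in \Cref{prop:diff_lin_nl} scale with a fixed power of $\opnorm{P_{K,\gamma}}$, so one must certify $\opnorm{P_{K,\gamma}}\lesssim\traceb{\Pst}$ \emph{uniformly along every policy-gradient trajectory} and across all $\le 64\traceb{\Pst}^4\log(1/\gamma_0)$ annealing rounds, using a single $t$- and $\gamma$-independent $\rst$. This is circular --- iterates stay in the sublevel set only if the gradient error is small, but the gradient error is controlled only if the iterates stay in the sublevel set --- and the way I would break it is a nested induction (over annealing rounds, and within each over policy-gradient steps), leaning on the (approximately) monotone cost decrease of inexact policy gradients on well-conditioned LQR that underlies \Cref{theorem:linear_algorithm}(c).

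\textbf{Geometric growth and the discount-factor search (parts (b) and (d)).} Next, I would apply \Cref{lemma:lower_bound_gamma} to the linearized $P_{K,\gamma}$ exactly as in the linear proof: the idealized factor $\gamma_{t+1}'=(1/(8\opnorm{P_{K_{t+1},\gamma_t}}^4)+1)^2\gamma_t\ge(1/(128\traceb{\Pst}^4)+1)^2\gamma_t$ obeys $\Jlin(K_{t+1}\mid\gamma_{t+1}')\le 2\Jlin(K_{t+1}\mid\gamma_t)$, and sandwiching with \Cref{prop:diff_lin_nl}(a) gives $\Jnl(K_{t+1}\mid\gamma_{t+1}',\rst)<2.5\,\Jnl(K_{t+1}\mid\gamma_t,\rst)$. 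Since $\Jnl(K_{t+1}\mid\cdot,\rst)$ stays within $\tfrac{1}{10}\dimx$ of the continuous non-decreasing function $\Jlin(K_{t+1}\mid\cdot)$, the set of $\gamma'\in[\gamma_t,1]$ meeting \Cref{eq:discount_factor_search} for $J=\Jnl$ is non-empty, contained in $[\gamma_{t+1}',1]$, and --- by one more use of \Cref{lemma:lower_bound_gamma} at the $\gamma$ where $\Jlin(K_{t+1}\mid\cdot)$ first reaches $\approx 3\,\Jnl(K_{t+1}\mid\gamma_t,\rst)$ --- has relative size $\gtrsim\traceb{\Pst}^{-4}$. Because $\Jnl$ need not be monotone in $\gamma$ (unlike $\Jlin$), binary search is replaced by the randomized search of \Cref{fig:noisy_binary_search}: sample candidate $\gamma'$, test \Cref{eq:discount_factor_search} via $\epseval$ at accuracy $\epsilon=.01\dimx$ (harmless since $\Jnl(K_{t+1}\mid\gamma_t,\rst)\gtrsim\dimx$ and the window is multiplicative), and conclude that $c_0\traceb{\Pst}^4\log(1/\delta)$ queries find a valid $\gamma'$ with probability $1-\delta$ --- part (d). Every accepted $\gamma_{t+1}$ then obeys $\gamma_{t+1}\ge\gamma_{t+1}'\ge(1/(128\traceb{\Pst}^4)+1)^2\gamma_t$, hence $\gamma_t\ge(1/(128\traceb{\Pst}^4)+1)^{2t}\gamma_0$; taking logarithms as in the linear proof yields the halting bound of part (b).

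\textbf{The final stability guarantee (part (a)).} Once the algorithm halts, the returned $\Khat$ comes from a policy-gradient run at $\gamma=1$, so $\Jnl(\Khat\mid 1,\rst)\le\Jnl(\Kst\mid 1,\rst)+\dimx$; passing $\Khat$ and $\Kst$ through \Cref{prop:diff_lin_nl}(a) and using $\min_K\Jlin(K\mid 1)=\traceb{\Pst}$ together with $\dimx\le\traceb{\Pst}$ gives $\calC_J:=\Jlin(\Khat\mid 1)\le 2\traceb{\Pst}$ (up to the negligible \Cref{prop:diff_lin_nl} corrections). \Cref{lem:stable_for_linear_to_nl} with $\gamma=1$ then makes $\Khat$ exponentially stable for $\Gnl$ with parameters $(\calC_J^{1/2},(4\calC_J)^{-1})$ on radius $\rnl/(\betanl\calC_J^{3/2})$, and since $\calC_J\le 2\traceb{\Pst}$ and $\traceb{\Pst}\ge 1$, these imply the advertised $(\sqrt{2\traceb{\Pst}},(8\traceb{\Pst})^{-1})$-exponential stability over radius $\rnl/(8\betanl\traceb{\Pst}^2)$, completing part (a).
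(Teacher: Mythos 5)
Your overall strategy is the paper's own: anneal on $\Jnl$ while shadowing the linear algorithm on the Jacobian linearization, using \Cref{prop:diff_lin_nl} to transfer costs and gradients, \Cref{lemma:lower_bound_gamma} for the geometric growth of $\gamma_t$, random search in place of binary search because $\Jnl$ is not monotone in $\gamma$, and \Cref{lemma:exp_convergence} for the final stability claim. The nested induction you describe to break the circularity is also how the paper proceeds.

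There is one concrete gap. Every quantitative comparison you invoke goes in a single direction: \Cref{prop:diff_lin_nl} (and its gradient counterpart) \emph{presupposes} $\Jlin(K\mid\gamma)<\infty$ and bounds the gap by powers of $\opnorm{\Pky}$, so it is only usable once the \emph{linear} cost at $(K,\gamma)$ has already been certified bounded. But at the moment the random search accepts $\gamma_{t+1}$, the only thing certified is a bound on the \emph{nonlinear} cost, $\Jnl(K_{t+1}\mid\gamma_{t+1},\rst)\le 8\,\Jnl(K_{t+1}\mid\gamma_t,\rst)\lesssim\traceb{\Pst}$; since $\Jnl$ is not monotone in $\gamma$, the accepted $\gamma_{t+1}$ could a priori sit at a discount where $\Jlin(K_{t+1}\mid\cdot)$ is enormous or infinite while $\Jnl$ happens to dip back into the acceptance band, and \Cref{prop:diff_lin_nl} gives you nothing there. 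Your induction therefore does not close as written: you cannot initialize the next policy-gradient round, nor run the part-(b) argument at the accepted $\gamma_{t+1}$, without first converting the bound on $\Jnl(K_{t+1}\mid\gamma_{t+1},\rst)$ into a bound on $\Jlin(K_{t+1}\mid\gamma_{t+1})$. The paper supplies exactly this missing converse: \Cref{prop:finite_horizon} lower-bounds the truncated nonlinear cost by $\tfrac{1}{80\dimx^2}\min\{\Jlin(K\mid\gamma),c\}$, and \Cref{lemma:jnl_close} combines it with \Cref{prop:diff_lin_nl} to show that a bounded nonlinear cost forces a bounded, and hence nearby, linear cost (yielding the $21\traceb{\Pst}$ that defines $\mnl$). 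With that one ingredient inserted at the start of each inductive step, the rest of your argument goes through as in the paper; your containment of the accepted set in $[\gamma_{t+1}',1]$ is, by contrast, fine with only the forward direction, since for $\gamma\le\gamma_{t+1}'$ monotonicity of $\Jlin$ already bounds the linear cost.
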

We note that while our theorem only guarantees that the controller is stabilizing around a polynomially small neighborhood of the equilibrium, in  experiments, we find that the resulting controller successfully stabilizes the dynamics for a wide range of initial conditions. Relative to the case of linear systems where we leveraged the monotonicity of the LQR cost to search for discount factors using binary search, this monotonicity breaks down in the case of nonlinear systems and we instead analyze a random search algorithm to simplify the analysis.



\section{Experiments}

In this section, we evaluate the ability of the discount annealing algorithm to stabilize a simulated nonlinear system. Specifically, we consider the familiar cart-pole, with $\dimx=4$ (positions and velocities of the cart and pole), and $\dimu=1$ (horizontal force applied to the cart). 
The goal is to stabilize the system with the pole in the unstable `upright' equilibrium position.
For further details, including the precise dynamics, see \Cref{sec:cartpole}.
The system was simulated in discrete-time with a simple forward Euler discretization, i.e., $\matx_{t+1} = \matx_t + \ts \dot{\matx}_t$, where $\dot{\matx}_t$ is given by the continuous time dynamics, and $\ts=0.05$ (20Hz).
Simulations were carried out in PyTorch \citep{paszke2019pytorch} and run on a single GPU. 

\textbf{Setup}. The discounted annealing algorithm of \Cref{fig:discount_annealing} was implemented as follows.
In place of the true infinite horizon discounted cost $\Jnl(K\mid\gamma, r)$ in \Cref{eq:nonlinear_cost} we use a finite horizon, finite sample Monte Carlo approximation as described in \Cref{section:finite_samples}, 
\begin{align*}
\Jnl(K \mid \gamma) \approx \frac{1}{\nsmp}{\sum}_{i=1}^\nsmp \Jnl^{(H)}(K \mid \gamma, \matx^{(i)}), \quad \matx^{(i)} \sim r \cdot \dxsphere.
\end{align*}
Here, $\Jnl^{(H)}(K \mid \gamma, \matx) = \sum_{j=0}^{H-1}  \matx_t^\top Q \matx_t + \matu_t^\top R \matu_t$, is the length $H$, finite horizon cost of a controller $K$ in which the states evolve according to the $\sqrt{\gamma}$ damped dynamics from \Cref{eq:nonlinear_dynamics} and $\matu_t = K\matx_t$. 
We used $\nsmp=5000$ and $H=1000$ in our experiments.
For the cost function, we used $Q=\ts \cdot  I$ and $R=\ts$. We compute unbiased approximations of the gradients using automatic differentiation on the finite horizon objective $\Jnl^{(H)}$.
\begin{table}
	\caption{Final region of attraction radius $\rroa$ as a function of the initial state radius $r$ used during training (discount annealing). 
	We report the 
 [\texttt{min}, \texttt{max}]
	values of $\rroa$ over 5 independent trials. 	
	The optimal LQR policy for the linearized system achieved $\rroa=0.703$ when applied to the nonlinear system.
	We also synthesized an $\mathcal{H}_\infty$ optimal controller for the linearized dynamics, which achieved $\rroa=0.506$. }
	\label{tab:roas}
\vspace{10pt}
\centering
	\begin{tabular}{lccccc}
	\toprule
	$r$     & 0.1 & 0.3 & 0.5 & 0.6 & 0.7   \\
	\midrule
	$\rroa$  &  [0.702, 0.704]  & [0.711, 0.713] & [0.727, 0.734] & [0.731, 0.744] & [0.769, 0.777]      \\
	\bottomrule
\end{tabular}
\end{table}

Instead of using SGD updates for policy gradients, we use Adam \citep{kingma2014adam} with a learning rate of $\lrate=0.01/r$. Furthermore,
we replace the policy gradient termination criteria in Step 2 (\Cref{eq:approx_discount_annealing}) by instead halting after a fixed number $(\nstp=200)$ of gradient descent steps. 
We wish to emphasize that the hyperparameters $(\nsmp,H,\lrate,\nstp)$ were not optimized for performance.
In particular, for $r=0.1$, we found that as few as $\nstp=40$ iterations of policy gradient and horizons as short as $\rlen=400$ were sufficient.
Finally, we used an initial discount factor $\gamma_0=0.9 \cdot \| \Ajac \|_2^{-2}$, where $\Ajac$ denotes the linearization of the (discrete-time) cart-pole about the vertical equilibrium.

\textbf{Results}. We now proceed to discuss the performance of the algorithm, focusing on three main properties of interest:
i) the number of iterations of discount annealing required to find a stabilizing controller (that is, increase $\gamma_t$ to 1),
ii) the maximum radius $r$ of the ball of initial conditions $\matx_0 \sim r \cdot \dxsphere$ for which discount annealing succeeds at stabilizing the system, and
iii) the radius $\rroa$ of the largest ball contained within the region of attraction (ROA) for the policy returned by discount annealing. 
Although the true ROA (the set of all initial conditions such that the closed-loop system converges asymptotically to the equilibrium point) is not necessarily shaped like a ball 
(as the system is more sensitive to perturbations in the position and velocity of the pole than the cart),
we use the term region of attraction radius to refer to the radius of the largest ball contained in the ROA. 

Concerning (i), discount annealing reliably returned a stabilizing policy in less than 9 iterations. 
Specifically, over 5 independent trials for each initial radius $r \in \lbrace 0.1,0.3,0.5,0.7\rbrace$ (giving 20 independent trials, in total) the algorithm never required more than 9 iterations to return a stabilizing policy. 

Concerning (ii), discount annealing reliably stabilized the system for $r\leq 0.7$. 
For $r\approx 0.75$, we observed trials in which the state of the damped system ($\gamma<1$) diverged to infinity. 
For such a rollout, the gradient of the cost is not well-defined, and policy gradient is unable to improve the policy, which prevents discount annealing from finding a stabilizing policy. 

Concerning (iii), in \Cref{tab:roas} we report the final radius $\rroa$ for the region of attraction of the final controller returned by discount annealing as a function of the training radius $r$.
We make the following observations. 
Foremost, the policy returned by discount annealing extends the radius of the ROA beyond the radius used during training, i.e. $\rroa>r$.
Moreover, for each $r > .1$, the $\rroa$ achieved by discount annealing is greater than the $\rroa=0.703$ achieved by the exact optimal LQR controller \emph{and} the $\rroa=0.506$ achieved by the exact optimal $\hinf$ controller for the system's Jacobian linearization (see \Cref{tab:roas}). 
(The $\hinf$ optimal controller mitigates the effect of worst-case additive state disturbances on the cost; cf. \Cref{sec:hinf} for details).

One may hypothesize that this is due to the fact that discount annealing directly operates on the true nonlinear dynamics whereas the other baselines (LQR and $\calH_{\infty}$ control), find the optimal controller for an idealized linearization of the dynamics. Indeed, there is evidence to support this hypothesis.
In \Cref{fig:err_vs_discount} presented in \Cref{sec:experiments_app},
we plot the error $\| K_\text{pg}^\star(\gamma_t) - K_\text{lin}^\star(\gamma_t)\|_F$ between the policy $K_\text{pg}^\star(\gamma_t)$ returned by policy gradients, and the optimal LQR policy $K_\text{lin}^\star(\gamma_t)$ for the (damped) linearized system, as a function of the discount factor $\gamma_t$ used in each iteration of the discount annealing algorithm. 
For small training radius, such as $r=0.05$, $K_\text{pg}^\star(\gamma_t) \approx K_\text{lin}^\star(\gamma_t)$ for all $\gamma_t$.
However, for larger radii (i.e $r=0.7$), we see that $\| K_\text{pg}^\star(\gamma_t) - K_\text{lin}^\star(\gamma_t)\|_F$ steadily increases as $\gamma_t$ increases.

That is, as discount annealing increases the discount factor $\gamma$ and the closed-loop trajectories explore regions of the state space where the dynamics are increasingly nonlinear, $K_\text{pg}^\star$ begins to diverge from $K_\text{lin}^\star$. 
Moreover, at the conclusion of discount annealing $K_\text{pg}^\star(1)$ achieves 
a lower cost, namely 
[15.2, 15.4] vs [16.5, 16.8]  (here $[a,b]$ denotes [$\min$, $\max$] over 5 trials) 
and larger $\rroa$, namely [0.769, 0.777] vs [0.702, 0.703], 
than $K_\text{lin}^\star(1)$, 
suggesting that the method has indeed adapted to the nonlinearity of the system. Similar observations as to the behavior of controllers fine tuned via policy gradient methods are predicted by the theoretical results from \cite{qu2020combining}.


\section{Discussion}

This works illustrates how one can provably stabilize a broad class of dynamical systems via a simple model-free procedure based off policy gradients. In line with the simplicity and flexibility that have made model-free methods so popular in practice, our algorithm works under relatively weak assumptions and with little knowledge of the underlying dynamics. Furthermore, we solve an open problem from previous work \citep{fazel2018global} and take a step towards placing model-free methods on more solid theoretical footing. We believe that our results raise a number of interesting questions and directions for future work. 

In particular, our theoretical analysis states that discount annealing returns a controller whose stability properties are similar to those of the optimal LQR controller for the system's Jacobian linearization. We were therefore quite surprised when in experiments, the resulting controller had a significantly better radius of attraction than the exact optimal LQR and $\mathcal{H}_{\infty}$ controllers for the linearization of the dynamics. It is an interesting and important direction for future work to gain a better understanding of exactly when and how model-free procedures are adaptive to the nonlinearities of the system and improve upon these model-based baselines. Furthermore, for our analysis of nonlinear systems, we require access to damped system trajectories. It would be valuable to understand whether this is indeed necessary or whether our analysis could be extended to work without access to damped trajectories. 

As a final note, in this work we reduce the problem of stabilizing dynamical systems to running policy gradients on  a discounted LQR objective. This choice of reducing to LQR was in part made for simplicity to leverage previous analyses. However, it is possible that overall performance could be improved if rather than reducing to LQR, we instead attempted to run a model-free method that directly tries to optimize a robust control objective (which explicitly deals with uncertainty in the system dynamics).
 We believe that understanding these tradeoffs in objectives and their relevant sample complexities is an interesting avenue for future inquiry.

\section*{Acknowledgments}

We would like to thank Peter Bartlett for many helpful conversations and comments throughout the course of this project,
and Russ Tedrake for support with the numerical experiments.
JCP is supported by an NSF Graduate Research Fellowship. MS
is supported by an Open Philanthropy Fellowship grant.
JU is supported by the National Science Foundation, Award No. EFMA-1830901, 
and the Department of the Navy, Office of Naval Research, Award No. N00014-18-1-2210.

\newpage
\bibliographystyle{plainnat}
\bibliography{refs}
\newpage

\newpage

\appendix
\addtocontents{toc}{\protect\setcounter{tocdepth}{2}}
\renewcommand{\contentsname}{Table of Contents: Appendix}
\tableofcontents

\section{Deferred Proofs and Analysis for the Linear Setting}
\label{sec:linear_appendix}

\paragraph{Preliminaries on Linear Quadratic Control.} The cost a state-feedback controller $\Jlin(K \mid \gamma)$ is intimately related to the solution of the discrete-time Lyapunov equation. Given a stable matrix $\Acl$ and a symmetric positive definite matrix $\Sigma$, we define $\dlyap{\Acl}{\Sigma}$ to be the unique solution (over $X$) to the matrix equation, 
\begin{align*}
	X = \Sigma + \Acl^\top X \Acl.
\end{align*}
A classical result in Lyapunov theory states that $\dlyap{\Acl}{\Sigma}= \sum_{j=0}^\infty (\Acl^j)^\top X \Acl^j$. Recalling our earlier definition, for a controller $K$ such that $\sqrt{\gamma}(A+BK)$ is stable, we let $\Pky \defeq \dlyap{\sqrt{\gamma}(A+BK)}{Q + K^\top R K}$, where $Q,R$ are the cost matrices for the LQR problem defined in \Cref{def:lqr_objective}. As a special case, we let $\Pst \defeq \dlyap{A+B \Kst}{Q + \Kst^\top R \Kst}$ where $\Kst \defeq K_{\star, 1}$ is the optimal controller for the undiscounted LQR problem. Using these definitions, we have the following facts:
\begin{fact}
\label{fact:finite_cost_req}
$\Jlin(K \mid \gamma, A, B) < \infty$ if and only if $\sqrt{\gamma} \cdot (A + BK)$ is a stable matrix.
\end{fact}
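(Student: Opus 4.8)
The plan is to reduce the claim to the elementary dichotomy between geometric decay and non-decay of powers of the closed-loop matrix. First I would introduce the shorthand $\Acl \defeq \sqrt{\gamma}(A+BK)$ and $M \defeq Q + K^\top R K$, noting that $M \succeq Q \succeq I$ under the standing assumption $\lambda_{\min}(Q)\ge 1$, so $\lambda_{\min}(M)\ge 1$. Since the closed-loop trajectory is $\matx_t = (A+BK)^t\matx_0$ and the stage cost at time $t$ equals $\matx_t^\top M\matx_t$, one has $\gamma^t(\matx_t^\top Q\matx_t + \matu_t^\top R\matu_t) = \matx_0^\top (\Acl^t)^\top M\Acl^t\matx_0$. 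Summing over $t$, taking the expectation over $\matx_0\sim\sqrt{\dimx}\cdot\dxsphere$ (whose covariance is $I$), and using monotone convergence (every summand is a nonnegative quadratic form), I would record the identity $\Jlin(K\mid\gamma,A,B) = \sum_{t\ge 0}\traceb{(\Acl^t)^\top M\Acl^t}$, interpreted as an element of $[0,\infty]$.

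For the direction that stability of $\Acl$ implies finiteness, I would fix $\bar\rho\in(\rho(\Acl),1)$ and invoke Gelfand's formula to obtain $C\ge 1$ with $\opnorm{\Acl^t}\le C\bar\rho^t$ for all $t$; then $\traceb{(\Acl^t)^\top M\Acl^t}\le \opnorm{M}\fronorm{\Acl^t}^2\le \dimx\,\opnorm{M}\,C^2\bar\rho^{2t}$, and the geometric series converges, giving $\Jlin(K\mid\gamma,A,B)<\infty$. (The partial sums of $(\Acl^t)^\top M\Acl^t$ in fact converge to $\Pky=\dlyap{\Acl}{M}$, so $\Jlin(K\mid\gamma,A,B)=\traceb{\Pky}$, a fact recorded elsewhere in the appendix.)

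For the converse, that finiteness forces stability, I would argue the contrapositive: if $\Acl$ is not stable then $\rho(\Acl)\ge 1$, and since the operator norm dominates the spectral radius, $\opnorm{\Acl^t}\ge\rho(\Acl^t)=\rho(\Acl)^t\ge 1$. Hence every term obeys $\traceb{(\Acl^t)^\top M\Acl^t}\ge\lambda_{\min}(M)\fronorm{\Acl^t}^2\ge\lambda_{\min}(M)\opnorm{\Acl^t}^2\ge 1$, so the series diverges and $\Jlin(K\mid\gamma,A,B)=\infty$.

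I do not expect a genuine obstacle: this is classical LQR/Lyapunov folklore. The only point needing slight care is the bookkeeping around exchanging the infinite sum with the expectation and treating the cost as valued in $[0,\infty]$, which is immediate because each summand is a nonnegative quadratic form; everything else is the standard fact that $\opnorm{\Acl^t}$ decays geometrically exactly when $\rho(\Acl)<1$.
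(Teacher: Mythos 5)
Your proof is correct. The paper states \Cref{fact:finite_cost_req} without proof, treating it as classical LQR/Lyapunov folklore, so there is no in-paper argument to compare against; your write-up supplies exactly the standard argument one would expect. Both directions check out: the reduction to $\sum_{t\ge 0}\traceb{(\Acl^t)^\top M \Acl^t}$ via the identity covariance of $\matx_0 \sim \sqrt{\dimx}\cdot\dxsphere$ and monotone convergence is sound, the forward direction via Gelfand's formula is standard, and the converse correctly uses $\fronorm{\Acl^t} \ge \opnorm{\Acl^t} \ge \rho(\Acl)^t \ge 1$ together with $\lambda_{\min}(Q+K^\top R K) \ge 1$ (and indeed only positive definiteness of $Q$ is needed there).
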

\begin{fact}
\label{fact:value_functions}
If $\Jlin(K \mid \gamma, A, B) < \infty$ then for all $\matx \in \R^{\dimx}$, $\Jlin(K \mid  \matx, \gamma, A, B) = \matx^\top P_{K, \gamma} \matx$.
Furthermore, $$	\Jlin(K \mid \gamma, A, B)  = \Expop_{\matx_0 \sim \sqrt{\dimx} \dxsphere} \matx_0^\top \Pky \matx_0  =\traceb{P_{K, \gamma}}.
$$
\end{fact}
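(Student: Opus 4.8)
The plan is to directly unroll the closed-loop dynamics and recognize the resulting quadratic form as the discrete Lyapunov solution. Since $\matu_t = K\matx_t$, the state evolves as $\matx_{t+1} = (A+BK)\matx_t$, so $\matx_t = (A+BK)^t \matx$ and the stage cost at time $t$ equals $\matx_t^\top(Q + K^\top R K)\matx_t$. Substituting into \Cref{def:lqr_objective} and factoring the quadratic form out of the sum gives
\begin{align*}
\Jlin(K \mid \matx, \gamma, A, B) = \matx^\top \left( \sum_{t=0}^\infty \gamma^t \big((A+BK)^t\big)^\top (Q + K^\top R K)(A+BK)^t \right) \matx.
\end{align*}
The key algebraic observation is that $\gamma^t \big((A+BK)^t\big)^\top M (A+BK)^t = \big((\sqrt{\gamma}(A+BK))^t\big)^\top M (\sqrt{\gamma}(A+BK))^t$, so the bracketed matrix is exactly $\sum_{t=0}^\infty \big((\sqrt{\gamma}\Acl)^t\big)^\top (Q + K^\top R K)(\sqrt{\gamma}\Acl)^t$, where $\Acl = A+BK$.

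Next I would invoke finiteness to handle convergence. By \Cref{fact:finite_cost_req}, the hypothesis $\Jlin(K \mid \gamma, A, B) < \infty$ is equivalent to $\sqrt{\gamma}(A+BK)$ being a stable matrix; stability gives $\opnorm{(\sqrt{\gamma}\Acl)^t} \le C\lambda^t$ for some $\lambda \in (0,1)$, so the matrix series above converges absolutely (geometrically), and the interchanges of summation with trace and expectation used below are all justified. By the classical Lyapunov identity recalled in the preliminaries (that $\sum_{j=0}^\infty (X^j)^\top \Sigma X^j = \dlyap{X}{\Sigma}$ for any stable $X$), applied with $X = \sqrt{\gamma}(A+BK)$ and $\Sigma = Q + K^\top R K$, the bracketed matrix equals $\dlyap{\sqrt{\gamma}(A+BK)}{Q + K^\top R K}$, which is precisely $\Pky$ by definition. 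This establishes $\Jlin(K \mid \matx, \gamma, A, B) = \matx^\top \Pky \matx$ for every $\matx \in \R^{\dimx}$.

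Finally, for the averaged version I would take expectations over $\matx_0 \sim \sqrt{\dimx}\cdot\dxsphere$. Writing $\matx_0^\top \Pky \matx_0 = \trace(\Pky \matx_0 \matx_0^\top)$ and using linearity of the trace,
\begin{align*}
\Jlin(K \mid \gamma, A, B) = \Expop_{\matx_0 \sim \sqrt{\dimx}\dxsphere}\big[\matx_0^\top \Pky \matx_0\big] = \trace\!\left(\Pky \, \Expop_{\matx_0}[\matx_0 \matx_0^\top]\right) = \traceb{\Pky},
\end{align*}
where the last equality uses that a uniform draw from the sphere of radius $\sqrt{\dimx}$ in $\R^{\dimx}$ has covariance $\tfrac{\dimx}{\dimx} I = I$ — exactly the normalization chosen in \Cref{def:lqr_objective}.

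As for difficulty: there is essentially no obstacle, since this is the standard ``the LQR value function is quadratic'' fact. The only point meriting a word of care is the convergence of the matrix series and the interchange of the infinite sum with the expectation/trace, and this follows immediately from the stability of $\sqrt{\gamma}(A+BK)$ guaranteed by \Cref{fact:finite_cost_req}; everything else is bookkeeping with the definitions of $\dlyap{\cdot}{\cdot}$ and $\Pky$.
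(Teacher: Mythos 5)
Your proof is correct and follows exactly the standard route the paper relies on: the paper states this as a Fact without proof, but the same unrolling of the closed-loop dynamics and absorption of $\gamma^t$ into $(\sqrt{\gamma}\Acl)^t$ appears verbatim in its proof of \Cref{lemma:equivalence_lemma}, and the identification with $\dlyap{\sqrt{\gamma}\Acl}{Q+K^\top RK}$ is precisely the classical Lyapunov identity recalled in the preliminaries. Your handling of convergence via \Cref{fact:finite_cost_req} and the identity-covariance computation for $\matx_0 \sim \sqrt{\dimx}\cdot\dxsphere$ are both correct.
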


Employing these identities, we can now restate and prove \Cref{lemma:equivalence_lemma}:

\newtheorem*{lemma:equivalence_lemma_inf}{\Cref{lemma:equivalence_lemma}}
\begin{lemma:equivalence_lemma_inf} 
For all $K$ such that $J(K\mid \gamma, A,B) < \infty$, $\Jlin(K \mid \gamma, A, B) = \Jlin(K\mid 1, \sqrt{\gamma} \cdot A, \sqrt{\gamma} \cdot B)$.
\end{lemma:equivalence_lemma_inf}

\begin{proof}
From the definition of the LQR cost and linear dynamics in \Cref{def:lqr_objective}, for $\Acl \defeq A + BK$,
\begin{align*}
	\Jlin(K \mid \gamma, A, B) &= \E_{\matx_0}\left[ \sum_{t=0}^\infty \gamma^t \cdot  \matx_0^\top \left(\Acl^t\right)^\top  \left(Q  + K^\top R K\right) \Acl^t \matx_0 \right] \\
	&=  \E_{\matx_0}\left[ \sum_{t=0}^\infty   \matx_0^\top \left((\sqrt{\gamma}\Acl)^t\right)^\top  \left(Q  + K^\top R K\right) \left(\sqrt{\gamma}\Acl\right)^t \matx_0 \right] \\
	& = \Jlin(K \mid 1, \sqrt{\gamma} \cdot A, \sqrt{\gamma} \cdot B).
\end{align*}
\end{proof}

Therefore, since LQR with a discount factor is equivalent to LQR with damped dynamics, it follows that for $\gamma_0 < \rho(A)^{-2}$, (noisy) policy gradients initialized at the zero controller converges to the global optimum of the discounted LQR problem. The following lemma is essentially a restatement of the finite sample convergence result for gradient descent on LQR (Theorem 31) from \cite{fazel2018global}, where we have set  $Q=R=I$ and $\E \left[\matx_0 \matx_0^\top \right] = I$ as in the discount annealing algorithm. We include the proof of this result in \Cref{subsec:proof_noisy_pg} for the sake of completeness.

\begin{lemma}[\cite{fazel2018global}]
\label{lemma:noisy_pg_convergence}
For $K_0$ such that $\Jlin(K_0 \mid \gamma)< \infty$, define (noisy) policy gradients as the procedure which computes updates according to,
\begin{align*} 
K_{t+1} = K_t - \eta \widetilde{\grad}_{t}, 
\end{align*}
for some matrix $\noisygrad_t$. There exists a choice of a constant step size  $\eta > 0$ and a fixed polynomial,
\begin{align*}
\calC_{\mathrm{PG}} \defeq \poly\left( \frac{1}{\Jlin(K_0 \mid \gamma)}, \frac{1}{\opnorm{A}}, \frac{1}{\opnorm{B}}\right),
\end{align*}
such that if the following inequality holds for all $t = 1 \dots T$,  
\begin{align}
\label{eq:pg_grad_condition}
	\fronorm{\grad_{K} \Jlin(K_t \mid \gamma) - \noisygrad_t} \leq  \calC_{\mathrm{PG}} \cdot \epsilon,
\end{align}
 then $\Jlin(K_{T} \mid \gamma) - \min_{K} \Jlin(K \mid \gamma) \leq \epsilon$ whenever
\begin{align*}
	T \geq  \poly\Big(\opnorm{A}, \opnorm{B}, \Jlin(K_0 \mid \gamma)\Big) \log \left( \frac{\Jlin(K_0 \mid \gamma) - \min_{K} \Jlin(K \mid \gamma)}{\epsilon} \right).
\end{align*}
\end{lemma}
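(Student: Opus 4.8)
The plan is to reduce the claim to the convergence analysis of \emph{exact} gradient descent on the LQR cost from \cite{fazel2018global}, and then absorb the gradient perturbation via the standard robustness argument for inexact gradient descent under a Polyak--\L ojasiewicz (gradient-domination) condition.

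First I would invoke \Cref{lemma:equivalence_lemma} to rewrite $\Jlin(K\mid\gamma,A,B)=\Jlin(K\mid 1,\sqrt{\gamma}A,\sqrt{\gamma}B)$, reducing the statement to the undiscounted setting with dynamics $(\sqrt{\gamma}A,\sqrt{\gamma}B)$. Since $\gamma\le 1$, we have $\opnorm{\sqrt{\gamma}A}\le\opnorm{A}$ and $\opnorm{\sqrt{\gamma}B}\le\opnorm{B}$, so any polynomial bound below in the damped operator norms and in the initial cost is automatically a polynomial bound in $\opnorm{A},\opnorm{B},\Jlin(K_0\mid\gamma)$. Writing $f(K)\defeq\Jlin(K\mid 1,\sqrt{\gamma}A,\sqrt{\gamma}B)$ and $f^\star\defeq\min_K f(K)$, I would then recall the two structural facts about $f$ established in \cite{fazel2018global}, specialized to $Q=R=I$ and initial covariance $I$: (i) \emph{gradient domination}, i.e. there is a $\mu>0$, polynomial in $1/\opnorm{A},1/\opnorm{B},1/f(K_0)$, with $f(K)-f^\star\le\tfrac{1}{2\mu}\fronorm{\grad f(K)}^2$ for every $K$ with $f(K)<\infty$ (here one uses $\|\Sigma_{\Kst}\|\le f^\star\le f(K_0)$, which follows from $Q=R=I$); and (ii) \emph{local smoothness on sublevel sets}, i.e. there are $L,G_{\max}>0$, polynomial in $\opnorm{A},\opnorm{B},f(K_0)$, such that on $\mathcal{S}\defeq\{K:f(K)\le 10 f(K_0)\}$ one has $\fronorm{\grad f(K)}\le G_{\max}$, and $f(K')\le f(K)+\langle\grad f(K),K'-K\rangle+\tfrac{L}{2}\fronorm{K'-K}^2$ whenever $K,K'$ and the segment between them lie in $\mathcal{S}$.

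The core is an induction on $t$ establishing that $K_t\in\mathcal{S}$ and $f(K_t)-f^\star\le(1-\mu\eta)^t\,(f(K_0)-f^\star)+\tfrac{2\delta^2}{\mu}$, where $\delta\defeq\calC_{\mathrm{PG}}\,\epsilon$ is the error budget of \Cref{eq:pg_grad_condition} and $\eta\defeq c/L$ for a small universal constant $c$. For the inductive step I would write $\noisygrad_t=\grad f(K_t)+e_t$ with $\fronorm{e_t}\le\delta$; since $K_t\in\mathcal{S}$ we have $\fronorm{\grad f(K_t)}\le G_{\max}$, so the step length $\eta\fronorm{\noisygrad_t}\le\eta(G_{\max}+\delta)$ is small enough (for $c$ small, using that $\grad f$ is bounded on a slightly enlarged sublevel set, exactly as in \cite{fazel2018global}) that the segment $[K_t,K_{t+1}]$ remains in $\mathcal{S}$ and the descent inequality of (ii) applies. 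Expanding, controlling the cross term $\langle\grad f(K_t),e_t\rangle$ by Young's inequality, and taking $\eta\le 1/(4L)$ gives $f(K_{t+1})\le f(K_t)-\tfrac{\eta}{2}\fronorm{\grad f(K_t)}^2+2\eta\delta^2$; combining with (i) yields $f(K_{t+1})-f^\star\le(1-\mu\eta)\,(f(K_t)-f^\star)+2\eta\delta^2$, which both advances the recursion (summing the geometric series produces the floor $2\delta^2/\mu$) and, for $\delta$ small, keeps $f(K_{t+1})\le 10 f(K_0)$, closing the induction. It then suffices to pick $\calC_{\mathrm{PG}}$ a small enough polynomial in $1/\opnorm{A},1/\opnorm{B},1/\Jlin(K_0\mid\gamma)$ that $\tfrac{2\delta^2}{\mu}\le\epsilon/2$ — e.g. $\calC_{\mathrm{PG}}\asymp\sqrt{\mu}$, which works whenever $\epsilon$ is bounded by a fixed polynomial in the problem parameters, the only regime in which the lemma is applied — and then take $T\ge\tfrac{1}{\mu\eta}\log\tfrac{2(f(K_0)-f^\star)}{\epsilon}=\poly(\opnorm{A},\opnorm{B},\Jlin(K_0\mid\gamma))\log\tfrac{\Jlin(K_0\mid\gamma)-\min_K\Jlin(K\mid\gamma)}{\epsilon}$, so that the geometric term is $\le\epsilon/2$; adding the two halves and using $f(K_0)=\Jlin(K_0\mid\gamma)$, $f^\star=\min_K\Jlin(K\mid\gamma)$ gives the claim.

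The main obstacle is the bookkeeping for (i)–(ii): verifying that the LQR perturbation constants of \cite{fazel2018global} — the gradient-domination constant $\mu$, the smoothness parameter $L$, the gradient bound $G_{\max}$, and the radius within which the line-segment smoothness estimate is valid — all remain polynomial in the stated quantities under the $\sqrt{\gamma}$ damping and the $Q=R=I$ specialization, and setting up the sublevel-set induction tightly enough that a single noisy step provably never leaves the region on which those estimates hold. Everything downstream is the textbook inexact-gradient-descent-under-P\L{} computation.
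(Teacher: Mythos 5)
Your proof is essentially correct, but it takes a genuinely different route from the paper's. The paper does \emph{not} run the inexact-gradient-descent-under-P\L{} recursion. Instead, it defines the exact-gradient iterate $K' = K - \eta \grad_K \Jlin(K\mid\gamma)$ alongside the noisy iterate $K'' = K - \eta\noisygrad$, invokes the known one-step contraction for \emph{exact} policy gradient from \cite{fazel2018global} (their Lemmas 13, 24, 25), namely $\Jlin(K'\mid\gamma) - \Jlin(K_\star\mid\gamma) \le (1 - \eta/\opnorm{\Sigma_{K_\star}})(\Jlin(K\mid\gamma) - \Jlin(K_\star\mid\gamma))$, and then transfers this to $K''$ by bounding $|\Jlin(K''\mid\gamma) - \Jlin(K'\mid\gamma)| \le \calC_{\mathrm{cost}}\opnorm{K''-K'} = \calC_{\mathrm{cost}}\,\eta\,\opnorm{\grad_K\Jlin(K\mid\gamma)-\noisygrad}$ via the cost-perturbation bound (Fazel et al.\ Lemma 27), valid whenever $\opnorm{K''-K'}$ is below a polynomial threshold $\calC_K$. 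Requiring the perturbation to be at most $\frac{\eta\epsilon}{2\opnorm{\Sigma_{K_\star}}}$ preserves a (halved) contraction rate whenever the suboptimality still exceeds $\epsilon$, and unrolling gives the claim. Your approach instead makes the gradient-domination constant $\mu$ and a local smoothness constant $L$ explicit, runs the standard descent-lemma-plus-Young's-inequality computation, and obtains the recursion $f(K_{t+1})-f^\star \le (1-\mu\eta)(f(K_t)-f^\star) + 2\eta\delta^2$ with an explicit error floor $2\delta^2/\mu$. What the paper's route buys is that it never needs a descent inequality along the segment $[K_t,K_{t+1}]$ — the exact one-step progress is black-boxed and only a Lipschitz bound on $K\mapsto \Jlin(K\mid\gamma)$ over a sublevel set is needed — and the resulting gradient-error condition is linear in $\epsilon$ by construction, matching the lemma statement. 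Your route is the more standard and arguably more transparent argument, but it carries two extra obligations you correctly flag: (i) the sublevel-set induction ensuring the segment between successive iterates stays where the smoothness estimate holds (the LQR cost is not globally smooth and blows up at the boundary of the stabilizing set, so this is not cosmetic), and (ii) the quadratic floor $2\delta^2/\mu$ means the linear condition $\fronorm{\grad_K\Jlin(K_t\mid\gamma)-\noisygrad_t}\le\calC_{\mathrm{PG}}\epsilon$ only yields the stated conclusion for $\epsilon$ below a constant — a restriction the paper's own proof shares in a milder form through its $\min\{\cdot,\calC_K\}$ threshold, and which is harmless in the regime where the lemma is applied. Both arguments ultimately rest on the same structural facts from \cite{fazel2018global}, with $\opnorm{\Sigma_{\Kst}}\le\Jlin(K_0\mid\gamma)$ and $\dimx\le\Jlin(K_0\mid\gamma)$ used to collapse all constants into polynomials of the stated quantities.
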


With this lemma, we can now present the proof of \Cref{theorem:linear_algorithm}. 

\subsection{Discount annealing on linear systems: Proof of \Cref{theorem:linear_algorithm}}

We organize the proof into two main parts. First, we prove statements $c)$ and $d)$ by an inductive argument. Then, having already proved part $b)$ in the main body of the paper, we finish the proof by establishing the stability guarantees of the resulting controller outlined in part $a)$.

\subsubsection{Proof of $c)$ and $d)$}
\paragraph{Base case.} Recall that at iteration $t$ of discount annealing, policy gradients is initialized at $K_{t,0} \defeq K_t$ and (in the case of linear systems) computes updates according to: 
\begin{align*}
	K_{t,j+1} = K_{t,j} - \eta \cdot \epsgrad\left( 
	\Jlin(\cdot \mid \gamma_t), K_{t,j}\right).
\end{align*}
From \Cref{lemma:noisy_pg_convergence}, if $\Jlin(K_{t,0} \mid \gamma_t) < \infty$ and 
\begin{align}
\label{eq:theorem1_gradclose}
	\fnorm{\epsgrad\left( 
	\Jlin(\cdot \mid \gamma_t), K_{t,j}\right) - \grad \Jlin(K_{t,j} \mid \gamma_t) } \leq \calC_{\mathrm{pg},t} \dimx \quad \forall j \geq 0,
\end{align}
where $\calC_{\mathrm{pg},t} = \poly(1/\Jlin(K_{t,0} \mid \gamma_t), 1 / \opnorm{A}, 1 / \opnorm{B}))$then policy gradients will converge to a $K_{t,j}$ such that $\Jlin(K_{t,j} \mid \gamma_t) - \min_K \Jlin(K \mid \gamma_t) \leq \dimx$ after $\poly(\opnorm{A}, \opnorm{B},\Jlin(K_{t,0}))$ many iterations.

By our choice of discount factor, we have that $\Jlin(0 \mid \gamma_0) < \infty$. Furthermore, since $\epsilon \leq \poly(1/\Jlin(0 \mid \gamma_0), 1 / \opnorm{A}, 1 / \opnorm{B}))$, then the condition outlined in \Cref{eq:theorem1_gradclose} holds and policy gradients achieves the desired guarantee when $t=0$.

The correctness of binary search at iteration $t=0$ follows from \Cref{lemma:binary_search}. In particular, we instantiate the lemma with $f = \Jlin(K_{t+1} \mid \cdot)$ of the LQR objective, which is nondecreasing in $\gamma$ by definition, $f_1 = 2.5 \Jlin(K_{t+1} \mid \gamma_t)$ and $f_2 = 8 \Jlin(K_{t+1} \mid \gamma_t)$. The algorithm requires auxiliary values $\fbar_1 \in [2.5 \Jlin(K_{t+1} \mid \gamma_t), 3 \Jlin(K_{t+1} \mid \gamma_t)]$ and $\fbar_2 \in [7 \Jlin(K_{t+1} \mid \gamma_t), 7.5 \Jlin(K_{t+1} \mid \gamma_t)]$ which we can always compute by using $\epseval$ to estimate the cost $\Jlin(K_{t+1} \mid \gamma_t)$ to precision $.1 \dimx$ (recall that $\Jlin(K \mid \gamma) \geq \dimx$ for any $K$ and $\gamma$). The last step needed to apply the lemma is to lower bound the width of the feasible region of $\gamma$'s which satisfy the desired criterion that $\Jlin(K_{t+1} \mid \gamma) \in [\fbar_1, \fbar_2]$. 

Let $\gamma' \geq \gamma_t$ be such that $\Jlin(K_{t+1} \mid \gamma') = 3 \Jlin(K_{t+1} \mid \gamma_t)$. Such a $\gamma$ is guaranteed to exist since $\Jlin(K_{t+1} \mid \cdot)$ is nondecreasing in $\gamma$ and it is a continuous function for all $\gamma < \bar{\gamma} \defeq \sup \{\gamma : \Jlin(K_{t+1} \mid \gamma) < \infty \}$. By the calculation from the proof of part $b)$ presented in the main body of the paper, for 
\begin{align*}
\gamma'' \defeq  \left( \frac{1}{8 \opnorm{P_{K_{t+1}, \gamma'}}^4} + 1 \right)^2 \gamma', 
\end{align*}
we have that $\Jlin(K_{t+1} \mid \gamma'') \leq 2 \Jlin(K_{t+1} \mid \gamma') = 6 \Jlin(K_{t+1} \mid \gamma_t)$. By monotonicity and continuity of $\Jlin(K_{t+1} \mid \cdot)$, when restricted to $\gamma \leq \bar{\gamma}$, all $\gamma \in [\gamma', \gamma'']$ satisfy $\Jlin(K_{t+1} \mid \gamma) \in [3\Jlin(K_{t+1} \mid \gamma_t), 6\Jlin(K_{t+1} \mid \gamma_t)]$. Moreover, 
\begin{align*}
	\gamma'' - \gamma' = \left[ \left( \frac{1}{8 \opnorm{P_{K_{t+1}, \gamma'}}^4} + 1 \right)^2 - 1 \right] \gamma' 
	\geq \frac{1}{4 \opnorm{P_{K_{t+1}, \gamma'}}^4} \gamma' \geq \frac{1}{4 \traceb{P_{K_{t+1}, \gamma'}}^4} \gamma_t,
\end{align*}
where the last line follows from the fact that $\gamma' \geq \gamma_t$ and that the trace of a PSD matrix is always at least as large as the operator norm. Lastly, since $\traceb{P_{K_{t+1}, \gamma'}} = \Jlin(K_{t+1} \mid \gamma') = 3 \Jlin(K_{t+1} \mid \gamma_t)$, by the guarantee of policy gradients, we have that for $t=0$, $\Jlin(K_{1} \mid \gamma_0) \leq 2 \traceb{\Pst}$. Therefore, for $t=0$:
\begin{align*}
	\gamma'' - \gamma' \geq \frac{1}{4 (6 \traceb{\Pst})^4}
\end{align*}
Hence the width of the feasible region is at least $\frac{1}{5184 \traceb{\Pst}^4} \gamma_0$. 

\paragraph{Inductive step.}

To show that policy gradients achieves the desired guarantee at iteration $t+1$, we can repeat the exact same argument as in the base case. The only difference is that the need to argue that the cost of the initial controller $\sup_{t\geq 0}\Jlin(K_t \mid \gamma_t)$ is uniformly bounded across iterations. By the inductive hypothesis on the success of the binary search algorithm at iteration $t-1$, we have that,
\begin{align*}
	\Jlin(K_{t} \mid \gamma_{t}) & \leq 8 \Jlin(K_{t} \mid \gamma_{t-1}) \\
	& \leq 8\left( \min_{K} \Jlin(K \mid \gamma_{t-1}) + \dimx\right) \\
	& \leq 8\left( \min_{K} \Jlin(K \mid 1) + \dimx\right) \\
	& \leq 16 \traceb{P_\star}.
\end{align*} 
Hence, by \Cref{lemma:noisy_pg_convergence}, policy gradients achieves the desired guarantee using $\poly(M_{\mathrm{lin}}, \opnorm{A} , \opnorm{B})$ many queries to $\epsgrad(\cdot , \Jlin(\cdot \mid \gamma))$ as long as $\epsilon$ is less than $ \poly(M_{\mathrm{lin}}^{-1}, \opnorm{A}^{-1}, \opnorm{B}^{-1})$.

Likewise, the argument for the correctness of the binary search procedure is identical to that of the base case. Because of the success of policy gradients and binary search at the previous iteration, we can upper bound, $\traceb{P_{K_{t+1}, \gamma'}}$ by $6 \traceb{\Pst}$ and get a uniform lower bound on the width of the feasible region.

\subsubsection{Proof of $a)$}
After halting, we see that discount annealing returns a controller $\Khat$ satisfying the stated condition from Step 2 requiring that, 
\begin{align*}
	\Jlin(\Khat \mid 1) - \Jlin(\Kst \mid 1) = \traceb{\Phat - \Pst}\leq \dimx.
\end{align*}
Here, we have used \Cref{fact:value_functions} to rewrite $\Jlin(\Khat \mid 1)$ as $\trace [\Phat]$ for $\Phat \defeq \dlyap{A+B\Khat}{Q +\Khat^\top R \Khat}$ (and likewise for $\Pst$). Since $\traceb{\Pst} \geq \dimx$, we conclude that $\trace[\Phat] \leq 2 \traceb{\Pst}$. Now, by properties of the Lyapunov equation (see \Cref{lemma:dlyap}) the following holds for $\Acl \defeq A+B\Khat$: 
\begin{align*}
	\opnorm{\Acl^2}^t \leq \opnorm{\Phat} \left(1 - \frac{1}{\opnorm{\Phat}}\right)^t \leq \trace[\Phat] \exp\left( - t  / \trace [\Phat] \right).
\end{align*}
Hence, we conclude that, 
\begin{align*}
	\norm{\matx_{t}} = \norm{\Acl^t \matx_0} \leq \opnorm{\Acl^t} \norm{\matx_0} \leq \sqrt{2\trace[\Pst]} \exp\left( - \frac{1}{4\traceb{\Pst}} \cdot  t\right) \norm{\matx_0}.
\end{align*}

\subsection{Convergence of policy gradients for LQR: Proof of \Cref{lemma:noisy_pg_convergence}}
\label{subsec:proof_noisy_pg}
\begin{proof}
Note that, by \Cref{lemma:equivalence_lemma}, proving the above result for $\Jlin(\cdot \mid \gamma, A , B)$ is the same as proving it for $\Jlin(\cdot \mid 1, \sqrt{\gamma}A, \sqrt{\gamma}B)$. We start by defining the following idealized updates,
\begin{align*}
	K'  &= K - \eta \grad_K \Jlin(K_t \mid \gamma) \\ 
	K''  &= K - \eta \noisygrad_t.
\end{align*}
From Lemmas 13, 24, and 25 from \cite{fazel2018global}, there exists a fixed polynomial, 
\begin{align*}
\calC_{\eta} \defeq \poly\left( \frac{1}{\sqrt{\gamma} \opnorm{A}}, \frac{1}{\sqrt{\gamma}\opnorm{ B}}, \frac{1}{\Jlin(K_0 \mid \gamma)}  \right)
\end{align*}
Such that, for $\eta \leq \calC_{\eta}$, the following inequality holds,
\begin{align*}
	\Jlin(K' \mid \gamma) - \Jlin(K_\star \mid \gamma) \leq \left(1 - \eta \frac{1}{\opnorm{\Sigma_{K_\star}}} \right) \left( \Jlin(K \mid \gamma) - \Jlin(K_\star \mid \gamma) \right),
\end{align*}
where $\Sigma_{\Kst} = \E_{\matx_0}[\sum_{t=0}^\infty \matx_{t, \star}^\top \matx_{t,\star}]$ and $\{ \matx_{t,\star}\}_{t=0}^\infty$ is the sequence of states generated by the controller $\Kst$. Therefore, if $\Jlin(K'' \mid \gamma)$ and $\Jlin(K' \mid \gamma)$ satisfy,
\begin{align}
\label{eq:closeness_condition}
	| \Jlin(K'' \mid \gamma) - \Jlin(K' \mid \gamma) | \leq \frac{1}{2\opnorm{\Sigma_{\Kst}}} \eta \epsilon
\end{align}
then, as long as $\Jlin(K \mid \gamma) - \Jlin(K_\star \mid \gamma) \geq \epsilon$, this following inequality also holds:
\begin{align*}
	\Jlin(K'' \mid \gamma) - \Jlin(K_\star \mid \gamma) \leq \left(1 - \eta \frac{1}{\opnorm{2\Sigma_{K_\star}}} \right) \left( \Jlin(K \mid \gamma) - \Jlin(K_\star \mid \gamma) \right).
\end{align*}
The proof then follows by unrolling the recursion and simplifying. We now focus on establishing \Cref{eq:closeness_condition}. By Lemma 27 in \cite{fazel2018global}, if 
\begin{align*}
	\opnorm{K'' - K'} = \eta \opnorm{\grad_{K} \Jlin(K_j \mid \gamma) - \noisygrad_j} \leq \calC_K  
\end{align*}
where $\calC_K$ is a fixed polynomial $\calC_K \defeq \poly(\frac{1}{\Jlin(K_0 \mid \gamma)},\frac{1}{\sqrt{\gamma} \opnorm{A}}, \frac{1}{\sqrt{\gamma}\opnorm{B}})$, then 
\begin{align*}
|\Jlin(K'' \mid \gamma) - \Jlin(K' \mid \gamma) | \leq \calC_{\mathrm{cost}} \opnorm{K'' - K'} = \calC_{\mathrm{cost}} \cdot \eta \opnorm{\grad_{K} \Jlin(K_j \mid \gamma) - \noisygrad_j}, 
\end{align*}
where $\Ccost \defeq \poly(\dimx, \opnorm{R}, \opnorm{B},  \Jlin(K_0))$. Therefore,  \Cref{eq:closeness_condition} holds if 
\begin{align*}
 \opnorm{\grad_{K} \Jlin(K_j \mid \gamma) - \noisygrad_j} \leq  \min\left\{\frac{1}{2\opnorm{\Sigma_{\Kst}} \calC_{\mathrm{cost}}} \epsilon, \; \calC_K\right\}.
\end{align*}
The exact statement follows from using $\dimx \leq \Jlin(K_0 \mid \gamma)$ and $\opnorm{\Sigma_{K_{\star, \gamma}}} \leq \Jlin(K_{\star, \gamma}) \leq \Jlin(K_0 \mid \gamma)$ by Lemma 13 in \citep{fazel2018global} and taking the polynomial in the proposition statement to be the minimum of $\calC_K$ and $1 / \Ccost $.
\end{proof}

\subsection{Impossibility of reward shaping: Proof of  \Cref{prop:reward_shaping}}
\label{subsec:proof_of_reward_shaping}
\begin{proof}
Consider the linear dynamical system with dynamics matrices, 
\begin{align*}
	A = \begin{bmatrix}
		0 & 0 \\ 
		0& 2 
	\end{bmatrix}, \quad B = \begin{bmatrix}
		1 \\ 
		\beta
	\end{bmatrix}
\end{align*}
where $\beta > 0 $ is a parameter to be chosen later. Note that a linear dynamical system of these dimensions is controllable (and hence stabilizable) \citep{callier2012linear}, since the matrix 
\begin{align*}
	\begin{bmatrix}
		B & AB 
	\end{bmatrix} = \begin{bmatrix}
		1 & 0 \\ 
		\beta & 2\beta
	\end{bmatrix}
\end{align*}
is full rank. For any controller $K = \begin{bmatrix}
	k_1 & k_2 \end{bmatrix}$, the closed loop system $\Acl \defeq A + BK$ has the form, 
	\begin{align*}
		\Acl = \begin{bmatrix}
			k_1 & k_2 \\ 
			\beta k_1 & 2 + \beta k_2 
		\end{bmatrix}.
	\end{align*}
By Gershgorin's circle theorem, $
\Acl$ has an eigenvalue $\lambda$ which satisfies, 
\begin{align*}
	|\lambda| \geq |2 + \beta k_2| - |\beta k_1| \geq 2 - 2\beta \max\{ |k_1|, |k_2|\}.
\end{align*}
Therefore, any controller $K$ for which the closed-loop system $A+BK$ is stable must have the property that, 
\begin{align*}
	\max\{|k_1|, |k_2|\} \geq \frac{1}{2 \beta}. 
\end{align*}
Using this observation and \Cref{fact:value_functions}, for any discount factor $\gamma$,  a stabilizing controller $K$ must satisfy, 
\begin{align*}
	\Jlin(K \mid \gamma) &= \traceb{\Pky} \\
	& \geq \traceb{Q} + \traceb{K^\top R K} \\ 
	& \geq (k_1^2 + k_2^2) \cdot  \sigma_{\min}(R) \\ 
	& \geq \frac{1}{4 \beta^2} \cdot \sigma_{\min}(R).
\end{align*}
In the above calculation, we have used the identity $\Pky = Q + K^\top R K + \gamma\Acl^\top \Pky \Acl$ as well as the assumption that $R$ is positive definite. Next, we observe that for a discount factor $\gamma = c^2 \cdot \rho(A)^{-2}$, where $c \in (0,1)$ as chosen in the initial iteration of our algorithm, the cost of the 0 controller has the following upper bound: 
\begin{align*}
	\Jlin(0 \mid \gamma_0) &= \sum_{j=0}^\infty c^{2j} \rho(A)^{-2j}\cdot \traceb{(A^\top)^j Q A^j} \\ 
	& \leq \opnorm{Q} \sum_{j=0}^\infty  c^{2j} \rho(A)^{-2j}  \fnorm{A^j}^2 \\ 
	& = \opnorm{Q} \sum_{j=0}^\infty \fronorm{\left( \frac{c}{\rho(A)} \cdot A \right)^j}^2.
\end{align*}
Using standard Lyapunov arguments (see for example Section D.2 in \cite{perdomo2021dimensionfree}) the sum in the last line is a geometric series and is equal to some function $f(c, A) < \infty$, which depends \emph{only} on $c$ and $A$, for all $c \in (0,1)$. Using this calculation, it follows that 
\begin{align*}
	\min_{K} \Jlin(K \mid \gamma_0) \leq \Jlin(0 \mid \gamma) \leq \opnorm{Q} f(c,A)
\end{align*}
Hence, for any $Q, R$, and discount factor $\gamma \in (0, \rho(A)^{-2})$, we can choose $\beta$ small enough such that, 
\begin{align*}
\opnorm{Q} f(c,A) < \frac{1}{4\beta^2}  \sigma_{\min}(R)
\end{align*}
implying that the optimal controller $K_{\star, \gamma}$ for the discounted problem cannot be stabilizing for $(A,B)$.
\end{proof}

\subsection{Auxiliary results for linear systems}
\label{subsec:supporting_lemmas}
\begin{proposition}
\label{lemma:lower_bound_gamma}
Let $\sqrt{\gamma_1}(A+BK)$ be a stable matrix and define $P_1 \defeq \dlyap{\sqrt{\gamma_1}(A+BK)}{Q+K^\top RK}$, then for $c$ defined as
\begin{align*}
	c \defeq  \frac{1}{8\opnorm{P_1}^{4}} + 1,
\end{align*}
the following holds for $\gamma_2 \defeq c^2 \gamma_1$ and $P_2 \defeq \dlyap{\sqrt{\gamma_2}(A+BK)}{Q+K^\top RK}$:
\begin{align*}
\traceb{P_2 - P_1} \leq \traceb{P_1}.
\end{align*}
\end{proposition}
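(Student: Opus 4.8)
Write $\Acl \defeq A+BK$ and $\Sigma \defeq Q + K^\top R K$, so that $P_i$ is the unique solution of $P_i = \Sigma + \gamma_i \Acl^\top P_i \Acl$ for $i=1,2$, with $\gamma_2 = c^2\gamma_1$; since $Q \succeq I$ we have $\Sigma \succeq I$, hence $\opnorm{P_1} \ge 1$. The classical series form of $\dlyap{\cdot}{\cdot}$ gives $P_1 = \sum_{j \ge 0} \gamma_1^j (\Acl^j)^\top \Sigma \Acl^j$. The first step is to record a geometric decay estimate: rearranging the Lyapunov equation for $P_1$ and using $\Sigma \succeq I \succeq \opnorm{P_1}^{-1} P_1$ gives $\gamma_1 \Acl^\top P_1 \Acl \preceq (1-\opnorm{P_1}^{-1}) P_1$, and conjugating this inequality repeatedly by $\sqrt{\gamma_1}\Acl$ yields, for every $j \ge 0$,
\[
\gamma_1^j (\Acl^j)^\top \Sigma \Acl^j \;\preceq\; \gamma_1^j (\Acl^j)^\top P_1 \Acl^j \;\preceq\; \bigl(1-\opnorm{P_1}^{-1}\bigr)^j P_1 .
\]

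Second, I would verify that $P_2$ is well defined: from the operator bound above, $\gamma_1 \rho(\Acl)^2 \le 1 - \opnorm{P_1}^{-1}$, so $\gamma_2\rho(\Acl)^2 = c^2\gamma_1\rho(\Acl)^2 \le c^2(1-\opnorm{P_1}^{-1})$, and for $c = 1 + \tfrac{1}{8\opnorm{P_1}^4}$ this is $<1$; hence $\sqrt{\gamma_2}\Acl$ is stable and $P_2 = \sum_{j\ge0}\gamma_2^j(\Acl^j)^\top \Sigma \Acl^j$. Subtracting the two series, $P_2 - P_1 = \sum_{j\ge1}(c^{2j}-1)\gamma_1^j(\Acl^j)^\top \Sigma \Acl^j$, which is a sum of PSD matrices because $c \ge 1$; taking traces and inserting the decay estimate gives
\[
\traceb{P_2 - P_1} \;\le\; \traceb{P_1}\sum_{j\ge 1}(c^{2j}-1)\bigl(1-\opnorm{P_1}^{-1}\bigr)^j ,
\]
so it remains to show the scalar sum is at most $1$.

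Third, the scalar computation. Set $a \defeq \opnorm{P_1} \ge 1$, $\rho \defeq 1 - 1/a$, and write $c^2 = 1+\delta$ with $0 \le \delta \le \tfrac{17}{64 a^4}$. If $a = 1$ then $\rho = 0$ and the sum vanishes. If $a > 1$, then $c^2\rho < 1$ and $\rho<1$, so splitting into two geometric series,
\[
\sum_{j\ge1}(c^{2j}-1)\rho^j \;=\; \frac{c^2\rho}{1-c^2\rho} - \frac{\rho}{1-\rho} \;=\; \frac{1}{1-c^2\rho} - a ,
\]
which is $\le 1$ precisely when $c^2\rho \le \tfrac{a}{a+1}$, i.e. (clearing denominators) when $(1+\delta)(a^2-1) \le a^2$, i.e. when $\delta(a^2-1) \le 1$; and this holds since $\delta(a^2-1) \le \tfrac{17(a^2-1)}{64 a^4} \le \tfrac{17}{64} < 1$ (using $a^2-1 \le a^4$). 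Combining with the second step completes the proof.

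\textbf{Main obstacle.} The analytic content is light: the only genuinely delicate point is the constant bookkeeping, namely checking that the specific choice $c = 1 + \tfrac{1}{8\opnorm{P_1}^4}$ — in particular the fourth power — is small enough to make both $c^2\rho < 1$ (so that $P_2$ is well posed) and $\delta(a^2-1) \le 1$ (the final inequality) hold simultaneously for all $a \ge 1$. A secondary point to state carefully is that the monotonicity $c \ge 1$ is exactly what makes each summand of $P_2 - P_1$ positive semidefinite, which is what licenses passing from the operator decay bound to the termwise trace inequality.
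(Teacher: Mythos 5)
Your proof is correct, and it takes a genuinely different route from the paper's. The paper proves this proposition by invoking a general Lyapunov-perturbation result (Proposition C.7 of \cite{perdomo2021dimensionfree}), instantiating it with $A_1 \leftarrow \sqrt{\gamma_1}(A+BK)$, $A_2 \leftarrow \sqrt{\gamma_2}(A+BK)$, bounding $\opnorm{A_1-A_2}$, and then doing constant bookkeeping to verify that $c = 1 + \tfrac{1}{8\opnorm{P_1}^4}$ satisfies the hypotheses of that external proposition. You instead exploit the special structure of this instance: both Lyapunov solutions share the same closed-loop matrix $\Acl$ and differ only in the scalar $\gamma$, so $P_2 - P_1 = \sum_{j\ge 1}(c^{2j}-1)\gamma_1^j(\Acl^j)^\top \Sigma\, \Acl^j$ is an exact series of PSD terms, and the whole bound reduces to a scalar geometric-series computation once you have the decay estimate $\gamma_1^j(\Acl^j)^\top \Sigma\, \Acl^j \preceq (1-\opnorm{P_1}^{-1})^j P_1$ (which is the paper's \Cref{lemma:dlyap}). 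I checked the details: the stability of $\sqrt{\gamma_2}\Acl$ via $\gamma_2\rho(\Acl)^2 \le c^2(1-\opnorm{P_1}^{-1}) < 1$, the identity $\sum_{j\ge1}(c^{2j}-1)\rho^j = \frac{1}{1-c^2\rho}-a$, and the final reduction to $\delta(a^2-1)\le 1$ with $\delta \le \tfrac{17}{64a^4}$ all go through (the edge case $\opnorm{P_1}=1$ is handled correctly, and $\Sigma \succeq I$ follows from the paper's standing assumption $\lambda_{\min}(Q)\ge 1$). What your approach buys is a self-contained, elementary argument with transparent constants; what the paper's approach buys is reuse of a general perturbation lemma that also covers the case where the two closed-loop matrices differ by an arbitrary (non-scalar) perturbation, which is not needed here.
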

\begin{proof}
The proof is a direct consequence of Proposition C.7 in \cite{perdomo2021dimensionfree}. In particular, we use their results for the trace norm and use the following substitutions,
\begin{align*}
&A_1 \leftarrow \sqrt{\gamma_1}(A+BK)\quad &\Sigma \leftarrow Q + K^\top R K  \\ 
&A_2 \leftarrow \sqrt{\gamma_2}(A+BK) \quad&\alpha \leftarrow 1/2
\end{align*}
where $A_1, A_2, \Sigma,$ and $\alpha$ are defined as in \cite{perdomo2021dimensionfree}. Note that for $c$ satisfying,
\begin{align*}
c \leq  \frac{1}{8 \opnorm{\sqrt{\gamma_1}(A+BK)}} \min \left\{\frac{1}{ \opnorm{P_1}^{3/2}};\; \frac{\traceb{P_1}}{\dimx \opnorm{P_1}^{7/2} }\right\} + 1
\end{align*}
we get that,
\begin{align*}
	\opnorm{A_1 - A_2}^2 = \gamma_1 (c-1)^2 \opnorm{A+BK}^2 \leq   \frac{1}{64 \opnorm{P_1}^3} = \frac{\alpha^2}{16 \opnorm{P_1}^3}.
\end{align*}
Therefore, Proposition C.7 states that, for $\calC \defeq \traceb{P_1^{-1/2}(Q+K^\top R K) P_1^{-1/2}} \leq \traceb{I} = \dimx$,
\begin{align*}
\traceb{P_2 - P_1} &\leq 8 \calC \sqrt{\gamma_1} (c-1) \opnorm{A+BK} \opnorm{P_1}^{7/2} \\
& \leq \traceb{P_1}.
\end{align*}
Lastly, noting that,
\begin{align*}
P_1 = \dlyap{\sqrt{\gamma_1}(A+BK)}{Q+K^\top RK} \succeq \gamma  (A+BK)^\top  (A+BK)
\end{align*}
we have that $\opnorm{\sqrt{\gamma_1}(A+BK)} \leq \opnorm{P_1}^{1/2}$ and $\traceb{P_1} \geq \dimx $. Therefore, since $\opnorm{P_1} \geq 1$, in order to apply Proposition C.7 from \cite{perdomo2021dimensionfree} it suffices for $c$ to satisfy,
\begin{align*}
	c \leq \frac{1}{8 \opnorm{P_1}^4} + 1.
\end{align*}
\end{proof}
\begin{lemma}[Lemma D.9 in \cite{perdomo2021dimensionfree}]
\label{lemma:dlyap}
Let $A$ be a stable matrix, $Q \succeq I$, and define $P \defeq \dlyap{A}{Q}$. Then, for all $j \geq 0$, 
\begin{align*}
(A^\top)^j  A^j  \preceq (A^\top)^j P A^j \preceq P \left(1 - \frac{1}{\opnorm{P}}\right)^j.
\end{align*}
\end{lemma}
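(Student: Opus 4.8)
The plan is to prove the two matrix inequalities separately, both resting on the fixed-point identity $P = Q + A^\top P A$ and on the fact that congruence $X \mapsto M^\top X M$ preserves the positive-semidefinite order.

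First I would record that $P \succeq I$. Since $A$ is stable, the Lyapunov equation has the unique solution $P = \sum_{k \ge 0} (A^\top)^k Q A^k$, which satisfies $P \succeq Q \succeq I$; in particular $\opnorm{P} = \lambda_{\max}(P) \in [1,\infty)$. The left-hand inequality then follows immediately: $P \succeq I$ gives $M^\top M \preceq M^\top P M$ for every matrix $M$, and taking $M = A^j$ yields $(A^\top)^j A^j \preceq (A^\top)^j P A^j$.

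The right-hand inequality reduces to a one-step contraction bound, namely $A^\top P A \preceq \rho P$ with $\rho \defeq 1 - \opnorm{P}^{-1}$. To see this, use the Lyapunov identity to write $A^\top P A = P - Q \preceq P - I$, and note that, $P$ being symmetric PSD, $P \preceq \opnorm{P} \cdot I$, i.e.\ $I \succeq \opnorm{P}^{-1} P$; hence $P - I \preceq P - \opnorm{P}^{-1} P = \rho P$. Since $\opnorm{P} \ge 1$ we have $\rho \in [0,1)$.

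Finally I would iterate this bound by induction on $j$. The base case $j = 0$ is trivial. For the inductive step, assuming $(A^\top)^j P A^j \preceq \rho^j P$, apply the order-preserving congruence $X \mapsto A^\top X A$ and then the one-step bound to obtain $(A^\top)^{j+1} P A^{j+1} \preceq A^\top (\rho^j P) A = \rho^j (A^\top P A) \preceq \rho^{j+1} P$, where $\rho^j \ge 0$ is exactly what makes scaling the one-step inequality legitimate. This is the claimed $(A^\top)^j P A^j \preceq \rho^j P = P(1 - \opnorm{P}^{-1})^j$. There is no genuinely hard step here; the only points requiring a moment's care are that $\opnorm{P}$ is finite — which is precisely where stability of $A$ enters, guaranteeing existence of the Lyapunov solution — and that $\rho \ge 0$, so that multiplying a PSD inequality through by $\rho^j$ is valid; both are supplied by the observation $P \succeq I$.
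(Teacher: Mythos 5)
Your proof is correct. The paper does not actually prove this lemma --- it imports it verbatim as Lemma D.9 of \cite{perdomo2021dimensionfree} --- so there is no in-paper argument to compare against; your route (the series representation $P = \sum_{k\ge 0}(A^\top)^k Q A^k$ to get $P \succeq Q \succeq I$, the one-step contraction $A^\top P A = P - Q \preceq P - \opnorm{P}^{-1}P$, and induction via the order-preserving congruence $X \mapsto A^\top X A$) is the standard one and all the small points you flag ($\rho \in [0,1)$, nonnegativity of $\rho^j$ when scaling a PSD inequality) are handled correctly.
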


\begin{lemma}
\label{lemma:stability_margin}
Let $A$ be a stable matrix and define $P \defeq \dlyap{A}{Q}$ where $Q \succeq I$. Then, for any matrix $\Delta$ such that $\opnorm{\Delta} \leq \frac{1}{6 \opnorm{P}^2}, 
$
it holds that for all $j \geq 0$
\begin{align*}
	\left( (A + \Delta)^\top \right)^j P  (A+\Delta)^j \preceq P \left(1 - \frac{1}{2\opnorm{P}}\right)^j.
\end{align*}
\end{lemma}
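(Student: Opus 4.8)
The plan is the standard ``robustness of a Lyapunov function'' argument: show that a single step of the perturbed dynamics $A+\Delta$ is a contraction in the $P$-weighted quadratic form with exactly the advertised rate, and then iterate.

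\textbf{Step 1 (single-step contraction).} I would start from the Lyapunov identity $P = Q + A^\top P A$. Since $Q \succeq I$, this yields $A^\top P A = P - Q \preceq P - I$; it also yields the norm bound $\opnorm{P^{1/2}A}\le\opnorm{P}^{1/2}$, because $\|P^{1/2}Av\|^2 = v^\top A^\top P A v \le v^\top P v \le \opnorm{P}\|v\|^2$ for every $v$. Expanding,
\begin{align*}
(A+\Delta)^\top P (A+\Delta) = A^\top P A + A^\top P\Delta + \Delta^\top P A + \Delta^\top P\Delta,
\end{align*}
and bounding the correction blocks via $\opnorm{A^\top P\Delta} \le \opnorm{P^{1/2}A}\,\opnorm{P^{1/2}\Delta}\le\opnorm{P}\opnorm{\Delta}$ and $\opnorm{\Delta^\top P\Delta}\le\opnorm{P}\opnorm{\Delta}^2$ (each block being symmetric, or a symmetric pair, hence $\preceq$ its operator norm times $I$), one gets
\begin{align*}
(A+\Delta)^\top P (A+\Delta) \preceq P - I + \bigl(2\opnorm{P}\opnorm{\Delta} + \opnorm{P}\opnorm{\Delta}^2\bigr)I.
\end{align*}
Plugging in $\opnorm{\Delta}\le\frac{1}{6\opnorm{P}^2}$ and using $\opnorm{P}\ge1$, the bracketed scalar is at most $\frac13+\frac{1}{36}<\frac12$, so the right-hand side is $\preceq P - \frac12 I$. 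Since $\opnorm{P}I\succeq P$ gives $\frac12 I \succeq \frac{1}{2\opnorm{P}}P$, we conclude
\begin{align*}
(A+\Delta)^\top P (A+\Delta) \preceq P - \tfrac{1}{2\opnorm{P}}P = \Bigl(1 - \tfrac{1}{2\opnorm{P}}\Bigr)P.
\end{align*}

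\textbf{Step 2 (iteration).} Write $B = A+\Delta$ and $\rho = 1 - \frac{1}{2\opnorm{P}}\in(0,1)$; Step 1 states $B^\top P B \preceq \rho P$. I would then induct on $j$, the case $j=0$ being trivial: assuming $(B^\top)^j P B^j \preceq \rho^j P$, conjugating both sides by $B$ (which preserves the positive semidefinite order) and invoking Step 1 gives $(B^\top)^{j+1}P B^{j+1} = B^\top\bigl((B^\top)^j P B^j\bigr)B \preceq \rho^j\,B^\top P B \preceq \rho^{j+1}P$, which is precisely the claim. Note that stability of $A+\Delta$ (hence finiteness of the quadratic forms) need not be assumed: the displayed bound holds for all $j$ regardless, and letting $j\to\infty$ in fact recovers it.

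\textbf{Main obstacle.} There is no substantive difficulty here; the only thing requiring care is the bookkeeping of constants — checking that the perturbation radius $\frac{1}{6\opnorm{P}^2}$ is small enough that the correction terms consume at most half of the $-I$ margin supplied by $Q\succeq I$, and that absorbing $\frac{1}{2\opnorm{P}}P$ into $\frac12 I$ (rather than a strictly smaller fraction) still goes through, so that the rate that drops out is exactly the stated $1-\frac{1}{2\opnorm{P}}$.
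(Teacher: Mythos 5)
Your proof is correct and follows essentially the same route as the paper's: expand $(A+\Delta)^\top P(A+\Delta)$, bound the cross and quadratic perturbation terms by $\opnorm{P}\opnorm{\Delta}\cdot I$ using $A^\top P A \preceq P$, absorb them into the contraction margin via $I \preceq P \preceq \opnorm{P}I$, and iterate by conjugation. The only cosmetic differences are that you take the margin from $A^\top P A \preceq P - I$ (via $Q \succeq I$) where the paper uses $A^\top P A \preceq P(1-\opnorm{P}^{-1})$, and that you make the induction on $j$ explicit where the paper leaves it implicit; both yield the same single-step rate $1-\tfrac{1}{2\opnorm{P}}$.
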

\begin{proof}
Expanding out, we have that 
\begin{align*}
	(A + \Delta)^\top P (A + \Delta) &= A^\top P A + A^\top P \Delta + \Delta^\top P A + \Delta^\top P \Delta \\ 
	& \preceq P \left(1 - \frac{1}{\opnorm{P}}\right) + \opnorm{A^\top P \Delta + \Delta^\top P A + \Delta^\top P \Delta} I,
\end{align*}
where in the second line we have used properties of the Lyapunov function, \Cref{lemma:dlyap}. Next,  we observe that 
\begin{align*}
	\opnorm{\Delta^\top P A} \leq \opnorm{\Delta^\top P^{1/2}} \opnorm{P^{1/2} A} \leq \opnorm{\Delta^\top P^{1/2}}  \opnorm{P^{1/2}} \leq \opnorm{\Delta} \opnorm{P},
\end{align*}   
where we have again used \Cref{lemma:dlyap} to conclude that $A^\top P A \preceq P$. Note that the exact same calculation holds for $\opnorm{A^\top P \Delta}$. Hence, we can conclude that for $\Delta$ such that $\opnorm{\Delta} \leq 1$,
\begin{align*}
	(A + \Delta)^\top P (A + \Delta) \preceq P \left(1 - \frac{1}{\opnorm{P}}\right) + 3 \opnorm{P} \opnorm{\Delta}  \cdot I.
\end{align*}
Using the fact that, $P \succeq I $ and that $\opnorm{\Delta} \leq 1 / (6 \opnorm{P}^2)$, we get that, 
\begin{align*}
3 \opnorm{P} \opnorm{\Delta}  \cdot I \preceq \frac{1}{2\opnorm{P}} P,
\end{align*}
which finishes the proof.
\end{proof}


\section{Deferred Proofs and Analysis for the Nonlinear Setting}

\textbf{Establishing Lyapunov functions.} Our analysis for nonlinear systems begins with the observation that any state-feedback controller $K$ which achieves finite cost on the $\gamma$-discounted LQR problem has an associated value function $\Pky$ which can be used as a Lyapunov function for the $\sqrt{\gamma}$-damped nonlinear dynamics, for small enough initial states. We present the proof of this result in \Cref{subsec:proof_of_nonlinear_lyapunov}.

\begin{lemma}
\label{lemma:nonlinear_lyapunov}
Let $\Jlin(K \mid \gamma)<\infty$. Then, for all $\matx \in \R^\dimx$ such that,
\begin{align*}
	\matx^\top \Pky \matx \leq \frac{\rnl^2}{4 \czero^2 \opnorm{\Pky}^3},
\end{align*}
the following inequality holds: 
\begin{align*}
	\gamma \cdot \Gnl(\matx, K\matx)^\top P_{K, \gamma} \Gnl(\matx, K\matx) \leq \matx^\top P_{K,\gamma} \matx \cdot \left( 1 - \frac{1}{2\opnorm{P_{K,\gamma}}} \right).
\end{align*}
\end{lemma}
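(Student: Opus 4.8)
The plan is to use the value-function matrix $P \defeq \Pky$ as a Lyapunov function for the damped nonlinear map $\matx\mapsto\sqrt{\gamma}\,\Gnl(\matx,K\matx)$, and to mirror the linear stability-margin argument of \Cref{lemma:stability_margin}, with the linear perturbation $\Delta$ there replaced by the Taylor remainder $\fnl$.

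\textbf{Setup.} Write $\Acl \defeq \Ajac + \Bjac K$. Since $\Jlin(K\mid\gamma)<\infty$, \Cref{fact:finite_cost_req} gives that $\sqrt{\gamma}\Acl$ is stable, so $P = \dlyap{\sqrt{\gamma}\Acl}{Q + K^\top R K}$ and hence $P = Q + K^\top R K + \gamma\Acl^\top P\Acl$. Three consequences of $Q,R\succeq I$ will be used repeatedly: $P \succeq Q \succeq I$ (so $\opnorm{P}^{-1}P \preceq I$); $I + K^\top K \preceq Q + K^\top R K \preceq P$ (so in particular $\opnorm{K}\le\opnorm{P}^{1/2}$); and, applying \Cref{lemma:dlyap} with $j=1$ to the matrix $\sqrt{\gamma}\Acl$, the one-step contraction $\gamma\,\Acl^\top P\Acl \preceq (1 - \opnorm{P}^{-1})P$.

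\textbf{Step 1: reduce to the linearization plus a small remainder.} First I would verify that the hypothesis $\matx^\top P\matx \le \rnl^2/(4\czero^2\opnorm{P}^3)$ places $(\matx,K\matx)$ inside the region where \Cref{lemma:jacobian} applies: from $P\succeq I$ and $\opnorm{K}\le\opnorm{P}^{1/2}$ one gets $\norm{\matx}+\norm{K\matx} \le 2\opnorm{P}^{1/2}(\matx^\top P\matx)^{1/2} \le \rnl/(\betanl\opnorm{P}) \le \rnl$. Hence $\Gnl(\matx,K\matx) = \Acl\matx + \mathbf{f}$ with $\mathbf{f}\defeq\fnl(\matx,K\matx)$, and \Cref{lemma:jacobian} together with $I + K^\top K \preceq P$ gives the clean bound $\norm{\mathbf{f}} \le \betanl(\norm{\matx}^2 + \norm{K\matx}^2) = \betanl\,\matx^\top(I+K^\top K)\matx \le \betanl\,\matx^\top P\matx$.

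\textbf{Step 2: the Lyapunov decrease.} Expanding the quadratic form,
\begin{align*}
\gamma\,\Gnl(\matx,K\matx)^\top P\,\Gnl(\matx,K\matx) = \gamma\,\matx^\top\Acl^\top P\Acl\matx + 2\gamma\,\matx^\top\Acl^\top P\mathbf{f} + \gamma\,\mathbf{f}^\top P\mathbf{f}.
\end{align*}
The first term is at most $(1-\opnorm{P}^{-1})\matx^\top P\matx$ by the one-step contraction above. For the cross term, Cauchy--Schwarz in the $P^{1/2}$-inner product, together with $\gamma\le 1$ and $\gamma\,\matx^\top\Acl^\top P\Acl\matx \le \matx^\top P\matx$, bounds it by $2(\matx^\top P\matx)^{1/2}\opnorm{P}^{1/2}\norm{\mathbf{f}} \le 2\betanl\opnorm{P}^{1/2}(\matx^\top P\matx)^{3/2}$; the last term is at most $\opnorm{P}\norm{\mathbf{f}}^2 \le \betanl^2\opnorm{P}(\matx^\top P\matx)^2$. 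Factoring out $\matx^\top P\matx$ and inserting the hypothesis in the form $(\matx^\top P\matx)^{1/2} \le \rnl/(2\czero\opnorm{P}^{3/2})$ into the two remainder terms, a short numerical estimate (using $\rnl\le 1$ and $\betanl,\opnorm{P}\ge1$) shows their sum is at most $\tfrac12\opnorm{P}^{-1}\matx^\top P\matx$, which yields $\gamma\,\Gnl(\matx,K\matx)^\top P\,\Gnl(\matx,K\matx) \le (1 - \tfrac12\opnorm{P}^{-1})\matx^\top P\matx$, as claimed.

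\textbf{Main obstacle.} The only delicate point is this last numerical estimate: the cross and quadratic remainder terms must be absorbed into the \emph{slack} between the linearization's contraction rate $1-\opnorm{P}^{-1}$ and the target rate $1-\tfrac12\opnorm{P}^{-1}$. The reason the hypothesis is phrased in terms of the sublevel-set quantity $\matx^\top P\matx$ rather than $\norm{\matx}$ is precisely that this is the scale on which the cross term, which grows like $\betanl\opnorm{P}^{1/2}(\matx^\top P\matx)^{3/2}$, becomes lower order; the powers of $\opnorm{P}$, $\betanl$, and $\rnl$ appearing in the bound $\rnl^2/(4\czero^2\opnorm{P}^3)$ are tuned exactly so that both remainder terms are killed.
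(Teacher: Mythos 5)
Your proposal follows essentially the same route as the paper's proof of \Cref{lemma:nonlinear_lyapunov}: decompose $\Gnl(\matx,K\matx)=\Acl\matx+\fnl(\matx,K\matx)$, contract the linear part one step via \Cref{lemma:dlyap}, and absorb the cross and quadratic remainder terms into the slack $\tfrac{1}{2}\opnorm{\Pky}^{-1}\matx^\top\Pky\matx$; your Cauchy--Schwarz step is exactly the paper's AM--GM with the optimized $\tau$, and your bound $\norm{\fnl}\le\betanl\,\matx^\top\Pky\matx$ via $I+K^\top K\preceq\Pky$ is a slightly cleaner packaging of the paper's $\czero(1+\opnorm{K}^2)\norm{\matx}^2$. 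The only quibble is the final numerical check: writing $V=\matx^\top\Pky\matx$, the hypothesis gives $V^{1/2}\le\rnl/(2\betanl\opnorm{\Pky}^{3/2})$, under which your cross-term bound $2\betanl\opnorm{\Pky}^{1/2}V^{3/2}$ evaluates to $\rnl\,\opnorm{\Pky}^{-1}V$, already exceeding the available slack $\tfrac12\opnorm{\Pky}^{-1}V$ unless $\rnl\le 2/5$ or so, so the absorption does not close under $\rnl\le 1$ alone. This is a constant-factor issue only (repaired by shrinking the constant in the sublevel-set hypothesis or sharpening the intermediate bounds), and the paper's own write-up carries comparable slack at the corresponding step.
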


Using this observation, we can then show that any controller which has finite discounted LQR cost is exponentially stabilizing over states in a sufficiently small region of attraction.
\begin{lemma}
\label{lemma:exp_convergence}
Assume $\Jlin(K \mid \gamma) < \infty$ and define $\{\xtnl\}_{t=0}^\infty$ be the sequence of states generated according to $\xtnlp = \sqrt{\gamma}\Gnl(\xtnl, K \xtnl)$ where $\matx_{t, \mathrm{nl}} = \matx_0$. If $\matx_0$ is such that,
\begin{align*}
	V_0 := \matx_0^\top \Pky \matx_0 \leq \frac{\rnl^2}{4 \czero^2 \opnorm{\Pky}^3} ,
\end{align*}
then for all $t\geq 0$ and for $V_0$ defined as above,
\begin{enumerate}[a)]
\item The norm of the state $\norm{\xtnl}^2$ is bounded by
\begin{align}
	\norm{\xtnl}^2 \leq \xtnl^\top P_{K,\gamma} \xtnl \leq V_0 \left(1 - \frac{1}{2 \opnorm{\Pky}} \right)^t. \label{eq:corollary:exp_convergence}
\end{align}
\item The norms of $\fnl(\xtnl,K\xtnl)$ and $\nabla \fnl(\xtnl,K\xtnl)$ are bounded by  
\begin{align}
\|\fnl(\xtnl,K\xtnl)\| &\le \betanl (1+\opnorm{K}^2)  V_0 \left(1 - \frac{1}{2 \opnorm{\Pky}} \right)^t. \label{eq:fnl_bound}\\
\opnorm{\nabla \fnl(\xtnl,K\xtnl)} &\le \beta (1+\|K\|_{\op})  V_0^{1/2} \left(1 - \frac{1}{2 \opnorm{\Pky}} \right)^{t/2}. \label{eq:grad_fnl_bound}
\end{align}
\end{enumerate}

\end{lemma}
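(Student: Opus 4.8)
The plan is to establish part~(a) by induction on $t$, with \Cref{lemma:nonlinear_lyapunov} supplying the one-step contraction, and then to read off part~(b) from part~(a) together with \Cref{lemma:jacobian}. Write $V_t \defeq \xtnl^\top \Pky \xtnl$. Since $\Pky = \dlyap{\sqrt{\gamma}(\Ajac+\Bjac K)}{Q+K^\top RK}$ and the Lyapunov series has nonnegative terms whose zeroth term is $Q + K^\top RK$, we have $\Pky \succeq Q \succeq I$, so $\norm{\xtnl}^2 \le V_t$ automatically, and it suffices to prove $V_t \le V_0(1 - \tfrac{1}{2\opnorm{\Pky}})^t$. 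The base case is the hypothesis. For the inductive step, assume $V_t \le V_0(1-\tfrac{1}{2\opnorm{\Pky}})^t \le V_0 \le \rnl^2/(4\czero^2\opnorm{\Pky}^3)$; then $\matx = \xtnl$ is admissible for \Cref{lemma:nonlinear_lyapunov}, and since $\xtnlp = \sqrt{\gamma}\,\Gnl(\xtnl,K\xtnl)$ it gives $V_{t+1} = \gamma\,\Gnl(\xtnl,K\xtnl)^\top \Pky \Gnl(\xtnl,K\xtnl) \le V_t(1-\tfrac{1}{2\opnorm{\Pky}}) \le V_0(1-\tfrac{1}{2\opnorm{\Pky}})^{t+1}$, closing the induction. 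The structural point that makes this go through is that the $\Pky$-sublevel set appearing in \Cref{lemma:nonlinear_lyapunov} is forward invariant under the damped dynamics (indeed $V_t$ is nonincreasing), so the admissibility hypothesis is never violated along the trajectory.

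For part~(b) I first verify the domain condition $\norm{\xtnl}+\norm{K\xtnl}\le\rnl$ needed to apply \Cref{lemma:jacobian} at $(\matx,\matu)=(\xtnl,K\xtnl)$. From $\Pky \succeq K^\top RK \succeq K^\top K$ we get $\opnorm{K}\le\opnorm{\Pky}^{1/2}$; then, using $\norm{\xtnl}\le\sqrt{V_t}\le\sqrt{V_0}$, $\opnorm{\Pky}\ge 1$, and $\czero=\betanl\ge1$, we have $\norm{\xtnl}+\norm{K\xtnl} \le (1+\opnorm{\Pky}^{1/2})\sqrt{V_0} \le 2\opnorm{\Pky}^{1/2}\cdot \tfrac{\rnl}{2\czero\opnorm{\Pky}^{3/2}} = \tfrac{\rnl}{\czero\opnorm{\Pky}} \le \rnl$. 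With the domain condition in hand, \Cref{lemma:jacobian} gives $\norm{\fnl(\xtnl,K\xtnl)} \le \betanl(\norm{\xtnl}^2+\norm{K\xtnl}^2) \le \betanl(1+\opnorm{K}^2)\norm{\xtnl}^2 \le \betanl(1+\opnorm{K}^2)V_t$, and substituting the bound on $V_t$ from part~(a) yields \eqref{eq:fnl_bound}. Identically, $\opnorm{\nabla\fnl(\xtnl,K\xtnl)} \le \betanl(\norm{\xtnl}+\norm{K\xtnl}) \le \betanl(1+\opnorm{K})\norm{\xtnl} \le \betanl(1+\opnorm{K})V_t^{1/2}$, and the bound on $V_t$ gives \eqref{eq:grad_fnl_bound}.

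There is no genuinely hard step: the argument is a forward-invariance induction followed by a substitution. The two points requiring a little care are (i) confirming $\Pky\succeq I$, which is what lets the admissibility hypothesis of \Cref{lemma:nonlinear_lyapunov} be checked using only the (nonincreasing) scalar $V_t$, and (ii) the elementary bound $\opnorm{K}\le\opnorm{\Pky}^{1/2}$, which lets the radius condition $V_0\le\rnl^2/(4\czero^2\opnorm{\Pky}^3)$ absorb the factor $1+\opnorm{K}$ when checking the domain of validity of \Cref{lemma:jacobian}; both are immediate from $\Pky=\dlyap{\sqrt{\gamma}(\Ajac+\Bjac K)}{Q+K^\top RK}$ and $Q,R\succeq I$.
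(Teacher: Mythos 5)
Your proof is correct and takes the same route as the paper's (which is only a one-line sketch: repeated application of \Cref{lemma:nonlinear_lyapunov} for part (a), then \Cref{lemma:jacobian} for part (b)). Your write-up usefully fills in the details the paper elides — the forward-invariance induction that keeps the admissibility hypothesis of \Cref{lemma:nonlinear_lyapunov} valid along the trajectory, and the verification via $\opnorm{K}\le\opnorm{\Pky}^{1/2}$ and $\Pky\succeq I$ that $\norm{\xtnl}+\norm{K\xtnl}\le\rnl$ so \Cref{lemma:jacobian} applies.
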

\begin{proof}
 The proof of $(a)$ follows by repeatedly applying \Cref{lemma:nonlinear_lyapunov}. Part $(b)$ follows from the first after using \Cref{lemma:jacobian}. The statement of the lemma in the main body follows from using $\opnorm{\Pky} \leq \traceb{\Pky} = \Jlin(K \mid \gamma)$ and simplifying $(1-x)^t \leq \exp(-tx)$.
\end{proof}

\textbf{Relating $\Gnl$ to its Jacobian Linearization.} Having established how any controller that achieves finite LQR cost is guaranteed to be stabilizing for the nonlinear system, we now go a step further and illustrate how this stability guarantee can be used to prove that the difference in costs and gradients between $\Gnl$ and its Jacobian linearization are guaranteed to be small.

\newtheorem*{prop:diff_lin_nl_restated}{\Cref{prop:diff_lin_nl} (restated)}
\begin{prop:diff_lin_nl_restated} 
Assume $\Jlin(K \mid \gamma) < \infty$. Then, 
\begin{enumerate}[a)]
	\item If $r \leq \frac{\rnl}{2 \czero \opnorm{\Pky}^2}$, then 	$\big| \Jnl(K \mid \gamma, r) - \Jlin(K \mid \gamma)\big| \leq 8  \dimx \czero \opnorm{\Pky}^4 \cdot r.$

	\item If $r \leq \frac{\rnl}{12 \betanl \opnorm{\Pky}^{5/2}}$, then, 
	\begin{align*}
		\fnorm{\grad_K \Jnl(K \mid \gamma, r) - \grad_K \Jlin(K \mid \gamma)} \leq 48 \dimx  \czero (1 + \opnorm{B})  \opnorm{\Pky}^{7} r.
	\end{align*}
\end{enumerate}
\end{prop:diff_lin_nl_restated}

%
\begin{proof}
Due to our assumption on $r = \norm{\matx_0}$, we have that, 
\begin{align*}
	\matx_0^\top \Pky \matx_0 \leq \opnorm{\Pky} \norm{\matx_0}^2 \leq \frac{1}{4 \czero^2 \opnorm{\Pky}^3}.
\end{align*}
Therefore, we can apply \Cref{lemma:similarity_of_costs} to conclude that, 
\begin{align*}
\big|\Jnl(K \mid \gamma, \matx_0) - \Jlin(K \mid \gamma, \matx_0)\big| \leq 	8 \czero \norm{\matx_0}^3 \opnorm{\Pky}^4.
\end{align*}
Next, we multiply both sides by $\dimx / r^2$, take expectations, and apply Jensen's inequality to get that, 
\begin{align*}
	\big| \frac{\dimx }{r^2} \E_{\matx_0 \sim r \cdot \dxsphere}\Jnl(K \mid \gamma, \matx_0)  - \frac{\dimx}{r^2} \E_{\matx_0 \sim r \cdot \dxsphere} \Jlin(K \mid \gamma, \matx_0) \big| \leq 	8 \frac{\dimx}{r^2} \czero \opnorm{\Pky}^4 \E_{\matx_0 \sim r \cdot \dxsphere}\norm{\matx_0}^3.
\end{align*}
Given our definitions of the linear objective in 
\Cref{def:lqr_objective}, we have that,
\[
 \frac{\dimx}{r^2} \E_{\matx_0 \sim r \cdot \dxsphere} \Jlin(K \mid \gamma, \matx_0) = \Jlin(K \mid \gamma),
\] 
for all $r > 0$. Therefore, we can rewrite the inequality above as,
\begin{align*}
	\big| \Jnl(K \mid \gamma, r) - \Jlin(K \mid \gamma)\big| \leq 8  \dimx \czero \opnorm{\Pky}^4 \cdot r.
\end{align*}
The second part of the proposition uses the same argument as part a, but this time employing \Cref{lemma:grad_pointwise_closeness} to bound the difference in gradients (pointwise).
\end{proof}

In short, this previous lemma states that if the cost on the linear system is bounded, then the costs and gradients between the nonlinear objective and its Jacobian linearization are close. We can also prove the analogous statement which establishes closeness while assuming that the cost on the nonlinear system is bounded. 

\begin{lemma}
\label{lemma:jnl_close}
Let $\alpha > 1$ be such that $80\dimx^2 \Jnl(K \mid \gamma, r) \leq \alpha$. 
\begin{enumerate}
	\item If $r \leq \frac{\rnl^2}{64\betanl \alpha^2 (1 + \opnorm{K})}$, then $|\Jnl(K \mid \gamma, r) - \Jlin(K\mid \gamma)| \leq 8 \dimx \betanl \alpha^4 r$.
	\item 	If $r \leq \frac{1}{12 \betanl \alpha^{5/2}}$, then $	\fnorm{\grad_K \Jnl(K \mid \gamma, r) - \grad_K \Jlin(K \mid \gamma)} \leq 48 \dimx  \czero (1 + \opnorm{B})  \alpha^{7} \cdot r.
$
\end{enumerate}
\end{lemma}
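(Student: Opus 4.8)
The plan is to reduce both parts to \Cref{prop:diff_lin_nl}; the only genuine work is to show that the \emph{linear} cost is finite and quantitatively controlled, namely
\[
\Jlin(K\mid\gamma)\;=\;\traceb{\Pky}\;\le\;2\,\Jnl(K\mid\gamma,r)\;\le\;\frac{\alpha}{40\,\dimx^{2}}\;\le\;\alpha ,
\]
so in particular $\opnorm{\Pky}\le\alpha$. Granting this, and using $\rnl\le1$, the radius hypotheses of \Cref{lemma:jnl_close} are, after substituting $\opnorm{\Pky}\le\alpha$, at least as strong as the hypotheses $r\le\rnl/(2\czero\opnorm{\Pky}^{2})$ and $r\le\rnl/(12\betanl\opnorm{\Pky}^{5/2})$ required by \Cref{prop:diff_lin_nl}~(a),(b) (each threshold there only decreases when $\opnorm{\Pky}$ is replaced by $\alpha$), so substituting $\opnorm{\Pky}\le\alpha$ into the conclusions of that proposition yields exactly the two asserted bounds after collecting constants. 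Thus everything reduces to the displayed bound on $\traceb{\Pky}$.

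To obtain that bound I would run the comparison of \Cref{prop:diff_lin_nl} ``in reverse'' via a bootstrap in the initial-state radius. The elementary input is that, since $Q\succeq I$, any trajectory of the damped dynamics $\matx_{t+1}=\sqrt{\gamma}\,\Gnl(\matx_t,K\matx_t)$ obeys $\|\matx_t\|^{2}\le\Jnl(K\mid\gamma,\matx_0)$ for every $t$ and $\sum_t\|\matx_t\|^{2}\le\Jnl(K\mid\gamma,\matx_0)$, so finite nonlinear cost forces $\matx_t\to0$. First I would argue that $\Acl:=\sqrt{\gamma}(\Ajac+\Bjac K)$ is a stable matrix: if $\rho(\Acl)\ge1$ then, by \Cref{lemma:jacobian}, the nonlinear increment is a perturbation of $\Acl$ whose size is controlled by the (small) trajectory magnitude, the origin is a non-attracting equilibrium of the damped linearization, and a positive-measure set of initial conditions on $r\cdot\dxsphere$ fails to converge to $0$; since $\Jnl(K\mid\gamma,r)<\infty$ forces $\Jnl(K\mid\gamma,\matx_0)<\infty$ for a.e.\ $\matx_0$ on the sphere, this is a contradiction. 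With $\Acl$ stable, $\Pky=\dlyap{\Acl}{Q+K^{\top}RK}$ is a legitimate object and \Cref{prop:diff_lin_nl}~(a) applies: the set of radii $\rho\in(0,r]$ for which $\traceb{\Pky}\le2\,\Jnl(K\mid\gamma,\rho)$ is nonempty (it holds in the limit $\rho\to0$, where $\Jnl(K\mid\gamma,\rho)\to\traceb{\Pky}$), and on that set \Cref{prop:diff_lin_nl}~(a) sharpens the estimate to $|\traceb{\Pky}-\Jnl(K\mid\gamma,\rho)|\le8\,\dimx\,\betanl\,\opnorm{\Pky}^{4}\rho$, which for $\rho$ below the threshold in the lemma is at most $\tfrac12\Jnl(K\mid\gamma,\rho)$; a continuity/connectedness argument then propagates the bound up to $\rho=r$, giving $\traceb{\Pky}\le2\,\Jnl(K\mid\gamma,r)$.

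The main obstacle is precisely this first step: we cannot invoke the Lyapunov apparatus until we know $\Pky$ exists, and the delicate regime is $\rho(\Acl)$ close to (or equal to) $1$, where a \emph{single} initial condition can have finite nonlinear cost even though $\Jlin(K\mid\gamma)=\infty$. This is why the ``instability implies infinite cost'' implication genuinely needs the averaged behavior over the whole sphere $r\cdot\dxsphere$ — a purely quadratic remainder cannot cancel the destabilizing linear growth along every direction at once — and, concretely, one works with the summed second-moment matrix $\tfrac{\dimx}{r^{2}}\,\E_{\matx_0\sim r\cdot\dxsphere}\sum_{t\ge0}\matx_t\matx_t^{\top}$, whose trace is at most $\Jnl(K\mid\gamma,r)$ and which, once $\Acl$ is known stable, satisfies a perturbed discrete Lyapunov identity for $\Acl$ with perturbation of order $\betanl(1+\opnorm{K}^{2})\,\E[\Jnl(K\mid\gamma,\matx_0)^{3/2}]=o(1)$ as $r\to0$. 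Once $\traceb{\Pky}$ has been pinned down, the remaining steps — verifying the radius inequalities and tracking the numerical constants — are routine and parallel the proof of \Cref{prop:diff_lin_nl} given above.
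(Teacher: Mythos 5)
Your reduction to \Cref{prop:diff_lin_nl} is exactly the paper's strategy, and you correctly identify that the entire content of the lemma is the bound $\Jlin(K\mid\gamma)\lesssim\alpha$. The gap is at precisely the step you flag as the main obstacle. The paper obtains this bound from \Cref{prop:finite_horizon}, a \emph{quantitative} lower bound $80\dimx^2\,\Jnl(K\mid\gamma,r)\ge\min\{\Jlin(K\mid\gamma),\alpha\}$, proved by decomposing the finite-horizon nonlinear trajectory into its linear part plus the accumulated nonlinear disturbance (\Cref{lem:nl_lb}) and an anti-concentration estimate for the initial direction. Your substitute --- ``if $\rho(\Acl)\ge1$ then a positive-measure set of initial conditions fails to converge to $0$, so the averaged cost is infinite'' --- is false in the marginal case $\rho(\Acl)=1$. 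Example: the scalar system $\Gnl(x,u)=x+u-\beta x|x|$ with $K=0$ satisfies \Cref{ass:smooth_dynamics}, has $\Acl=1$ so $\Jlin(K\mid 1)=\infty$, yet \emph{every} small initial condition converges to $0$ like $1/(\beta t)$ with \emph{finite} cost $\sum_t x_t^2\approx |x_0|/\beta$. The lemma survives here only because the normalized cost is of order $1/(r\beta)$, which is too large to satisfy the hypothesis $80\dimx^2\Jnl(K\mid\gamma,r)\le\alpha$ for any $r$ in the admissible range of part (1) --- and detecting that incompatibility requires exactly the quantitative lower bound you do not supply. No qualitative topological argument (unstable manifolds, positive-measure escape sets) can see it.

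Two further problems. First, the bootstrap in the radius $\rho$ does not close: to apply \Cref{prop:diff_lin_nl} at radius $\rho$ and absorb the error into $\tfrac12\Jnl(K\mid\gamma,\rho)$ you need $\rho$ below a threshold of order $1/(\betanl\opnorm{\Pky}^{3})$, which is stated in terms of the as-yet-uncontrolled $\opnorm{\Pky}$ rather than $\alpha$; if $\opnorm{\Pky}\gg\alpha$ the continuation stops far short of $\rho=r$, and since $\Jnl(K\mid\gamma,\cdot)$ is not monotone in the radius you cannot transfer the resulting bound at small $\rho$ to the radius $r$ at which the hypothesis is actually given. Second, your closing remark that the second-moment matrix ``once $\Acl$ is known stable, satisfies a perturbed Lyapunov identity'' is circular for the purpose of establishing that stability. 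The repair is to prove a lower bound of the type of \Cref{lem:nl_lb}/\Cref{prop:finite_horizon}, which is where essentially all the work behind this lemma lives in the paper.
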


\begin{proof}
The lemma is a consequence of combining \Cref{prop:diff_lin_nl} and \Cref{prop:finite_horizon}.  In particular, from \Cref{prop:finite_horizon} if $r \leq \min \{ \alpha \rnl^2, \; \frac{\dimx}{64 \alpha^2 \betanl (1 + \opnorm{K})}\}$, then 
\begin{align*}
	80 \dimx^2 \Jnl(K \mid \gamma, r) \geq \min\{ \Jlin(K \mid \gamma), \alpha \}
\end{align*}
However, since $\alpha \geq 80 \dimx^2 \Jnl(K \mid \gamma, r)$, we conclude that $80 \dimx^2 \Jnl(K \mid \gamma, r) \geq  \Jlin(K \mid \gamma)$. Having shown that the linear cost is bounded, we can now plug in \Cref{prop:diff_lin_nl}. In particular, if 
\begin{align*}
	r \leq \frac{\rnl}{2\betanl \alpha^2} \leq \frac{\rnl}{2\betanl \Jlin(K \mid \gamma)^2}
\end{align*} 
then, \Cref{prop:diff_lin_nl} states that 
\begin{align*}
	|\Jnl(K \mid \gamma, r) - \Jlin(K \mid \gamma)| \leq 8 \dimx \betanl \Jlin(K \mid \gamma)^4 r \leq 8 \dimx \betanl \alpha^4 r. 
\end{align*}
To prove the second part of the statement, we again use \Cref{prop:diff_lin_nl}. In particular, since
\begin{align*}
	r \leq \frac{1}{12 \betanl \alpha^{5/2}} \leq \frac{1}{12 \betanl \Jlin(K \mid \gamma)^{5/2}}
\end{align*}
we can hence conclude that 
\begin{align*}
	\fnorm{\grad_K \Jnl(K \mid \gamma, r) - \grad_K \Jlin(K \mid \gamma)} \leq 48 \dimx  \czero (1 + \opnorm{B})  \alpha^{7} \cdot r.
\end{align*}

\end{proof}

\subsection{Discount annealing on nonlinear systems: Proof of \Cref{theorem:nonlinear_algorithm}}

As in \Cref{theorem:linear_algorithm}, we first prove parts $c)$ and $d)$ by induction and then prove parts $a)$ and $b)$ separately. 

\subsubsection{Proof of $c)$ and $d)$}

\textbf{Base case.} As before, at each iteration $t$ of discount annealing, policy gradients is initialized at $K_{t,0} \defeq K_t $ and computes updates according to, 
\begin{align*}
	K_{t,j+1} = K_{t,j} - \eta \cdot \epsgrad\left( 
	\Jnl(\cdot \mid \gamma_t, \rst), K_{t,j}\right).
\end{align*}
To prove correctness, we show that the noisy gradients on the nonlinear system are close to the true gradients on the \emph{linear} system. That is, 
\begin{align}
\label{eq:theorem2_gradclose}
	\fnorm{\epsgrad\left( 
	\Jnl(\cdot \mid \gamma_t, \rst), K_{t,j}\right) - \grad \Jlin(K_{t,j} \mid \gamma_t) } \leq \calC_{\mathrm{pg},t} \dimx \quad \forall j \geq 0,
\end{align}
where $\calC_{\mathrm{pg},t} = \poly(1/\opnorm{A}, 1/\opnorm{B}, 1/\Jlin(K_{t} \mid \gamma_t))$ is again a fixed polynomial from \Cref{lemma:noisy_pg_convergence}. 

Consider the first iteration of discount annealing, by choice of $\gamma_0$, we have that $\Jlin(K_0 \mid \gamma_0) < \infty$. Therefore, by \Cref{prop:diff_lin_nl} if 
\begin{align*}
	\rst \leq  \min\left\{ \frac{\rnl}{12 \betanl \Jlin(K_{0,0} \mid \gamma_0)}, \frac{  \calC_{\mathrm{pg},0}}{100   \czero (1 + \opnorm{B})  \Jlin(K_{0,0} \mid \gamma_0)^{7}}\right\}
\end{align*}
it must hold that $\fnorm{\grad
	\Jnl(K_{0,0} \mid \gamma_0, \rst) - \grad \Jlin(K_{0,0} \mid \gamma_0) } \leq .5\calC_{\mathrm{pg},0} \dimx$. Likewise, if we choose the tolerance parameter $\epsilon \leq .5\calC_{\mathrm{pg},0} \dimx$ in $\epsgrad$ then we have that $$\fnorm{\epsgrad
	\left( \Jnl(\cdot  \mid \gamma_0, \rst), K_{0,0} \right) - \Jnl(K_{0,0} \mid \gamma_0, \rst)} \leq .5\calC_{\mathrm{pg},0} \dimx.$$
By the triangle inequality, the inequality in \Cref{eq:theorem2_gradclose} holds for $t=0$ and $j=0$. However, because \Cref{lemma:noisy_pg_convergence} shows that policy gradients is a descent method, that is $\Jlin(K_{0,j} \mid \gamma_0) \leq \Jlin(K_{0,0} \mid \gamma_0)$ for all $j \geq 0$, \Cref{eq:theorem2_gradclose} also holds for all $j \geq 0 $ for the same choice of $\rst$ and tolerance parameter for $\epsgrad$. By guarantee of \Cref{lemma:noisy_pg_convergence}, for $t=0$, policy gradients achieves the guarantee outlined in Step 2 using at most $\poly(\opnorm{A}, \opnorm{B}, \Jlin(K_0 \mid \gamma_0))$ many queries.

To prove that random search achieves the guarantee outlines in Step 4 at iteration 0 of discount annealing, we appeal to \Cref{lemma:random_search}. In particular, we instantiate the lemma with $f \leftarrow \Jnl(K_1 \mid \cdot, \rst)$, $f_1 \leftarrow 8\Jnl(K_1 \mid \gamma_0, \rst)$, $f_2 \leftarrow 2.5 \Jnl(K_1 \mid \gamma_0, \rst)$. As before, the algorithm requires values $\fbar_1 \in [2.9 \Jnl(K_1 \mid \gamma_0, \rst), 3\Jnl(K_1 \mid \gamma_0)]$ and $\fbar_2 \in [6 \Jnl(K_1 \mid \gamma_0, \rst), 6.1\Jnl(K_1 \mid \gamma_0)]$. These can be estimated via two calls to $\epseval$ with tolerance parameter $.01 \dimx$.

To show the lemma applies we only need to lower bound the width of feasible $\gamma$ such that 
\begin{align}
\label{eq:rs_nl}
	2.9 \Jnl(K_1 \mid \gamma_0, \rst) \leq \Jnl(K_1 \mid \gamma, \rst) \leq 6.1  \Jnl(K_1 \mid \gamma_0, \rst)
\end{align} 
From the guarantee from policy gradients, we know that $\Jlin(K_1 \mid \gamma_0) \leq 2\traceb{\Pst}$. Furthermore, from the proof of \Cref{theorem:linear_algorithm}, we know that there exists $\gamma'', \gamma' \in [0,1]$ satisfying, $\gamma'' - \gamma' \geq \frac{1}{5200\traceb{\Pst}^4} \gamma_0$, such that for all $\gamma \in [\gamma', \gamma'']$ 
\begin{align}
\label{eq:rs_lin}
	3 \Jlin(K_1\mid \gamma_0) \leq \Jlin(K_1 \mid \gamma) \leq 6 \Jlin(K_1 \mid \gamma_0).
\end{align}
To finish the proof of correctness, we show that any $\gamma$ that satisfies \Cref{eq:rs_lin} must also satisfy \Cref{eq:rs_nl}. In particular, since $\Jlin(K_1 \mid \gamma_0) \leq 2\traceb {\Pst}$ and $\Jlin(K_1 \mid \gamma) \leq 12 \traceb{\Pst}$ for 
\begin{align*}
	\rst \leq \min \left\{ \frac{\rnl}{2\betanl (12 \traceb{\Pst})^2},\; \frac{.01}{8 \betanl (12 \traceb{\Pst})^4} \right\},
\end{align*}
it holds that $|\Jnl(K_1 \mid \gamma_0, \rst) - \Jlin(K_1 \mid \gamma_0)| \leq .01 \dimx$ and $|\Jnl(K_1 \mid \gamma, \rst) - \Jlin(K_1 \mid \gamma)| \leq .01 \dimx$. Using these two inequalities along with \Cref{eq:rs_lin} implies that \Cref{eq:rs_nl} must also hold. Therefore, the width of the feasible region is at least $1 / (5200 \traceb{\Pst}^4)$ and random search must return a discount factor using at most 1 over this many iterations by \Cref{lemma:random_search}.

\paragraph{Inductive step.} To show that policy gradients converges, we can use the exact same argument as when arguing the base case where instead of referring to $\Jlin(K_0 \mid \gamma_0)$ and $\Jnl(K_{0} \mid \gamma_0, \rst)$ we use $\Jlin(K_t \mid \gamma_t)$ and $\Jnl(K_{t} \mid \gamma_t, \rst)$. In particular, we can reuse the same argument as in the base case, but need to ensure that is that $\sup_{t \geq 1} \Jlin(K_t \mid \gamma_t)$ is uniformly bounded. 

To prove this, from the inductive hypothesis on the correctness of binary search at previous iterations, we know that $\Jnl(K_t \mid \gamma_t) \leq 8 \Jnl(K_t \mid \gamma_{t-1}, \rst)$. Again by the inductive hypothesis, at time step $t-1$ policy gradients achieves the desired guarantee from Step 2, implying that $\Jlin(K_t \mid \gamma_{t-1}) \leq 2 \traceb{\Pst}$. By choice of $\rst$ this implies that $$|\Jlin(K_t \mid \gamma_{t-1}) - \Jnl(K_t \mid \gamma_{t-1}, \rst)| \leq .01 \dimx$$
and hence $\Jnl(K_t \mid \gamma_t) \leq 20 \traceb{\Pst}$. Now, we can apply \Cref{lemma:jnl_close} to conclude that for $\alpha \defeq (80 \times 20) \dimx^2 \traceb{\Pst}$ and 
\begin{align*}
	\rst \leq \min \left\{ \frac{\rnl^2}{64 \betanl \alpha^2 (1 + \opnorm{K})}, \; \frac{.01}{8 \betanl \alpha^4} \right\}
\end{align*}
it holds that $|\Jnl(K_t \mid \gamma_t , \rst) - \Jlin(K_t \mid \gamma_t)| \leq .01 \dimx $ and hence $\Jlin(K_t \mid \gamma_t) \leq 21 \traceb{\Pst}$. Therefore, $\sup_{t\geq 1} \Jlin(K_t \mid \gamma_t) \leq 21 \traceb{\Pst}$. 

Similarly, the inductive step for the random search procedure follows from noting that the exact same argument can be repeated by replacing $\Jnl(K_1 \mid \gamma_0, \rst)$ with $\Jnl(K_t \mid \gamma_{t-1})$ and $\Jlin(K_1 \mid \gamma_{0})$ with $\Jlin(K_t \mid \gamma_{t-1})$ since (by the inductive hypothesis) $\Jlin(K_{t}\mid \gamma_{t-1}) \leq 2\traceb{\Pst}$.
\subsubsection{Proof of $a)$}
 By the guarantee from Step 2, the algorithm returns a $\Khat$ which satisfies the following guarantee on the \emph{linear} system: 
\begin{align*}
	\Jlin(\Khat \mid 1) - \min_K \Jlin(K \mid 1) \leq \dimx.
\end{align*}
Therefore, $\Jlin(\Khat \mid 1) \leq 2 \traceb{\Pst}$. Now by, \Cref{lemma:exp_convergence}, the following holds,\begin{align*}
	\norm{\xtnl}^2 &\leq \opnorm{\Phat} \norm{\matx_{0}}^2 \left(1 - \frac{1}{2 \opnorm{\Phat}} \right)^t \\  
	& \leq 2\traceb{\Pst} \norm{\matx_{0}}^2 \exp \left( - \frac{t}{4 \traceb{\Pst}}\right),
\end{align*}
for $\Phat  \defeq \dlyap{A+B\Khat}{Q + \Khat^\top R \Khat}$ and all $\matx_0$ such that $\norm{\matx_0} \leq \rnl / (8 \betanl \traceb{\Pst}^2).$ 
\subsubsection{Proof of $b)$}
The bound for the number of subproblems solved by the discount annealing algorithms is similar to that of the linear case. The crux of the argument for part $b)$ is to show that any $\gamma' \in [\gamma_t, 1]$ such that 
\begin{align*}
	2.5 \Jnl(K_{t+1} \mid \gamma_t, \rst) \leq \Jnl(K_{t+1} \mid \gamma', \rst) \leq 8 \Jnl(K_{t+1} \mid \gamma_t, \rst)
\end{align*}
the following inequality must also hold: $\Jlin(K_{t+1} \mid \gamma') \geq 2 \Jlin(K_{t+1} \mid \gamma_t)$. Once we've lower bounded the cost on the \emph{linear} system, we can repeat the same argument as in Theorem 1.  Since the cost on the linear system in nondecreasing in $\gamma$, it must be the case that $\gamma'$ satisfies
\begin{align*}
	\gamma' \geq \left(\frac{1}{8 \opnorm{P_{K_{t+1}, \gamma_{t}}}^4}  + 1 \right)^2 \gamma_t \geq \left(\frac{1}{ 128 \traceb{\Pst}^4}  + 1 \right)^2 \gamma_t.
\end{align*}
Here, we have again we have used the calculation that,
\begin{align*}
\opnorm{P_{K_{t+1}, \gamma_{t}}}^4 \leq \traceb{P_{K_{t+1}, \gamma_{t}}}^4 = \Jlin(K_{t+1} \mid \gamma_t)^4 \leq (\min_K \Jlin(K\mid \gamma_t) + \dimx)^4 \leq 16 \traceb{\Pst}^4,
\end{align*}
which follows from the guarantee that (for our choice of $\rst$) policy gradients on the nonlinear system converges to a near optimal controller for the system's Jacobian linearization. Hence, as in the linear setting, we conclude that $\gamma_t \geq ( 1 / (128 \traceb{\Pst}^4)  + 1 )^{2t} \gamma_0$ and discount annealing achieves the same rate as for linear systems.

We now focus on establishing that $\Jlin(K_{t+1} \mid \gamma') \geq 2 \Jlin(K_{t+1} \mid \gamma_t)$. By guarantee from policy gradients, we have that $\Jlin(K_{t+1} \mid \gamma_t) \leq \min_K \Jlin(K \mid \gamma_t) + \dimx \leq 2\traceb{\Pst}$. Therefore, by \Cref{prop:diff_lin_nl} since
\begin{align*}
	\rst \leq \frac{.01 \rnl}{ 8 \times 2^4 \cdot \traceb{\Pst}^4\betanl} \leq \min \left\{\frac{\rnl}{2 \czero \opnorm{P_{K_{t+1}, \gamma_t}}^2} \;, \frac{.01  }{8  \czero \opnorm{P_{K_{t+1}, \gamma_t}}^4} \right\}
\end{align*}
it holds that $|\Jlin(K_{t+1} \mid \gamma_t) - \Jnl(K_{t+1} \mid \gamma_t, \rst) | \leq .01 \dimx$. 

Next, we show that $|\Jlin(K_{t+1} \mid \gamma') - \Jnl(K_{t+1} \mid \gamma', \rst) |$ is also small. In particular, the previous statement, together with the guarantee from Step 4, implies that 
\begin{align*}
 \Jnl(K_{t+1} \mid \gamma') \leq8\Jnl(K_{t+1} \mid \gamma_t) \leq 8(\Jlin(K_{t+1} \mid \gamma_t) + .01 \dimx) \leq 8.08 \Jlin(K_{t+1} \mid \gamma_t) \leq 16.16 \traceb{\Pst}.
\end{align*}
Therefore, for $\alpha \defeq (80 \times 17) \traceb{\Pst} \dimx^2$, \Cref{lemma:jnl_close} implies that if,
\begin{align*}
	r \leq \frac{\rnl^2}{64\betanl \alpha^2 (1 + \opnorm{K_{t+1}})},
\end{align*}
it holds that $|\Jnl(K_{t+1} \mid \gamma', r) - \Jlin(K_{t+1}\mid \gamma')| \leq 8 \dimx \betanl \alpha^4 r$. Hence, if $r \leq .01 / (8 \betanl \alpha^4 )$ we get that $|\Jlin(K_{t+1} \mid \gamma') -  \Jnl(K_{t+1} \mid \gamma', \rst)| \leq .01 \dimx$. Using again the fact that $\min\{ \Jlin(K \mid \gamma), \Jnl(K \mid \gamma, r)\} \geq \dimx$ for all $K,\gamma, r$ we hence conclude that 
\begin{align*}
	\Jlin(K_{t+1}\mid \gamma') &\geq .99\Jnl(K_{t+1} \mid \gamma', \rst) \\ 
		`& \geq 2.5 \times .99 \cdot \Jnl(K_{t+1} \mid \gamma_t, \rst) \\
		& \geq 2.5 \times .99^2 \cdot \Jlin(K_{t+1} \mid \gamma_t),
\end{align*}
which finished the proof of the fact that $\Jlin(K_{t+1} \mid \gamma') \geq 2 \Jlin(K_{t+1} \mid \gamma_t)$.

\subsection{Relating costs and gradients to the linear system: Proof of \Cref{prop:diff_lin_nl}}

In order to relate the properties of the nonlinear system to its Jacobian linearization, we employ the following version of the performance difference lemma. 

\begin{lemma}[Performance Difference]
\label{lemma:performance_difference}
Assume $\Jlin(K \mid \gamma) < \infty$ and define $\{\xtnl\}_{t=0}^\infty$ be the sequence of states generated according to $\xtnlp = \sqrt{\gamma}\Gnl(\xtnl, K \xtnl)$ where $\matx_{t, \mathrm{nl}} = \matx_0$. Then,
\begin{align*}
\Jnl(K \mid \gamma, \matx) - \Jlin(K \mid \gamma, \matx) = \sum_{t=0}^\infty \gamma \cdot \fnl(\xtnl, K\xtnl)^\top P_{K, \gamma} \left(\Gnl(\xtnl, K\xtnl)  + \Glin(\xtnl, K\xtnl)\right).
\end{align*}
\end{lemma}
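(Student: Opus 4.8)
The plan is a standard telescoping ("performance difference") argument organized around the value function of the linearized discounted problem, $V(\matx) \defeq \matx^\top \Pky \matx = \Jlin(K \mid \matx, \gamma)$. The engine is the Lyapunov identity defining $\Pky$ in \Cref{eq:def_pky} specialized to $(A,B) = (\Ajac,\Bjac)$: writing $\Acl \defeq \Ajac + \Bjac K$, rearranging that equation gives $Q + K^\top R K = \Pky - \gamma\,\Acl^\top \Pky \Acl$. First I would use this to rewrite the $t$-th stage cost $c_t \defeq \xtnl^\top(Q + K^\top R K)\xtnl$ of the damped nonlinear rollout. Since $\Acl \xtnl = \Ajac\xtnl + \Bjac(K\xtnl) = \Glin(\xtnl, K\xtnl)$, and since the dynamics $\xtnlp = \sqrt{\gamma}\,\Gnl(\xtnl, K\xtnl)$ give $V(\xtnlp) = \gamma\,\Gnl(\xtnl,K\xtnl)^\top \Pky \Gnl(\xtnl,K\xtnl)$, a one-line computation yields
\begin{align*}
c_t \;=\; V(\xtnl) \;-\; V(\xtnlp) \;+\; \gamma\Big(\Gnl^\top \Pky \Gnl \;-\; \Glin^\top \Pky \Glin\Big)\big(\xtnl, K\xtnl\big).
\end{align*}

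Next I would apply the polarization identity $a^\top M a - b^\top M b = (a - b)^\top M (a + b)$, valid for any symmetric $M$, with $M = \Pky$, $a = \Gnl(\xtnl, K\xtnl)$, $b = \Glin(\xtnl, K\xtnl)$, noting $a - b = \fnl(\xtnl, K\xtnl)$ by \Cref{lemma:jacobian}. This converts the bracketed term into $\gamma\,\fnl(\xtnl,K\xtnl)^\top \Pky\big(\Gnl(\xtnl,K\xtnl) + \Glin(\xtnl,K\xtnl)\big)$, i.e. exactly the summand in the claim. Summing over $t = 0, \dots, T-1$, the $V(\xtnl) - V(\xtnlp)$ terms telescope to $V(\matx_0) - V(\matx_{T,\mathrm{nl}}) = \Jlin(K \mid \matx, \gamma) - V(\matx_{T,\mathrm{nl}})$, giving
\begin{align*}
\sum_{t=0}^{T-1} c_t \;=\; \Jlin(K \mid \matx, \gamma) \;-\; V(\matx_{T,\mathrm{nl}}) \;+\; \sum_{t=0}^{T-1} \gamma\, \fnl^\top \Pky\big(\Gnl + \Glin\big)(\xtnl, K\xtnl).
\end{align*}

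Finally I would let $T \to \infty$: the left side converges to $\Jnl(K \mid \gamma, \matx)$ by \Cref{def:nonlinear_objective}, so it remains to verify $V(\matx_{T,\mathrm{nl}}) \to 0$ and absolute convergence of the remainder series. This is the only step requiring care, and the one I expect to be the main obstacle. Since $Q + K^\top R K \succeq Q \succeq I$, vanishing stage costs force $\|\xtnl\| \to 0$, hence $V(\matx_{T,\mathrm{nl}}) \le \opnorm{\Pky}\|\matx_{T,\mathrm{nl}}\|^2 \to 0$; in the regime in which this lemma is applied --- when $\matx_0$ lies in the region of attraction of \Cref{lemma:exp_convergence} --- one in fact has $\|\xtnl\| \to 0$ geometrically, which simultaneously delivers summability of the remainder (its summand is $O(\|\xtnl\|^3)$ by the bounds of \Cref{lemma:jacobian}, using $\opnorm{\Gnl(\xtnl,K\xtnl)}, \opnorm{\Glin(\xtnl,K\xtnl)} = O(\|\xtnl\|)$ and $\opnorm{\fnl(\xtnl,K\xtnl)} = O(\|\xtnl\|^2)$). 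Rearranging the displayed identity in the limit gives the stated formula. The per-step algebra (the first two steps) is entirely routine; the substance is the $T \to \infty$ passage, which is where \Cref{lemma:exp_convergence} (equivalently \Cref{lemma:nonlinear_lyapunov}) must be invoked to control $\matx_{T,\mathrm{nl}}$.
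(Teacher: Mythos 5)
Your proposal is correct and follows essentially the same route as the paper's proof: add and subtract $\xtnl^\top \Pky \xtnl$ to telescope against $\matx^\top \Pky \matx$, invoke the Lyapunov equation to replace $\xtnl^\top \Pky \xtnl$ by the stage cost plus $\gamma\,\Glin^\top \Pky \Glin$, and apply the polarization identity $a^\top M a - b^\top M b = (a-b)^\top M(a+b)$ to obtain the summand. Your explicit finite-$T$ truncation and justification that the boundary term $V(\matx_{T,\mathrm{nl}})$ vanishes is in fact more careful than the paper, which performs the infinite-sum manipulation directly without comment.
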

\begin{proof}
From the definition of the relevant objectives, and \Cref{fact:value_functions}, we get that,
\begin{align}
	\Jnl(K \mid \gamma, \matx) - \Jlin(K \mid \gamma, \matx) &= \left(\sum_{t=0}^\infty \xtnl^\top(Q + K^\top R K)\xtnl \right) - \matx^\top P_{K,\gamma}\matx \notag \\
	& = \left(\sum_{t=0}^\infty \xtnl^\top(Q + K^\top R K)\xtnl \pm \xtnl^\top P_{K,\gamma} \xtnl \right) - \matx^\top P_{K,\gamma}\matx \notag \\
	& = \sum_{t=0}^\infty \xtnl^\top(Q + K^\top R K)\xtnl + \matx_{t+1, \mathrm{nl}}^\top P_{K,\gamma} \matx_{t+1, \mathrm{nl}} - \xtnl^\top P_{K,\gamma}\xtnl, \label{eq:last_step_pd}
\end{align}
where in the last line we have used the fact that $\matx_{0, \mathrm{nl}} = \matx$.  The proof then follows from the following two observations. First, by definition of state sequence, $\matx_{t,\mathrm{nl}}$,
\begin{align*}
	\matx_{t+1, \mathrm{nl}}  = \sqrt{\gamma} \cdot \Gnl(\xtnl, K\xtnl).
\end{align*}
Second, since $P_{K,\gamma}$ is the solution to a Lyapunov equation,
\begin{align*}
\xtnl^\top P_{K,\gamma}\xtnl &= \xtnl^\top (Q + K^\top R K) \xtnl +  \gamma \cdot \xtnl^\top (A+BK)^\top P_{K,\gamma} (A+BK) \xtnl \\
& = \xtnl^\top (Q + K^\top R K) \xtnl  + \gamma \cdot \Glin(\xtnl, K \xtnl)^\top P_{K,\gamma} \Glin(\xtnl, K\xtnl).
\end{align*}
Plugging these last two lines into \Cref{eq:last_step_pd}, we get that $\Jnl(K \mid \gamma, \matx) - \Jlin(K \mid \gamma, \matx)$ is equal to,
\begin{align*}
 &= \sum_{t=0}^\infty \gamma \cdot \Gnl(\xtnl, K\xtnl)^\top \Pky \Gnl(\xtnl, K\xtnl) - \gamma \cdot  \Glin(\xtnl, K\xtnl)^\top P_{K,\gamma} \Glin(\xtnl, K\xtnl) \\ 
& = \sum_{t=0}^\infty \gamma \cdot \left( \Gnl(\xtnl, K\xtnl) - \Glin(\xtnl, K\xtnl)  \right) \Pky \left( \Gnl(\xtnl, K\xtnl) + \Glin(\xtnl, K\xtnl)  \right) \\
& = \sum_{t=0}^\infty \gamma \cdot \fnl(\xtnl, K\xtnl) \Pky \left( \Gnl(\xtnl, K\xtnl) + \Glin(\xtnl, K\xtnl)  \right).
\end{align*}
\end{proof}

\subsubsection{Establishing similarity of costs}

The following lemma follows by bounding the terms appearing in the performance difference lemma.

\begin{lemma}[Similarity of Costs]
\label{lemma:similarity_of_costs}
Assume $\Jlin(K \mid \gamma) < \infty$ and define $\{\xtnl\}_{t=0}^\infty$ be the sequence of states generated according to $\xtnlp = \sqrt{\gamma}\Gnl(\xtnl, K \xtnl)$ where $\matx_{t, \mathrm{nl}} = \matx_0$. For $\matx_0$ such that,
\begin{align*}
	\matx_0^\top \Pky \matx_0 \leq \frac{\rnl^2}{4 \czero^2 \opnorm{\Pky}^3},
\end{align*}
then,
\begin{align*}
\big|\Jnl(K \mid \gamma, \matx_0) - \Jlin(K \mid \gamma, \matx_0)\big| \leq 	8 \czero \norm{\matx_0}^3 \opnorm{\Pky}^4.
\end{align*}
\end{lemma}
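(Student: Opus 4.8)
The plan is to feed the trajectory estimates of \Cref{lemma:exp_convergence} into the Performance Difference identity of \Cref{lemma:performance_difference}. The hypothesis $V_0 := \matx_0^\top \Pky \matx_0 \le \rnl^2/(4\czero^2\opnorm{\Pky}^3)$ is exactly the one required by \Cref{lemma:exp_convergence}, so that lemma applies and gives, for every $t \ge 0$, the decay estimate $\norm{\xtnl}^2 \le \xtnl^\top \Pky \xtnl \le V_0\,(1 - \tfrac{1}{2\opnorm{\Pky}})^t$ (here $\Pky \succeq I$ since $\Pky \succeq Q \succeq I$); in particular $\norm{\xtnl} + \norm{K\xtnl} \le \rnl$ along the whole trajectory, so \Cref{lemma:jacobian} is applicable at each state, and $\sum_t \norm{\xtnl}^2 < \infty$, whence $\Jnl(K \mid \gamma, \matx_0) < \infty$. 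Since $\Jlin(K \mid \gamma, \matx_0) = \matx_0^\top \Pky \matx_0 < \infty$ by \Cref{fact:value_functions}, the identity
\[
\Jnl(K \mid \gamma, \matx_0) - \Jlin(K \mid \gamma, \matx_0) = \sum_{t=0}^\infty \gamma\cdot\fnl(\xtnl, K\xtnl)^\top \Pky\bigl(\Gnl(\xtnl, K\xtnl) + \Glin(\xtnl, K\xtnl)\bigr)
\]
holds as an equation between finite quantities, and it remains to bound the series term by term.

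For the remainder factor, \Cref{lemma:jacobian} together with $\Pky \succeq I$ and $\Pky \succeq K^\top R K \succeq K^\top K$ (using $Q, R \succeq I$) gives $\norm{\fnl(\xtnl, K\xtnl)} \le \betanl(\norm{\xtnl}^2 + \norm{K\xtnl}^2) \le 2\betanl\,\xtnl^\top \Pky \xtnl \le 2\betanl V_0\,(1 - \tfrac{1}{2\opnorm{\Pky}})^t$; it is important to use this sharp bound rather than \eqref{eq:fnl_bound} directly, since the latter carries an extra $(1 + \opnorm{K}^2)$ that would cost one more power of $\opnorm{\Pky}$. For the vector factor I would spend the single $\gamma$: bound $|\gamma\fnl^\top \Pky(\Gnl + \Glin)| \le \norm{\fnl}\,\opnorm{\Pky}\,\gamma\norm{\Gnl + \Glin}$, use one $\sqrt\gamma$ to rewrite $\Gnl(\xtnl, K\xtnl)$ as the next state, $\gamma\norm{\Gnl(\xtnl, K\xtnl)} = \sqrt\gamma\norm{\matx_{t+1,\mathrm{nl}}} \le (\matx_{t+1,\mathrm{nl}}^\top\Pky\matx_{t+1,\mathrm{nl}})^{1/2} \le V_0^{1/2}(1 - \tfrac{1}{2\opnorm{\Pky}})^{t/2}$, and for the linear part invoke the Lyapunov equation for $\Pky$, which yields $\gamma(A+BK)^\top(A+BK) \preceq \Pky$ and hence $\gamma\norm{\Glin(\xtnl, K\xtnl)} \le \sqrt\gamma\,(\xtnl^\top\Pky\xtnl)^{1/2} \le V_0^{1/2}(1 - \tfrac{1}{2\opnorm{\Pky}})^{t/2}$. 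Combining, each summand is at most $4\betanl\opnorm{\Pky}V_0^{3/2}(1 - \tfrac{1}{2\opnorm{\Pky}})^{3t/2}$.

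Finally I would sum the geometric series. Since $(1-x)^{3/2} \le 1 - x$ for $x \in [0,1]$, we get $\sum_{t \ge 0}(1 - \tfrac{1}{2\opnorm{\Pky}})^{3t/2} \le \bigl(1 - (1 - \tfrac{1}{2\opnorm{\Pky}})\bigr)^{-1} = 2\opnorm{\Pky}$, so $|\Jnl(K \mid \gamma, \matx_0) - \Jlin(K \mid \gamma, \matx_0)| \le 8\betanl\opnorm{\Pky}^2 V_0^{3/2}$; bounding $V_0 \le \opnorm{\Pky}\norm{\matx_0}^2$ turns this into $8\betanl\opnorm{\Pky}^{7/2}\norm{\matx_0}^3 \le 8\czero\opnorm{\Pky}^4\norm{\matx_0}^3$ (recall $\czero = \betanl$ and $\opnorm{\Pky} \ge 1$), which is the claim. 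I do not expect a genuine obstacle here: \Cref{lemma:performance_difference} and \Cref{lemma:exp_convergence} do the real work, and the only thing demanding care is the bookkeeping of the $\opnorm{\Pky}$ exponents — in particular using the sharp $\fnl$ bound and noticing that the $(1 - \tfrac{1}{2\opnorm{\Pky}})^{3t/2}$ geometric sum is still only $O(\opnorm{\Pky})$ — together with distributing the single factor $\gamma$ between the $\Gnl$ and $\Glin$ terms so that one $\sqrt\gamma$ converts $\Gnl(\xtnl, K\xtnl)$ into $\matx_{t+1,\mathrm{nl}}$ and the Lyapunov inequality absorbs the rest.
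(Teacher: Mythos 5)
Your proposal is correct and follows essentially the same route as the paper's proof: the performance difference identity of \Cref{lemma:performance_difference} combined with the decay estimates of \Cref{lemma:exp_convergence}, term-by-term bounding of the series, and a geometric sum contributing a factor $2\opnorm{\Pky}$. The only differences are bookkeeping — you absorb $\norm{\xtnl}^2+\norm{K\xtnl}^2 \le \xtnl^\top \Pky \xtnl$ directly where the paper carries a $(1+\opnorm{K}^2)$ factor and later converts it via $1+\opnorm{K}^2 \le 2\opnorm{\Pky}$ — and both yield the stated bound.
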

\begin{proof}
We begin with the following observation. Due to our assumption on $\matx_0$, we can use \Cref{lemma:exp_convergence} to conclude that for all $t \geq 0$, the following relationship holds for $V_0 \defeq \matx_{0, \mathrm{nl}}^\top P_{K,\gamma} \matx_{0, \mathrm{nl}}$, 
\begin{align}
\label{eq:exp_convergence}
	\norm{\xtnl}^2 \leq \xtnl^\top P_{K,\gamma} \xtnl \leq V_0  \cdot  \left(1- \frac{1}{2 \opnorm{P_{K,\gamma}}}\right)^t.
\end{align}
Now, from the performance difference lemma (\Cref{lemma:performance_difference}), we get that $\Jnl(K \mid \gamma, \matx) - \Jlin(K \mid \gamma, \matx)$ is equal to:
\begin{align*}
\sum_{t=0}^\infty \gamma \cdot \fnl(\xtnl, K\xtnl)^\top P_{K, \gamma} \left(\Gnl(\xtnl, K\xtnl)  + \Glin(\xtnl, K\xtnl)\right).
\end{align*}
Therefore, the difference $\big|\Jnl(K \mid \gamma, \matx_0) - \Jlin(K \mid \gamma, \matx_0)\big|$ can be bounded by,
\begin{align}
\label{eq:cs_sum}
\sum_{t=0}^\infty \underbrace{\norm{\sqrt{\gamma} \cdot \Pky^{1/2} \fnl(\xtnl, K\xtnl)}}_{\defeq T_1} \cdot  \underbrace{\norm{\sqrt{\gamma}\cdot \Pky^{1/2} \left(\Gnl(\xtnl, K\xtnl)  + \Glin(\xtnl, K\xtnl)\right)}}_{\defeq T_2}.
\end{align}
Now, we analyze each of $T_1$ and $T_2$ separately. For $T_1$, by \Cref{lemma:jacobian}, and the fact that $\gamma \leq 1$,
\begin{align*}
	\norm{\sqrt{\gamma} \cdot \Pky^{1/2} \fnl(\xtnl, K\xtnl)} &\leq \opnorm{P_{K, \gamma}}^{1/2} \norm{\fnl(\xtnl, K\xtnl)}\\
	& =  \opnorm{P_{K, \gamma}}^{1/2} \cdot \czero (\norm{\xtnl}^2 + \norm{K\xtnl}^2) \\
	& \leq \opnorm{P_{K, \gamma}}^{1/2} \czero  \cdot (\opnorm{K}^2 + 1) V_0 \cdot \left(1- \frac{1}{2 \opnorm{P_{K,\gamma}}}\right)^t,
\end{align*}
where in the last line, we have used our assumption on the initial state and \Cref{lemma:exp_convergence}. Moving onto $T_2$, we use the triangle inequality to get that
\begin{align*}
	T_2 \leq \norm{\sqrt{\gamma} \cdot P_{K, \gamma}^{1/2} \Gnl(\xtnl, K\xtnl)} + \norm{\sqrt{\gamma} \cdot P_{K, \gamma}^{1/2} \Glin(\xtnl, K\xtnl)}.
\end{align*}
For the second term above, by \Cref{lemma:dlyap} and \Cref{lemma:exp_convergence}, we have that
\begin{align*}
\norm{\sqrt{\gamma} \cdot  P_{K, \gamma}^{1/2} \Glin(\xtnl, K\xtnl)} &= \sqrt{\gamma \cdot \xtnl^\top (A+BK)^\top P_{K,\gamma} (A+BK) \xtnl} \\
& \leq \sqrt{\xtnl^\top P_{K,\gamma} \xtnl } \\
& \leq V_0^{1/2}  \left(1- \frac{1}{2 \opnorm{P_{K,\gamma}}}\right)^{t/2}.
\end{align*}
Lastly, we bound the first term by again using \Cref{lemma:exp_convergence},
\begin{align*}
\norm{\sqrt{\gamma} \cdot P_{K, \gamma}^{1/2} \Gnl(\xtnl, K\xtnl)} &= \norm{P_{K, \gamma}^{1/2} \matx_{t+1, \nl}} \leq V_0^{1/2} \left(1- \frac{1}{2 \opnorm{P_{K,\gamma}}}\right)^{\frac{t+1}{2}}.
\end{align*}
Therefore, $T_2$ is bounded by $2 V_0^{1/2} (1 - 1 / (2 \opnorm{\Pky}))^{t/ 2}$. Going back to \Cref{eq:cs_sum}, we can combine our bounds on $T_1$ and $T_2$ to conclude that,
\begin{align*}
\Jnl(K \mid \gamma, \matx) - \Jlin(K \mid \gamma, \matx) &\leq V_0^{3/2} \cdot 2 \czero  \opnorm{P_{K,\gamma}}^{1/2} (\opnorm{K}^2 + 1) \cdot \sum_{t=0}^\infty \left(1- \frac{1}{2 \opnorm{P_{K,\gamma}}}\right)^{t} \\
& = V_0^{3/2} \cdot 4 \czero  \opnorm{P_{K,\gamma}}^{3/2} (\opnorm{K}^2 + 1).
\end{align*}
Using the fact that $1 + \opnorm{K}^2 \leq 2\opnorm{\Pky}$ and $V_0 \leq \norm{\matx_0}^2 \opnorm{\Pky}$, we get that 
\begin{align*}
V_0^{3/2} \cdot 4 \czero  \opnorm{P_{K,\gamma}}^{3/2} (\opnorm{K}^2 + 1) \leq 8 \czero \norm{\matx_0}^3 \opnorm{\Pky}^4.
\end{align*}
\end{proof}

\subsubsection{Establishing similarity of gradients}

Much like the previous lemma which bounds the costs between the linear and nonlinear system via the performance difference lemma, this next lemma differentiates the performance difference lemma to bound the difference between gradients.
\begin{lemma}
\label{lemma:grad_pointwise_closeness}
 Assume $\Jlin(K \mid \gamma) <\infty$. If $\matx_0$ is such that
\begin{align*}
\matx_0 \Pky \matx_0 \leq \frac{\rnl^2}{144 \cone^2 \opnorm{\Pky}^4}
\end{align*}
then, 
\begin{align*}
\fnorm{\grad_K \Jnl(K \mid \gamma, \matx_0) - \grad_K \Jlin(K \mid \gamma, \matx_0)} \leq 48 \betanl (1 + \opnorm{B})  \opnorm{\Pky}^{7} \norm{\matx_0}^3. 
\end{align*}
\end{lemma}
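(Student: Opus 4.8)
The plan is to differentiate the performance difference identity of \Cref{lemma:performance_difference} and then bound, term by term, the resulting infinite sum, mirroring the proof of \Cref{lemma:similarity_of_costs} but with one new ingredient: the sensitivity of the damped nonlinear trajectory to the controller $K$. Writing $h_t \defeq \fnl(\xtnl, K\xtnl)$ and recalling that $\Glin(\xtnl,K\xtnl) = (A+BK)\xtnl$, \Cref{lemma:performance_difference} expresses $\Jnl(K\mid\gamma,\matx_0) - \Jlin(K\mid\gamma,\matx_0)$ as $\sum_{t\ge 0}\gamma\, h_t^\top\Pky\big(\Gnl(\xtnl,K\xtnl) + \Glin(\xtnl,K\xtnl)\big)$, whose only dependence on $K$ is through (i) the feedback input $K\xtnl$ appearing inside $h_t$ and $\Gnl$, (ii) the state trajectory $\{\xtnl\}$ itself, and (iii) the value matrix $\Pky$. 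Differentiating coordinatewise in $K_{ij}$ and reassembling, $\grad_K[\Jnl - \Jlin]$ becomes a finite collection of infinite sums: terms where the derivative hits $h_t$ (its explicit part controlled by $\opnorm{\grad\fnl(\xtnl,K\xtnl)} \le \betanl(\|\xtnl\| + \|K\xtnl\|)$ from \Cref{lemma:jacobian}, its implicit part by the trajectory sensitivity $\matDel_t \defeq \partial\xtnl/\partial K$), terms where it hits $\Gnl + \Glin$ (which is where the $\grad_\matu\Gnl \approx \Bjac$ and $B$ factors enter, accounting for the $(1+\opnorm{B})$ in the bound), and terms where it hits $\Pky$ (handled by the standard perturbation estimate for solutions of the discrete Lyapunov equation \Cref{eq:def_pky}, giving $\fnorm{\partial_{K_{ij}}\Pky} \le \poly(\opnorm{\Pky})$).

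As in \Cref{lemma:similarity_of_costs}, all the pointwise quantities decay geometrically: under the hypothesis $\matx_0^\top\Pky\matx_0 \le \rnl^2/(144\betanl^2\opnorm{\Pky}^4)$, which is strictly stronger than what \Cref{lemma:exp_convergence} needs, we have $\|\xtnl\|^2 \le \xtnl^\top\Pky\xtnl \le V_0\,(1 - \tfrac{1}{2\opnorm{\Pky}})^t$ with $V_0 = \matx_0^\top\Pky\matx_0 \le \opnorm{\Pky}\|\matx_0\|^2$, together with $\|h_t\| \le \betanl(1+\opnorm{K}^2)\,V_0\,(1-\tfrac{1}{2\opnorm{\Pky}})^t$, $\opnorm{\grad\fnl(\xtnl,K\xtnl)} \le \betanl(1+\opnorm{K})\,V_0^{1/2}(1-\tfrac{1}{2\opnorm{\Pky}})^{t/2}$, and --- using the $\sqrt\gamma$-weighting already present in every summand --- $\|\sqrt\gamma\,\Pky^{1/2}\Gnl(\xtnl,K\xtnl)\| = \|\Pky^{1/2}\xtnlp\| \le V_0^{1/2}(1-\tfrac{1}{2\opnorm{\Pky}})^{(t+1)/2}$ and the same bound for $\Glin$; recall also $1+\opnorm{K}^2 \le 2\opnorm{\Pky}$. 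Pairing factors through $\Pky^{1/2}$ and applying Cauchy--Schwarz termwise, every summand in $\grad_K[\Jnl-\Jlin]$ is bounded by (a fixed polynomial in $\opnorm{\Pky}$, $\betanl$, $\opnorm{B}$) times $V_0^{3/2}(1-\tfrac{1}{2\opnorm{\Pky}})^{ct}$ for a universal $c>0$; summing the geometric series costs a factor $O(\opnorm{\Pky})$, and substituting $V_0^{3/2} \le \opnorm{\Pky}^{3/2}\|\matx_0\|^3$ and collecting all powers of $\opnorm{\Pky}$ yields the claimed $48\betanl(1+\opnorm{B})\opnorm{\Pky}^7\|\matx_0\|^3$.

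The crux, and the step I expect to be the main obstacle, is controlling the trajectory sensitivity $\matDel_t = \partial\xtnl/\partial K$. Differentiating the recursion $\xtnlp = \sqrt\gamma\,\Gnl(\xtnl,K\xtnl)$ gives a linear, time-varying recursion $\matDel_{t+1} = \sqrt\gamma\big(\grad_\matx\Gnl(\xtnl,K\xtnl) + \grad_\matu\Gnl(\xtnl,K\xtnl)\,K\big)\matDel_t + (\text{forcing term proportional to }\grad_\matu\Gnl(\xtnl,K\xtnl)\text{ and }\xtnl)$, whose homogeneous transition equals $\sqrt\gamma(A+BK)$ plus an additive perturbation $E_t$ with $\opnorm{E_t} \le \betanl(1+\opnorm{K})^2\|\xtnl\| = O(\betanl\opnorm{\Pky}\|\xtnl\|)$ by \Cref{ass:smooth_dynamics} and \Cref{lemma:jacobian}. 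Since $\sqrt\gamma(A+BK)$ admits $\Pky$ as a Lyapunov certificate for $Q+K^\top RK \succeq I$, and since the smallness hypothesis on $\matx_0$ is calibrated precisely so that $\opnorm{E_t}$ stays within the perturbation tolerance of \Cref{lemma:stability_margin} along the (already contracting) trajectory, that lemma shows the homogeneous transition contracts in the $\Pky$-norm at a geometric rate; unrolling against the decaying forcing term and summing the resulting convolution gives $\|\Pky^{1/2}\matDel_t\| \le \poly(\opnorm{\Pky})\cdot\betanl\,V_0^{1/2}(1-\tfrac{1}{2\opnorm{\Pky}})^{t/2}$. Plugging this back into the differentiated identity and carrying out the geometric-series bookkeeping described above completes the proof; I expect the only real care to be needed in tracking the exact powers of $\opnorm{\Pky}$ through the sensitivity bound so that the final exponent is exactly $7$.
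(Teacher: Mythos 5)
Your proposal is correct and follows essentially the same route as the paper: differentiate the performance-difference identity, split each summand according to whether the derivative hits $\fnl$, the value matrix $\Pky$, or the trajectory, control the trajectory sensitivity $\Dk\xtnl$ through the perturbed linear recursion whose homogeneous part is $\sqrt{\gamma}(A+BK)$ plus a small perturbation handled by \Cref{lemma:stability_margin}, bound $\Dk\Pky$ via the perturbed Lyapunov equation, and sum the resulting geometric (convolution) series. The only detail you defer --- tracking the linear-in-$t$ factor from the convolution and the exact powers of $\opnorm{\Pky}$ so the exponent lands at $7$ --- is exactly the bookkeeping the paper carries out in \Cref{lem:final_bound_psit,lem:eq:final_dkfnl_bound,lem:eq:final_bound_dkpky}.
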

\begin{proof}
Using the variational definition of the Frobenius norm,
\begin{align*}
\fnorm{\grad_K \Jnl(K \mid \gamma, \matx) - \grad_K \Jlin(K \mid \gamma, \matx)} &= \sup_{\fnorm{\Delta} \leq 1} \traceb{(\grad_K \Jnl(K \mid \gamma, \matx) - \grad_K \Jlin(K \mid \gamma, \matx))^\top \Delta} \\
& = \sup_{\fnorm{\Delta} \leq 1} \Dk  \Jnl(K \mid \gamma, \matx) -  \Dk \Jlin(K \mid \gamma, \matx),
\end{align*}
where $\Dk$ is the directional derivative operator in the direction $\Delta$. The argument follows by bounding the directional derivative appearing above. From the performance difference lemma, \Cref{lemma:performance_difference}, we have that
\begin{align*}
 \Dk  \Jnl(K \mid \gamma, \matx) -  \Dk \Jlin(K \mid \gamma, \matx)  =  \sum_{t=0}^\infty \gamma \cdot \Dk \left[ \cdot \fnl(\xtnl, K\xtnl)^\top P_{K, \gamma} \xtnlp \right].
\end{align*}
Each term appearing in the sum above can be decomposed into the following three terms,
\begin{align*}
&\underbrace{\gamma \cdot \left(\Dk  \cdot \fnl(\xtnl, K\xtnl)\right)^\top P_{K,\gamma} \xtnlp}_{\defeq T_1} \\
&\qquad+  \underbrace{ \gamma \cdot \fnl(\xtnl, K\xtnl) ( \Dk P_{K,\gamma} ) \xtnlp}_{\defeq T_2} \\
&\qquad\qquad+ \underbrace{\gamma \cdot \fnl(\xtnl, K\xtnl)  P_{K,\gamma} (\Dk \xtnlp)}_{\defeq T_3}.
\end{align*}
In order to bound each of these three terms, we start by bounding the directional derivatives appearing above. Throughout the remainder of the proof, we will make repeated use of the following inequalities which follow from \Cref{lemma:jacobian}, \Cref{lemma:exp_convergence}, and our assumption on $\matx_0$. For all $t \geq 0$,
\begin{align}
	\norm{\xtnl}^2 &\leq V_0 \left(1 - \frac{1}{2 \opnorm{\Pky}} \right)^{t} \label{eq:exp_convergence_grad}\\ 
	\norm{\xtnl} + \norm{K \xtnl} & \leq \rnl. \label{eq:can_apply_jacobian}
\end{align}
\begin{lemma}[Bounding $\Dk \xtnl$.]\label{lem:final_bound_psit} Let $\{\xtnl\}$ be the sequence of states generated according to $\xtnlp = \sqrt{\gamma} \Gnl(\xtnl, K \xtnl)$. Under the same assumptions as in \Cref{lemma:grad_pointwise_closeness}, for all $t \geq 0$
\begin{align*}
\|\Dk \xtnl|| \leq t \cdot \opnorm{\Pky}^{1/2} \left(\frac{1}{6 \opnorm{\Pky}^2} + \opnorm{B}
\right) \left(1 - \frac{1}{2\opnorm{\Pky}}\right)^{\frac{t-1}{2}} V_0^{1/2}.
\end{align*}
\end{lemma}
\begin{proof} Taking derivatives, we get that
\begin{align*}
\Dk \fnl(\xtnl, K\xtnl) = \grad_{\matx, \matu} \fnl(\matx, \matu) \eval_{(\matx, \matu) = (\xtnl, K \xtnl)} \begin{bmatrix}
	\Dk \xtnl \\
	\Dk (K\xtnl)
\end{bmatrix}.
\end{align*}
Since, $\Dk (K\xtnl) = \Delta \xtnl + K (\Dk \xtnl)$, we can rewrite the expression above as, 
\begin{equation}
\label{eq:dkfnl}
\begin{aligned}
\Dk \fnl(\xtnl, K\xtnl) &=   \grad_{\matx, \matu} \fnl(\matx, \matu) \eval_{(\matx, \matu) = (\xtnl, K \xtnl)}\begin{bmatrix}
	0 \\
	\Delta
\end{bmatrix} \xtnl \\
&\qquad+ \grad_{\matx, \matu} \fnl(\matx, \matu) \eval_{(\matx, \matu) = (\xtnl, K \xtnl)}\begin{bmatrix}
	I \\
	K
\end{bmatrix} \Dk \xtnl.
\end{aligned}
\end{equation}
Next, we compute $\Dk \xtnl$,
\begin{align*}
	\Dk \xtnl &= \sqrt{\gamma} \cdot \left( \Dk \fnl(\xtnlm, K \xtnlm) + \Dk [(A+BK)\xtnlm ] \right)\\ 
	& = \sqrt{\gamma} \cdot \left( \Dk \fnl(\xtnlm, K \xtnlm) +  B\Delta \xtnlm + (A+BK) \Dk \xtnlm \right). 
\end{align*}
Plugging in our earlier calculation for $\Dk \fnl(\xtnl, K\xtnl)$, we get that the following recursion holds for $\psi_t \defeq \Dk \xtnl$,
\begin{align}
\psi_t \defeq \Dk \xtnl &= \underbrace{\sqrt{\gamma} \cdot \left[\grad_{\matx, \matu} \fnl(\matx, \matu) \eval_{(\matx, \matu) = (\xtnlm, K \xtnlm)}\begin{bmatrix}
	0 \label{eq:def_mt}\\
	\Delta
\end{bmatrix} + B\Delta \right] \xtnlm }_{\defeq m_{t-1}} \\
& + \underbrace{\sqrt{\gamma} \cdot \left[\grad_{\matx, \matu} \fnl(\matx, \matu) \eval_{(\matx, \matu) = (\xtnlm, K \xtnlm)}\begin{bmatrix}
	I \\
	K
\end{bmatrix} +   (A+BK)\right]}_{\defeq N_{t-1}}\psi_{t-1}. \label{eq:def_Nt}
\end{align}
Using the shorthand introduced above, we can re-express the recursion as, 
\begin{align*}
\psi_t = m_{t-1} + N_{t-1} \psi_{t-1}. 
\end{align*}
Unrolling this recursion, with the base case that $\psi_0 = \Dk \matx_{0, \mathrm{nl}} = 0$, we get that 
\begin{align*}
	\psi_{t} = \sum_{j=0}^{t-1} \left(\prod_{i=j+1}^{t-1} N_{i} \right) m_{j}.
\end{align*}
Therefore, 
\begin{align}
\label{eq:psit_bound}
	\norm{\psi_t} \leq \sum_{j=0}^{t-1} \opnormm{\prod_{i=j+1}^{t-1} N_{i}} \norm{m_j}.
\end{align}
Next, we prove that each matrix $N_i$ is stable so that the product of the $N_i$ is small. By \Cref{lemma:jacobian} and our earlier inequalities,\Cref{eq:exp_convergence_grad} \& \Cref{eq:can_apply_jacobian}, we have that,
\begin{align}
	\opnormm{\grad_{\matx, \matu} \fnl(\matx, \matu) \eval_{(\matx, \matu) = (\xtnlm, K \xtnlm)}\begin{bmatrix}
	I \\
	K
\end{bmatrix}} &\leq \opnormm{\grad_{\matx, \matu} \fnl(\matx, \matu) \eval_{(\matx, \matu) = (\xtnlm, K \xtnlm)}} \opnormm{\begin{bmatrix}
	I \\
	K
\end{bmatrix}}\notag \\ 
& \leq \cone  (1 + \opnorm{K})^2 \norm{\xtnlm}  \notag \\
& \leq  \frac{1}{6 \opnorm{P_{K,\gamma}}^2} \label{eq:grad_norm_bound},
\end{align} 
where we have used our initial condition on $V_0$.
Therefore,  $N_i = \sqrt{\gamma}(A+BK) + \Delta_{N_i}$ where $\opnorm{\Delta_{N_i}} \leq 1/ (6 \opnorm{P_{K,\gamma}}^2)$ for all $i$. By definition of the operator norm,  \Cref{lemma:stability_margin}, and the fact that $\Pky \succeq I$:
\begin{align*}
\opnormm{\prod_{i=j+1}^{t-1} N_i}^2 &= \opnormm{\prnn{\prod_{i=j+1}^{t-1} N_i}^\top \prnn{\prod_{i=j+1}^{t-1} N_i }}  \\
& \leq \opnormm{\prnn{\prod_{i=j+1}^{t-1} N_i }^\top \Pky \prnn{\prod_{i=j+1}^{t-1} N_i }} \\
& \leq \opnormm{\Pky} \left(1 - \frac{1}{2 \opnorm{\Pky}}  \right)^{t - j - 1}.
\end{align*}
Then, using our bound on $\grad_{\matx,\matu}\fnl$ from \Cref{eq:grad_norm_bound}, and \Cref{lemma:exp_convergence}, we bound $m_j$ (defined in \Cref{eq:def_mt}) as,
\begin{align*}
\norm{m_j} & \leq \opnormm{\sqrt{\gamma} \left[\grad_{\matx, \matu} \fnl(\matx, \matu) \eval_{(\matx, \matu) = (\matx_{j,\mathrm{nl}}, K\matx_{j,\mathrm{nl}})}\begin{bmatrix}
	0 \\
	\Delta
\end{bmatrix} + B\Delta \right]	} \norm{\matx_{j,\mathrm{nl}}} \\ 
& \leq \left(\frac{1}{6 \opnorm{\Pky}^2} + \opnorm{B}
\right) V_0^{1/2} \left(1 - \frac{1}{2\opnorm{\Pky}}\right)^{j/2}.
\end{align*}
Returning to our earlier bound on $\norm{\psi_t}$ in \Cref{eq:psit_bound}, we conclude that,
\begin{align}
	\norm{\psi_t} & \leq \sum_{j=0}^{t-1} \opnorm{\Pky}^{1/2} \left(1 - \frac{1}{2 \opnorm{\Pky}}  \right)^{(t - j - 1) / 2} \left(\frac{1}{6 \opnorm{\Pky}^2} + \opnorm{B}
\right) V_0^{1/2} \left(1 - \frac{1}{2\opnorm{\Pky}}\right)^{j/2} \notag \\
	& \leq t \cdot \opnorm{\Pky}^{1/2} \left(\frac{1}{6 \opnorm{\Pky}^2} + \opnorm{B}
\right) \left(1 - \frac{1}{2\opnorm{\Pky}}\right)^{\frac{t-1}{2}} V_0^{1/2}, \notag 
\end{align}
concluding the proof.
\end{proof}
Using this, we can now return to bounding $\Dk \fnl(\xtnl, K\xtnl)$.
\begin{lemma}[Bounding $\Dk \fnl(\xtnl, K\xtnl)$] \label{lem:eq:final_dkfnl_bound} For all $t \ge 0$, the following bound holds:
\begin{align*}
\norm{\Dk \fnl(\xtnl, K\xtnl)} &\leq 8  \cone \opnorm{\Pky}^2 (1 + \opnorm{B})  \cdot t \left(1 - \frac{1}{2\opnorm{\Pky}}\right)^{t-1/2} \norm{\matx_0}^2. \notag
\end{align*}
\end{lemma}
\begin{proof} 
From \Cref{eq:dkfnl}, $\norm{\Dk \fnl(\xtnl, K\xtnl)}$ is less than,
\begin{align*}
 & \opnormm{\grad_{\matx, \matu} \fnl(\matx, \matu) \eval_{(\matx, \matu) = (\xtnl, K \xtnl)}} \opnormm{\begin{bmatrix}
	0 \\
	\Delta
\end{bmatrix}} \norm{\xtnl} \\
&\qquad+ \opnormm{\grad_{\matx, \matu} \fnl(\matx, \matu) \eval_{(\matx, \matu) = (\xtnl, K \xtnl)}} \opnormm{\begin{bmatrix}
	I \\
	K
\end{bmatrix}} \norm{\Dk \xtnl} .
\end{align*}
Again using the assumption on the nonlinear dynamics, we can bound the gradient terms as in \Cref{eq:grad_norm_bound}
 and get that $\norm{\Dk \fnl(\xtnl, K\xtnl)}$ is no larger than,
\begin{align*}
& \cone (1 + \opnorm{K}) \norm{\xtnl}^2 + \cone(1 + \opnorm{K})^2 \norm{\xtnl} \norm{\Dk \xtnl} \\
 \leq  & 2 \cone (1 + \opnorm{K})^2 \norm{\xtnl} \max\{ \norm{\xtnl}, \; \norm{\Dk \xtnl} \}.
\end{align*}
Seeing as how our upper bound on $\norm{\Dk \xtnl}$ in \Cref{lem:final_bound_psit} is always larger than the bound for $\norm{\xtnl}$ in \Cref{eq:exp_convergence_grad}, we can bound $\max\{ \norm{\xtnl}, \; \norm{\Dk \xtnl} \}
$ by the former. Consequently, 
\begin{align}
\norm{\Dk \fnl(\xtnl, K\xtnl)} &\leq 2 \cone (1 + \opnorm{K})^2 \opnorm{\Pky}^{1/2} V_0 \left(\frac{1}{6 \opnorm{\Pky}^2} + \opnorm{B}
\right) \cdot t \left(1 - \frac{1}{2\opnorm{\Pky}}\right)^{t-1/2} \notag \\
& \leq 8  \cone \opnorm{\Pky}^2 (1 + \opnorm{B})  \cdot t \left(1 - \frac{1}{2\opnorm{\Pky}}\right)^{t-1/2} \norm{\matx_0}^2. \notag
\end{align}
\end{proof}
Finally, we bound $\Dk \Pky$.
\begin{lemma}[Bounding $\Dk \Pky$]\label{lem:eq:final_bound_dkpky} The following bound holds:
\begin{align*}
\opnorm{\Dk \Pky} \leq 2 \opnorm{\Pky}^2(\opnorm{B} + \opnorm{K}).
\end{align*}
\end{lemma}
\begin{proof} By definition of the discrete time Lyapunov equation, 
\begin{align*}
	\Pky = Q + K^\top R K  + \gamma (A+BK)^\top \Pky (A+BK).
\end{align*}
Therefore, the directional derivative $\Dk \Pky$ satisfies another Lyapunov equation,
\begin{align*}
\Dk \Pky &= \underbrace{\Delta^\top R K + K^\top R \Delta +  \gamma \cdot (B\Delta)^\top \Pky (A+BK) +  \gamma \cdot (A+BK)^\top  \Pky B\Delta}_{\defeq E_K}  \\
&+  \gamma (A+BK)^\top (\Dk \Pky) (A+BK),
\end{align*}
implying that $\Dk \Pky = \dlyap{\sqrt{\gamma}(A+BK)}{E_K}$. By properties of the Lyapunov equation, 
\begin{align*}
\Dk \Pky &\preceq \opnorm{E_K} \dlyap{A+BK}{I} \\
&\preceq \opnorm{E_K} \dlyap{\sqrt{\gamma}(A+BK)}{K^\top RK + Q}  =  \opnorm{E_K}  \Pky.
\end{align*}
Therefore, to bound $\opnorm{\Dk \Pky}$ it suffices to bound, $\opnorm{E_k}$. Using the fact that $\opnorm{\Pky^{1/2} \sqrt{\gamma}(A + BK)} \leq \opnorm{\Pky^{1/2}}$ and that $\opnorm{\Delta} \leq 1$, a short calculation reveals that, 
\begin{align*}
	\opnorm{E_K} \leq 2 \opnorm{\Pky}(\opnorm{B} + \opnorm{K}),
\end{align*}
which together with our previous bound on $\Dk \Pky$ implies that, 
\begin{align*}
\opnorm{\Dk \Pky} \leq 2 \opnorm{\Pky}^2(\opnorm{B} + \opnorm{K}).
\end{align*}
\end{proof}
With \Cref{lem:final_bound_psit,lem:eq:final_dkfnl_bound,lem:eq:final_bound_dkpky} in place, we now return to bounding terms $T_1,T_2,T_3$. 

\paragraph{Bounding $T_1$} Recall $T_1 := \gamma \cdot \left(\Dk  \cdot \fnl(\xtnl, K\xtnl)\right)^\top P_{K,\gamma} \xtnlp$. We then have 
\begin{align*}
	\|T_1\| \leq \norm{\left( \Dk \fnl(\xtnl, K\xtnl)\right)^\top P_{K,\gamma} \xtnlp} \leq  \opnorm{\Pky} \norm{\Dk \fnl(\xtnl, K\xtnl)} \norm{\xtnlp}.
\end{align*}
Using the bound on  $\norm{\xtnlp}$ stated in \Cref{eq:exp_convergence_grad} and on $\norm{\Dk \fnl(\xtnl, K\xtnl)}$ from \Cref{lem:eq:final_dkfnl_bound}, the above simplifies to
\begin{align*}
 8 \cone \opnorm{\Pky}^{7/2} (1 + \opnorm{B})  \cdot t \left(1 - \frac{1}{2\opnorm{\Pky}}\right)^{1.5 t}\norm{\matx_0}^3 .
\end{align*}
\paragraph{Bounding $T_2$} Recall $T_2 = \fnl(\xtnl, K\xtnl)\Dk P_{K,\gamma}\xtnlp$, so that
\begin{align*}
	\norm{T_2} \leq \norm{\fnl(\xtnl, K\xtnl)} \opnorm{ \Dk P_{K,\gamma} } \norm{\xtnlp}. 
\end{align*}
Using the bound on $\norm{\fnl(\matx, \matu)}$ from \Cref{lemma:jacobian}, 
\begin{align}
	\norm{\fnl(\xtnl, K\xtnl)} & \leq \czero (1 + \opnorm{K}^2) \norm{\xtnl}^2 \notag\\
	& \leq \czero (1 + \opnorm{K}^2) \opnorm{\Pky} \left(1 - \frac{1}{2 \opnorm{\Pky}} \right)^{t} \norm{\matx_0}^2 \notag  \\ 
	& \leq 2 \czero \opnorm{\Pky}^2 \left(1 - \frac{1}{2 \opnorm{\Pky}} \right)^{t} \norm{\matx_0}^2. \label{eq:norm_fnlt_bound}
\end{align}
Therefore, using \Cref{eq:exp_convergence_grad} again and \Cref{lem:eq:final_bound_dkpky}, 
\begin{align*}
	\|T_2\| &\leq  4 \czero \opnorm{\Pky}^{9/2} (\opnorm{B} + \opnorm{K}) \left(1 - \frac{1}{2 \opnorm{\Pky}} \right)^{\frac{3}{2}t + \frac{1}{2}} \norm{\matx_0}^3  \\
	& \leq 4 \czero \opnorm{\Pky}^{5} (\opnorm{B} + 1) \left(1 - \frac{1}{2 \opnorm{\Pky}} \right)^{\frac{3}{2}t + \frac{1}{2}} \norm{\matx_0}^3.
\end{align*}
\paragraph{Bounding $T_3$.} 
Recall $T_3 := \fnl(\xtnl, K\xtnl) \P_{K,\gamma} \Dk \xtnlp$.
From \Cref{eq:norm_fnlt_bound} and \Cref{lem:final_bound_psit}, we have that
\begin{align*}
	\|T_3\| & \leq \norm{\fnl(\xtnl, K\xtnl)}  \opnorm{\P_{K,\gamma}} \opnorm{\Dk \xtnlp} \\ 
	& \leq 2 \czero \opnorm{\Pky}^4 \left(1 + \opnorm{B}
\right)  
 \cdot  (t+1) \left(1 - \frac{1}{2 \opnorm{\Pky}} \right)^{1.5t} \norm{\matx_0}^3.    
\end{align*}

\paragraph{Wrapping up} Therefore, 
\begin{align*}
\|T_1\| + \|T_2\| +\|T_3\| \leq  12 \betanl (1 + \opnorm{B})  \opnorm{\Pky}^{5}(t+1)\left(1 - \frac{1}{2 \opnorm{\Pky}} \right)^{\frac{3}{2}t} \norm{\matx_0}^3.
\end{align*}
And hence, 
\begin{align*}
\sum_{t=0}^\infty \Dk \left[ \fnl(\xtnl, K\xtnl)^\top P_{K, \gamma} \xtnlp \right] &\leq 12 \betanl (1 + \opnorm{B})  \opnorm{\Pky}^{5} \norm{\matx_0}^3 \sum_{t=0}^\infty (t+1)\left(1 - \frac{1}{2 \opnorm{\Pky}} \right)^{t}  \\ 
& = 48 \betanl (1 + \opnorm{B})  \opnorm{\Pky}^{7} \norm{\matx_0}^3.
\end{align*}
\end{proof}

\subsection{Establishing Lyapunov functions: Proof of \Cref{lemma:nonlinear_lyapunov}}
\label{subsec:proof_of_nonlinear_lyapunov}
\begin{proof}
To be concise, we use the shorthand, $\matz = (\matx, K\matx)$. We start by expanding out,
\begin{align*}
	\gamma \cdot \Gnl(\matz)^\top  P_K \Gnl(\matz) = \gamma \left( \flin(\matz)^\top P_K \flin(\matz) + \fnl(\matz)^\top P_K \fnl(\matz) + 2 \fnl(\matz)^\top P_K \flin(\matz) \right).
\end{align*}
By the AM-GM inequality for vectors, the following holds for any $\tau > 0$,
\begin{align*}
2 \fnl(\matz)^\top P_K \flin(\matz) \leq \tau \cdot \flin(\matz)^\top P_K \flin(\matz) + \frac{1}{\tau} \cdot \fnl(\matz)^\top P_K \fnl(\matz).
\end{align*}
Combining these two relationships, we get that,
\begin{align}
\label{eq:lyapunov_amgm}
\gamma \cdot \Gnl(\matz)^\top  P_K \Gnl(\matz) \leq \gamma \cdot (1 + \tau) \cdot \flin(\matz)^\top P_K \flin(\matz) + \gamma \cdot \left(1 + \frac{1}{\tau}\right) \cdot \fnl(\matz)^\top P_K \fnl(\matz).
\end{align}
Next, by properties of the Lyapunov function \Cref{lemma:dlyap}, we have that
\begin{align*}
\gamma \Glin(\matz)^\top  P_K \Glin(\matz)  = \gamma \cdot  \matx^\top (A+BK)^\top  \Pky (A +BK) \matx \leq \matx^\top \Pky \matx \left( 1 - \frac{1}{\opnorm{\Pky}}\right).
\end{align*}
Letting, $V_\matx \defeq \matx^\top \Pky \matx$, we can plug in the previous expression into \Cref{eq:lyapunov_amgm} and optimize over $\tau$ to get that,
\begin{align*}
\gamma \cdot \Gnl(\matz)^\top  \Pky \Gnl(\matz) &\leq V_\matx \left( 1 - \frac{1}{\opnorm{\Pky}}\right) + \fnl(\matz)^\top \Pky \fnl(\matz) + 2 \sqrt{V_\matx \fnl(\matz)^\top \Pky \fnl(\matz) } \\
& \leq V_\matx \left( 1 - \frac{1}{\opnorm{\Pky}}\right) + \opnorm{\Pky} \norm{\fnl(\matz)}^2 +  2 \sqrt{V_\matx \opnorm{\Pky}} \norm{\fnl(\matz)},
\end{align*}
where we have dropped a factor of $\gamma$ from the last two terms since $\gamma \leq 1$. Next, the proof follows by noting that this following inequality is satisfied,
\begin{align}
\label{eq:fnl_condition}
\opnorm{\Pky} \norm{\fnl(\matz)}^2 +  2 \sqrt{V_\matx \opnorm{\Pky}} \norm{\fnl(\matz)} \leq \frac{V_{\matx}}{2 \opnorm{\Pky}},
\end{align}
whenever,
\begin{align}
	\norm{\fnl(\matz)}& \leq \sqrt{ \frac{V_\matx}{2  \opnorm{\Pky}^2} + \frac{V_\matx}{\opnorm{\Pky}}} - \sqrt{\frac{V_\matx}{\opnorm{\Pky}}} \notag\\
	&= \sqrt{\frac{V_\matx}{\opnorm{\Pky}}}  \left( \sqrt{\frac{1}{2 \opnorm{\Pky}} + 1 } - 1\right). \label{eq:nonlinearity_cond}
\end{align}
Therefore, assuming the inequality above \Cref{eq:nonlinearity_cond}, we get our desired result showing that
\begin{align*}
	\Gnl(\matz)^\top  P_K \Gnl(\matz) \leq \matx^\top P_K \matx \cdot \left( 1 - \frac{1}{2\opnorm{P_K}} \right).
\end{align*}
We conclude the proof by showing that \Cref{eq:nonlinearity_cond} is satisfied for all $\matx$ small enough. In particular, using our bounds on $\fnl$ from \Cref{lemma:jacobian}, if $\norm{\matx} + \norm{K \matx} \leq \rnl$,
\begin{align*}
\norm{\fnl(\matz)}   \leq \czero (\norm{\matx}^2 + \norm{K\matx}^2) \leq \czero (1 + \opnorm{K}^2) \norm{\matx}^2.
\end{align*}
Since $V_\matx \geq \norm{\matx}^2$, in order for \Cref{eq:nonlinearity_cond} to hold, it suffices for $\norm{\matx}$ to satisfy
\begin{align*}
	\norm{\matx} \leq \min \left\{ \frac{\rnl}{1+ \opnorm{K}} ,\;\frac{1}{\czero \opnorm{\Pky}^{1/2}(1+\opnorm{K}^2)} \right\}.
\end{align*}
Using the fact that $\opnorm{K}^2 \leq \opnorm{\Pky}$, we can simplify upper bound on $\norm{\matx}$ to be, 
\begin{align*}
	\norm{\matx} \leq \frac{\rnl}{2 \czero \opnorm{\Pky}^{3/2}}.
\end{align*}
Note that this condition is always implied by the condition on $\matx^\top \Pky \matx$ in the statement of the proposition.
\end{proof}

\subsection{Bounding the nonlinearity: Proof of \Cref{lemma:jacobian}}
Since the origin is an equilibrium point, $\Gnl(0,0) = 0$, we can rewrite $\Gnl$ as,
\begin{align*}
	\Gnl(\matx, \matu) &= \Gnl(0,0) + \grad \Gnl(\matx,\matu) \big{|}_{(\matx, \matu) = (0,0)}
  \begin{bmatrix}
		\matx \\
		\matu
	\end{bmatrix} + \fnl(\matx, \matu), \\ 
	& =\Ajac \matx + \Bjac \matu + \fnl(\matx, \matu),
\end{align*}
for some function $\fnl$. Taking gradients,
\begin{align*}
	\grad \fnl(\matx, \matu) &= \grad_{\matx, \matu} \Gnl(\matx, \matu) \big{|}_{(\matx, \matu) = (\matx,\matu)} - \grad \Gnl(\matx,\matu) \big{|}_{(\matx, \matu) = (0,0)}. \\ 
\end{align*}
Hence, for all $\matx, \matu$ such that the smoothness assumption holds, we get that 
\begin{align*}
\opnorm{\grad_{\matx, \matu} \fnl(\matx, \matu)} \leq \betanl(\norm{\matx} + \norm{\matu}).
\end{align*}
Next, by Taylor's theorem, 
\begin{align*}
	\fnl(\matx, \matu) = \fnl(0,0) + \int_0^1 \grad \fnl(t\cdot \matx, t\cdot \matu)   \begin{bmatrix}
		\matx \\
		\matu
	\end{bmatrix} dt,
\end{align*}
and since $\fnl(0,0) = 0$, we can bound, 
\begin{align*}
	\norm{\fnl(\matx,\matu)} &\leq \int_0^1 \opnorm{\grad \fnl(t\cdot \matx, t \cdot \matu)} \norm{  \begin{bmatrix}
		\matx \\
		\matu
	\end{bmatrix}} dt \\ 
	& \leq \int_0^1 \beta t \cdot (\norm{\matx} + \norm{\matu})^2 dt. \\ 
	& \leq  2 \beta (\norm{\matx}^2 + \norm{\matu}^2) \int_0^1 t \cdot dt \\ 
	& = \beta (\norm{\matx}^2 + \norm{\matu}^2).
\end{align*}


\section{Implementing $\epseval$, $\epsgrad$, and Search Algorithms}
\label{section:finite_samples}

We conclude by discussing the implementations and relevant sample complexities of the noisy gradient and function evaluation methods, $\epsgrad$ and $\epseval$. We then use these to establish the runtime and correctness of the noisy binary and random search algorithms. 

We remark that throughout our analysis, we have assumed that $\epsgrad$ and $\epseval$ succeed with probability 1. This is purely for the sake of simplifying our presentation. As established in \cite{fazel2018global}, the relevant estimators in this section all return $\epsilon$ approximate solutions with probability $1-\delta$, where $\delta$ factors into the sample complexity polynomially. Therefore, it is easy to union bound and get a high probability guarantee. We omit these union bounds for the sake of simplifying the presentation.

\subsection{Implementing $\epseval$} 

\paragraph{Linear setting. }From the analysis in \citet{fazel2018global}, we know that if $\Jlin(K \mid \gamma) < \infty$, then
\begin{align}
\label{eq:finite_sample_approx_eval}
\Jlin(K \mid \gamma) \approx \frac{1}{\nsmp}{\sum}_{i=1}^\nsmp \Jlin^{(H)}(K \mid \gamma, \matx^{(i)}), \quad \matx^{(i)} \sim r \cdot \dxsphere,
\end{align}
where $\Jlin^{(H)}$ is the length $H$, finite horizon cost of $K$: 
\begin{align*}
	\Jlin^{(H)}(K \mid \gamma, \matx) = \sum_{j=0}^{H-1} \gamma^t \cdot \left(\matx_t^\top Q \matx_t + \matu_t^\top R \matu_t \right), \quad \matu_t = K\matx_t, \quad \matx_{t+1} = A \matx_t + B\matu_t, \quad \matx_0 = \matx.
\end{align*}
More formally, if $\Jlin(K\mid \gamma)$ is smaller than a constant $c$,  Lemma 26 from \citet{fazel2018global} states that it suffices to set $N$ and $H$ to be polynomials in $\opnorm{A}, \opnorm{B}, \Jlin(K \mid \gamma)$ and $c / \epsilon$ in order to have an $\epsilon$-approximate estimate of $\Jlin(K \mid \gamma)$ with high probability. 

On the other hand, if $\Jlin(K \mid \gamma)$ is larger than $c$ (recall that the costs should never be larger than some universal constant times $M_{\mathrm{lin}}$ during discount annealing), the following lemma proves we can detect that this is the case by setting $N$ and $H$ to be polynomials in $\opnorm{A}, \opnorm{B}$ and $c$. This argument follows from the following two lemmas:
\begin{lemma} 
\label{lemma:H_step_whp}
Fix a constant $c$, and take $H = 4c$. Then for any $K$ (possibly even with $\Jlin(K \mid \gamma) = \infty$),
\begin{align*}
\Pr_{\matx \sim \sqrt{\dimx} \cdot \dxsphere} \left[\Jlin^{(H)}(K \mid \gamma, \matx)  \geq \frac{\min\left\{\frac{1}{2} \Jlin(K \mid \gamma), c\right\}}{\dimx} \right] \geq \frac{1}{10}.
\end{align*}
\end{lemma}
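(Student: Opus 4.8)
The plan is to recognize the finite‑horizon cost as a quadratic form. With $H = 4c$, set $\tilde A \defeq \sqrt{\gamma}(A+BK)$ and $\tilde Q \defeq Q + K^\top R K$; since $\lambda_{\min}(Q),\lambda_{\min}(R)\ge 1$ we have $\tilde Q \succeq I$, and unrolling the dynamics (as in the computation behind \Cref{lemma:equivalence_lemma}) gives
\[
\Jlin^{(H)}(K \mid \gamma, \matx) = \matx^\top P^{(H)} \matx, \qquad P^{(H)} \defeq \sum_{t=0}^{H-1}(\tilde A^t)^\top \tilde Q\, \tilde A^t \ \succeq\ \tilde Q\ \succeq\ I ,
\]
and, when $\Jlin(K\mid\gamma)<\infty$, $P^{(H)}\preceq \Pky$. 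Writing $\matx = \sqrt{\dimx}\,\hat u$ with $\hat u$ uniform on $\dxsphere$, we have $\Jlin^{(H)}(K\mid\gamma,\matx) = \dimx\cdot \hat u^\top P^{(H)} \hat u$, so the claim becomes a high‑probability lower bound on the random variable $\hat u^\top P^{(H)}\hat u$.

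The first and main step will be a \emph{deterministic} lower bound: with $H = 4c$, $\traceb{P^{(H)}} \ge \min\{\tfrac12\Jlin(K\mid\gamma),\,c\}$. If $\Jlin(K\mid\gamma)=\infty$, then $\tilde A$ is not stable by \Cref{fact:finite_cost_req}, so $\rho(\tilde A)\ge 1$, and since $\tilde Q\succeq I$,
\[
\traceb{P^{(H)}}\ \ge\ \sum_{t=0}^{H-1}\fronorm{\tilde A^t}^2\ \ge\ \sum_{t=0}^{H-1}\rho(\tilde A)^{2t}\ \ge\ H\ =\ 4c\ \ge\ c .
\]
If $\Jlin(K\mid\gamma)<\infty$, then $\tilde A$ is stable, $\Pky = \dlyap{\tilde A}{\tilde Q}$, and $\Jlin(K\mid\gamma) = \traceb{\Pky}$ by \Cref{fact:value_functions}; using the telescoping identity $\sum_{t\ge H}(\tilde A^t)^\top\tilde Q\,\tilde A^t = (\tilde A^H)^\top \Pky\, \tilde A^H$ together with \Cref{lemma:dlyap} (applied with $A\leftarrow\tilde A$, $Q\leftarrow\tilde Q$),
\[
\traceb{P^{(H)}} = \traceb{\Pky} - \traceb{(\tilde A^H)^\top \Pky\, \tilde A^H} \ \ge\ \traceb{\Pky}\Big(1 - \big(1-\tfrac{1}{\opnorm{\Pky}}\big)^{H}\Big)\ \ge\ \Jlin(1-e^{-H/\Jlin}),
\]
the last inequality using $1-z\le e^{-z}$ and the crude bound $\opnorm{\Pky}\le\traceb{\Pky}=\Jlin$. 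A one‑variable estimate (put $y = H/\Jlin$ and split on $\Jlin\ge H$ versus $\Jlin<H$) then gives $\Jlin(1-e^{-H/\Jlin}) \ge \tfrac12\min\{\Jlin,H\} = \tfrac12\min\{\Jlin,4c\}$, which is at least $\min\{\tfrac12\Jlin,c\}$.

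For the probabilistic step, when $\dimx=1$ the variable $\hat u^\top P^{(H)}\hat u = P^{(H)} = \traceb{P^{(H)}}$ is deterministic, so the previous bound already yields $\Jlin^{(H)}(K\mid\gamma,\matx) = \traceb{P^{(H)}} \ge \min\{\tfrac12\Jlin,c\}/\dimx$ with probability one. For $\dimx\ge 2$, let $W \defeq \hat u^\top P^{(H)}\hat u\ge 0$. I will use the standard sphere moment identities $\E W = \traceb{P^{(H)}}/\dimx$ and $\EE{W^2} = \big(\traceb{P^{(H)}}^2 + 2\traceb{(P^{(H)})^2}\big)/(\dimx(\dimx+2)) \le 3\,\traceb{P^{(H)}}^2/(\dimx(\dimx+2))$, the last step because $\traceb{(P^{(H)})^2}\le\traceb{P^{(H)}}^2$ for PSD $P^{(H)}$; hence $(\E W)^2/\EE{W^2}\ge (\dimx+2)/(3\dimx)$. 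The Paley--Zygmund inequality with threshold parameter $\theta=1/\dimx$ then gives
\[
\Pr\Big[W > \tfrac{\traceb{P^{(H)}}}{\dimx^2}\Big]\ \ge\ \Big(1-\tfrac{1}{\dimx}\Big)^2\cdot\frac{\dimx+2}{3\dimx}\ =\ \frac{(\dimx-1)^2(\dimx+2)}{3\dimx^3}\ \ge\ \frac16\ >\ \frac{1}{10},
\]
the penultimate inequality a short calculus check (the quantity is decreasing as a function of $1/\dimx$, hence minimized at $\dimx=2$, where it equals $\tfrac16$). Since $\Jlin^{(H)}(K\mid\gamma,\matx) = \dimx W$, the event above is exactly $\{\Jlin^{(H)}(K\mid\gamma,\matx) > \traceb{P^{(H)}}/\dimx\}$, and combining with Step 1's bound $\traceb{P^{(H)}}\ge\min\{\tfrac12\Jlin,c\}$ completes the argument.

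I expect the deterministic trace bound (Step 1) to carry the real content: the key realization is that one should not try to make $(1-1/\opnorm{\Pky})^{H}$ itself small (which would force $H\gtrsim\opnorm{\Pky}\log\opnorm{\Pky}$), but instead push the exponential estimate through $\opnorm{\Pky}\le\Jlin$ and compare $\Jlin(1-e^{-H/\Jlin})$ against $\min\{\Jlin,H\}$ — this is precisely what makes the \emph{linear} horizon $H=4c$ sufficient, and handles $\Jlin=\infty$ uniformly via the spectral‑radius bound. The probabilistic step is then routine, with the one caveat that a plain reverse‑Markov argument on $\Jlin^{(H)}$ (which is bounded by $\dimx\,\traceb{P^{(H)}}$) would only deliver probability $\Omega(1/\dimx)$; the Paley--Zygmund refinement — equivalently, the fact that a PSD quadratic form on the sphere cannot concentrate too sharply because $\traceb{(P^{(H)})^2}\le\traceb{P^{(H)}}^2$ — is what recovers the dimension‑free constant $\tfrac1{10}$.
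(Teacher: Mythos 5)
Your proof is correct, and it diverges from the paper's argument in both of its two halves. For the deterministic trace bound, the paper (its \Cref{lem:H_step_P}) argues by pigeonhole: if $\trace(P_H)\le c \le H/4$ then some single step $t$ has $\|\Acl^t\|_{\fro}^2\le 1/4$, and the block identity $\Pky=\sum_{n\ge 0}\Acl^{nt\top}P_t\Acl^{nt}$ then shows $\trace(\Pky)\le 2\trace(P_H)$. You instead use the tail identity $\sum_{t\ge H}(\tilde A^t)^\top\tilde Q\tilde A^t=(\tilde A^H)^\top\Pky\tilde A^H$ together with \Cref{lemma:dlyap} and the one-variable estimate on $\Jlin(1-e^{-H/\Jlin})$; this is arguably cleaner, isolates exactly why a \emph{linear} horizon $H=4c$ suffices, and handles the unstable case by the same spectral-radius observation the paper uses. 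For the anti-concentration step, the paper projects onto the top eigenvector of $P_H$, uses $\dimx\opnorm{P_H}\ge\trace(P_H)$, and invokes its separate chi-squared computation (\Cref{lemma:inner_product_lb}) to get probability $1/10$; you instead apply Paley--Zygmund with the exact second moment of a quadratic form on the sphere, exploiting $\trace\big((P^{(H)})^2\big)\le\trace\big(P^{(H)}\big)^2$ to get the dimension-free constant $1/6$. Your route uses all eigenvalues of $P_H$ rather than only the largest, avoids the auxiliary inner-product lemma, and yields a slightly better constant; the paper's route is more elementary in that it needs only a one-dimensional anti-concentration fact. Both arrive at the identical threshold $\trace(P_H)/\dimx$, so the two proofs are interchangeable in the downstream use of the lemma.
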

Setting $c = \alpha M_{\mathrm{lin}}$ for some universal constant $\alpha$, we get that all calls to the $\epseval$ oracle during discount annealing can be implemented with polynomially many samples. Using standard arguments, we can boost this result to hold with high probability by running $N$ independent trials, where $N$ is again a polynomial in the relevant problem parameters. 

\paragraph{Nonlinear setting.} Next, we sketch why the same estimator described in \Cref{eq:finite_sample_approx_eval} also works in the nonlinear case if we replace $\Jlin^{(H)}(K\mid \gamma)$ with the analogous finite horizon cost, $\Jnl^{(H)}(K \mid \gamma, \rst)$, for the nonlinear objective. By \Cref{lemma:jnl_close} and \Cref{prop:diff_lin_nl} if the nonlinear cost is small, then costs on the nonlinear system and the linear system are \emph{pointwise} close. Therefore, previous concentration analysis for the linear setting from \cite{fazel2018global} can be easily carried over in order to implement $\epseval$ where $N$ and $H$ are depend polynomially on the relevant problem parameters. 

On the other hand if the nonlinear cost is large, then the cost on the linear system must also be large. Recall that if the linear cost was bounded by a constant, then the costs of both systems would be pointwise close by \Cref{prop:diff_lin_nl}. By \Cref{prop:finite_horizon}, we know that the $H$-step nonlinear cost is lower bounded by the cost on the linear system. Since we can always detect that the cost on the linear system is larger than a constant using polynomially many samples as per \Cref{lemma:H_step_whp}, with high probability we can also detect if the cost on the nonlinear system is large using again only polynomially many samples.

\subsection{Implementing $\epsgrad$} 
\paragraph{Linear setting.} Just like in the case of $\epseval$, \citet{fazel2018global} (Lemma 30) prove that 
\begin{align}
\label{eq:finite_sample_approx_grad}
\grad \Jlin(K \mid \gamma) \approx \frac{1}{\nsmp}{\sum}_{i=1}^\nsmp \grad \Jlin^{(H)}(K \mid \gamma, \matx^{(i)}), \quad \matx^{(i)} \sim r \cdot \dxsphere
\end{align}
where $N$ and $H$ only need to be made polynomial large in the relevant problem parameter and $1/\epsilon$ in order to get an $\epsilon$ accurate approximation of the gradient. Note that we can safely assume that $\Jlin(K\mid \gamma)$ is finite (and hence the gradient is well defined) since we can always run the test outlined in \Cref{lemma:H_step_whp} to get a high probability guarantee of this fact. In order to approximate the gradients $\grad \Jlin^{(H)}(K \mid \gamma, \matx)$, one can employ standard techniques from \cite{flaxman2005}. We refer the interested reader to Appendix D in \citet{fazel2018global} for further details.

\paragraph{Nonlinear setting. }Lastly, we remark that, as in the case of $\epseval{}$, \Cref{lemma:grad_pointwise_closeness} establishes that the gradients between the linear and nonlinear system are again pointwise close if the cost on the linear system is bounded. In the proof of \Cref{theorem:nonlinear_algorithm}, we established that during all iterations of discount annealing, the cost on the linear system during executions of policy  gradients is bounded by $M_{\mathrm{nl}}$. Therefore, the analysis from \cite{fazel2018global} can be ported over to show that $\epsilon$-approximate gradients of the nonlinear system can be computed using only polynomially many samples in the relevant problem parameters.

\subsection{Auxiliary results}

\begin{lemma}\label{lem:H_step_P} Fix any constant $c \ge 1$. Then, for 
\begin{align*}
	P_H = \sum_{j=0}^{H-1} \left((\sqrt{\gamma}(A+BK))^t \right)^\top(Q + K^\top R K)(\sqrt{\gamma}(A+BK))^t  
\end{align*}
the following relationship holds
\begin{align*}
\Jlin^{(H)}(K \mid \gamma) \ge \min\left\{\frac{1}{2} \Jlin(K \mid \gamma), c\right\}, \quad \forall H \ge 4c,
\end{align*} where $\Jlin(K \mid \gamma)$ may be infinite. 
\end{lemma}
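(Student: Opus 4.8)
The plan is to identify $\Jlin^{(H)}(K\mid\gamma)$ with $\traceb{P_H}$, reduce the claim to a statement about the scalar sequence of partial sums $a_H \defeq \traceb{P_H}$, and then show that this sequence grows fast enough by a ``one contracting power forces geometric decay'' argument. Write $\Acl \defeq \sqrt{\gamma}(A+BK)$ and $\Sigma \defeq Q + K^\top R K$; since $Q \succeq I$ we have $\Sigma \succeq I$. Unrolling the finite-horizon trajectory exactly as in the proof of \Cref{lemma:equivalence_lemma} (but truncated at step $H$) gives $\Jlin^{(H)}(K\mid\gamma,\matx_0) = \matx_0^\top P_H\matx_0$ with $P_H = \sum_{j=0}^{H-1}(\Acl^j)^\top\Sigma\Acl^j$, so by the overloading convention of \Cref{def:lqr_objective} and $\E[\matx_0\matx_0^\top] = I$ we get $\Jlin^{(H)}(K\mid\gamma) = \traceb{P_H}$; letting $H\to\infty$, $\Jlin(K\mid\gamma) = \traceb{P_\infty}$ with $P_\infty \defeq \lim_H P_H$, this limit being $\dlyap{\Acl}{\Sigma}$ when $\Acl$ is stable and $\traceb{P_\infty} = \infty = \Jlin(K\mid\gamma)$ otherwise, by \Cref{fact:finite_cost_req} and \Cref{fact:value_functions}. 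Set $V_j \defeq \traceb{(\Acl^j)^\top\Sigma\Acl^j} = \fnorm{\Sigma^{1/2}\Acl^j}^2 \ge 0$ and $a_H \defeq \traceb{P_H} = \sum_{j=0}^{H-1}V_j$; then $(a_H)_H$ is nondecreasing with limit $a_\infty \in (0,\infty]$, and it remains to show $a_H \ge \min\{\tfrac12 a_\infty,\, c\}$ for every integer $H \ge 4c$.

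I would argue by cases. If $a_H \ge c$, there is nothing to prove since $\min\{\tfrac12 a_\infty, c\} \le c$. Otherwise $a_H < c$. Choose $t^\star \in \{0,1,\dots,H-1\}$ minimizing $V_{t^\star}$; then $V_{t^\star} \le \tfrac1H\sum_{j=0}^{H-1}V_j = a_H/H < c/(4c) = \tfrac14$. Crucially $V_0 = \traceb{\Sigma} \ge \traceb{Q} \ge \dimx \ge 1 > \tfrac14$, so $t^\star \ge 1$. Since $\Sigma \succeq I$ we have $\opnorm{\Acl^{t^\star}}^2 \le \opnorm{\Sigma^{1/2}\Acl^{t^\star}}^2 \le \fnorm{\Sigma^{1/2}\Acl^{t^\star}}^2 = V_{t^\star} < \tfrac14$, i.e. $\opnorm{\Acl^{t^\star}} < \tfrac12$. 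Now for any $j \ge t^\star$ write $j = q t^\star + r$ with $q = \lfloor j/t^\star\rfloor \ge 1$ and $0 \le r \le t^\star - 1$; submultiplicativity of the norms gives $V_j = \fnorm{\Sigma^{1/2}\Acl^r(\Acl^{t^\star})^q}^2 \le \fnorm{\Sigma^{1/2}\Acl^r}^2\,\opnorm{\Acl^{t^\star}}^{2q} < V_r\, 4^{-q}$. Summing over $j \ge t^\star$ (equivalently over all $q \ge 1$ and $0 \le r \le t^\star - 1$) yields $\sum_{j \ge t^\star}V_j \le \big(\sum_{q\ge1}4^{-q}\big)\sum_{r=0}^{t^\star-1}V_r = \tfrac13 a_{t^\star}$. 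Hence $a_\infty = a_{t^\star} + \sum_{j\ge t^\star}V_j \le \tfrac43 a_{t^\star} \le \tfrac43 a_H$ (using $t^\star \le H-1$ and monotonicity), so $a_H \ge \tfrac34 a_\infty \ge \tfrac12 a_\infty \ge \min\{\tfrac12 a_\infty, c\}$; note this case also shows $a_\infty<\infty$, so if $a_\infty=\infty$ only the first case can occur and $a_H \ge c = \min\{\tfrac12 a_\infty, c\}$. This completes the argument.

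The substance — and the only real obstacle — is the second case: because $H$ is only required to be $\ge 4c$, which can be far smaller than $a_\infty$ (possibly infinite), one cannot capture a constant fraction of $a_\infty$ directly, and the $\min$ with $c$ is unavoidable. The observation that makes it go through is that if the first $\ge 4c$ per-step costs $V_0,\dots,V_{H-1}$ sum to less than $c$, then the smallest of them, $V_{t^\star}$, is below $\tfrac14$, which forces $\Acl^{t^\star}$ to be a strict operator-norm contraction; all subsequent $V_j$ then decay geometrically at rate $4^{-1}$ along arithmetic progressions with common difference $t^\star$, so the entire tail $\sum_{j\ge t^\star}V_j$ is at most a third of what has already been accumulated. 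The lower bound $\Sigma \succeq I$ enters precisely to guarantee $V_0 \ge 1$, ruling out the degenerate choice $t^\star = 0$ and making the decomposition $j = q t^\star + r$ meaningful. The remaining steps — identifying $\Jlin^{(H)}$ with $\traceb{P_H}$ and the norm bookkeeping — are routine.
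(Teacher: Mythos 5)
Your proof is correct and follows essentially the same route as the paper's: both arguments use a pigeonhole/averaging step to locate a time $t^\star\ge 1$ at which the per-step cost drops below $\tfrac14$, conclude that $\Acl^{t^\star}$ is a contraction (you via the operator norm, the paper via the Frobenius norm), and then bound the tail of the series geometrically by grouping indices modulo $t^\star$ — which is exactly the paper's resummation $\Pky=\sum_{n\ge 0}\Acl^{nt\top}P_t\Acl^{nt}$ written out termwise. The only organizational difference is that you fold the unstable case into the argument by noting that the contraction case forces $a_\infty<\infty$, whereas the paper handles $\rho(\Acl)\ge 1$ separately with an eigenvector bound; the substance is the same.
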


\begin{proof} We have two cases. In the first case  $\Acl \defeq \sqrt{\gamma}(A+BK)$ has an eigenvalue $\lambda \ge 1$. Letting $\matv$ denote such a corresponding eigenvector of unit norm, one can verify that $\trace(P_H) \ge \matv^\top P_H \matv \ge \sum_{i=0}^{H-1} \lambda_i^2 \ge H$, which is at least $c$ by assumption.  

In the second case, $\Acl$ is stable, so $\Pky$ exists and its trace is $\Jlin(K \mid \gamma)$ (see \Cref{fact:value_functions}). Then, for $Q_K = Q + K^\top R K$, 
 \begin{align*}
 P_{H} = \sum_{i=0}^{H - 1} \Acl^{i \top} Q_K \Acl^{i},
 \end{align*}
 we show that if $\trace(P_H) \le c$, then $\trace(P_H) \ge \frac{1}{2}\trace(\Pky)$.

 To show this, observe that if $\trace(P_H) \le c$, then $\trace(P_H)  \le H/4$. Therefore, by the pidgeonhole principle (and the fact that $\trace(Q_K) \ge 1$), there exists some $t \in \{1,\dots,H\}$ such that $\trace(\Acl^{t \top} Q_K \Acl^{t}) \le 1/4$. Since $Q_K \succeq I$, this means that $\trace(\Acl^{t \top} \Acl^{t})= \|\Acl^t\|_{\fro}^2 \le 1/4$ as well. Therefore, letting $P_t$ denote the $t$-step value function from \Cref{def:ph}, the identity $\Pky = \sum_{n=0}^{\infty} \Acl^{nt \top} P_{t}\Acl^{nt}$ means that
\begin{align*}
\trace(\Pky) &= \trace\left(\sum_{n=0}^{\infty} \Acl^{nt \top} P_{t}\Acl^{nt}\right) \\
&\le \trace(P_t) + \|P_t\|\sum_{n \ge 1}  \|\Acl^{nt}\|_{\fro}^2\\
&\le \trace(P_t) + \|P_t\|\sum_{n \ge 1}  \|\Acl^{t}\|_{\fro}^{2n} \\
&\le \trace(P_t) + \|P_t\| \sum_{n \ge 1} (1/2)^n \\
&\le 2 \trace(P_t) \le 2\trace(P_H),
\end{align*}
where in the last line we used $t \le H$. This completes the proof.
\end{proof}

\begin{lemma}
\label{lemma:inner_product_lb}
Let $\matu, \matv \iidsim \dxsphere$ then, $\Pr\left[(\matu^\top \matv)^2 \geq 1/\dimx\right] \geq .1$.

\end{lemma}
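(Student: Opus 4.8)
The plan is to reduce the bound to a statement about the first coordinate of a single uniform random unit vector, and from there to an elementary event involving independent Gaussians. First I would use rotational invariance of the uniform law on $\dxsphere$: conditioning on $\matv$ and applying an orthogonal map sending $\matv$ to the first standard basis vector preserves both the law of $\matu$ and the inner product $\matu^\top\matv$, so $(\matu^\top\matv)^2$ has the same law as $u_1^2$, where $u_1$ is the first coordinate of a single vector $\matu$ uniform on $\dxsphere$. It therefore suffices to prove $\Pr[u_1^2 \ge 1/\dimx] \ge 1/10$. Next I would invoke the standard fact that if $g=(g_1,\dots,g_{\dimx})$ has i.i.d.\ standard Gaussian coordinates then $g/\|g\|$ is uniform on $\dxsphere$, so $u_1^2$ has the same law as $g_1^2/(g_1^2+S)$ with $S \defeq \sum_{i=2}^{\dimx} g_i^2 \sim \chi^2_{\dimx-1}$ independent of $g_1$. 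Clearing denominators, the event $\{u_1^2 \ge 1/\dimx\}$ is exactly $\{(\dimx-1)\,g_1^2 \ge S\}$.

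I would then lower bound the probability of this event by the probability of the smaller event $A \defeq \{g_1^2 \ge 1\}\cap\{S \le \dimx-1\}$: on $A$ we have $(\dimx-1)g_1^2 \ge \dimx-1 \ge S$, so $A$ implies $\{u_1^2 \ge 1/\dimx\}$. By independence of $g_1$ and $S$, $\Pr[A] = \Pr[g_1^2 \ge 1]\cdot\Pr[S \le \dimx-1]$, and the first factor equals $\Pr[|g_1|\ge 1] = 2(1-\Phi(1)) > 0.3$.

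The crux — and the only non-elementary ingredient — is the second factor. Since $\dimx-1$ is precisely the mean of $S \sim \chi^2_{\dimx-1}$, what I need is $\Pr[S \le \dimx-1] \ge \tfrac12$, i.e.\ that the median of a chi-squared random variable does not exceed its mean (for $\dimx=1$ the variable $S$ is degenerate at $0$ and the inequality is trivial). This is a classical fact about medians of gamma/chi-squared distributions; it can be cited directly or checked from the density using the right-skewness of $\chi^2_k$. I would flag this as the main obstacle because the naive moment-based tools fail: Paley–Zygmund and Chebyshev give no useful lower bound on $\Pr[S \le \E[S]]$ (indeed Paley–Zygmund at ratio $1$ is vacuous), so the median comparison is genuinely what is needed, as is the exact threshold $1/\dimx$. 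Granting it, $\Pr[A] \ge 0.3\cdot\tfrac12 = 0.15 \ge 1/10$, which proves the lemma.
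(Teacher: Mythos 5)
Your proof is correct, and its skeleton matches the paper's: both reduce by rotational invariance to the Gaussian representation $(\matu^\top\matv)^2 \overset{d}{=} g_1^2/(g_1^2+S)$ with $S\sim\chi^2_{\dimx-1}$ independent of $g_1$, rewrite the event as $(\dimx-1)g_1^2\ge S$, and lower bound its probability by the product of the probabilities of two independent events. Where you diverge is in the threshold and the supporting inequality. The paper conditions on $S\le 2(\dimx-1)$, pays $\Pr\left[g_1^2\ge 2\right]\ge .15$ on the Gaussian factor, and controls $\Pr\left[S\le 2(\dimx-1)\right]\ge .99$ via the Laurent--Massart chi-squared tail bound, so the product $.15\times .99$ clears $.1$. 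You instead put the threshold at the mean $\dimx-1$, keep the larger Gaussian factor $2(1-\Phi(1))>.3$, and need $\Pr\left[S\le \dimx-1\right]\ge 1/2$, i.e.\ that the median of $\chi^2_{\dimx-1}$ does not exceed its mean. That fact is indeed classical (for $\chi^2_k$ with $k\ge 1$ one has $k-2/3<\mathrm{med}<k$), and you are right that it is the genuine crux: Chebyshev and Paley--Zygmund give nothing at the mean. Your route buys a marginally cleaner computation ($.3\times\tfrac12=.15$) at the price of a median-versus-mean fact that must be cited or verified from the skewness of the Gamma density, whereas the paper trades a worse Gaussian constant for a concentration inequality it already cites. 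Both arguments are complete; just be sure to handle $\dimx=1$ separately (as you do) and to give a reference for the median inequality rather than leaving it as an assertion.
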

\begin{proof}
The statement is clearly true for $\dimx=1$, therefore we focus on the case where $\dimx \geq 2$. Without loss of generality, we can take $\matu$ to be the first basis vector $\mate_1$ and let $\matv = Z / \norm{Z}$ where $Z \sim \calN(0, I)$. From these simplifications,  we observe that $(\matv^\top \matu)^2$ is equal in distribution to the following random variable,
\[
\frac{Z_1^2}{\sum_{i=1}^d  Z_i^2}, 
\]
where each $Z_i^2$ is a chi-squared random variable. Using this equivalency, we have that for arbitrary $M > 0$,
\begin{align*}
\Pr\left[(\matu^\top \matv)^2 \geq 1/\dimx\right] &=  \Pr\left[Z_1^2 \geq \frac{\sum_{i=2}^{\dimx} Z_i^2}{\dimx-1}\right] \\
& = \Pr\left[Z_1^2 \geq \frac{\sum_{i=2}^{\dimx} Z_i^2}{\dimx-1} \bigg| \sum_{i=2}^{\dimx} Z_i^2  \leq M \right] \Pr\left[ \sum_{i=2}^{\dimx}  Z_i^2  \leq M \right] \\
&+ \Pr\left[Z_1^2 \geq \frac{\sum_{i=2}^{\dimx}  Z_i^2}{\dimx-1} \bigg| \sum_{i=2}^d Z_i^2  > M \right] \Pr\left[ \sum_{i=2}^{\dimx}  Z_i^2  > M \right] \\ 
& \geq \Pr\left[Z_1^2 \geq \frac{\sum_{i=2}^{\dimx}  Z_i^2}{\dimx-1} \bigg| \sum_{i=2}^d Z_i^2  \leq M \right] \Pr\left[\sum_{i=2}^{\dimx}  Z_i^2  \leq M \right] \\
& \geq \Pr\left[Z_1^2 \geq \frac{M}{\dimx-1} \right] \Pr\left[ \sum_{i=2}^{\dimx} Z_i^2  \leq M \right].
\end{align*}
Setting $M = 2(\dimx-1)$, we get that, 
\begin{align*}
	\Pr\left[(\matu^\top \matv)^2 \geq 1/\dimx\right] \geq \Pr\left[Z_1^2 \geq 2 \right] \Pr\left[ \sum_{i=2}^{\dimx} Z_i^2  \leq 2(\dimx -1) \right].
\end{align*}
From a direct computation,
\begin{align*}
\Pr\left[Z_1^2 \geq 2 \right] \geq .15.
\end{align*} 
To bound the last term, if $Y$ is a chi-squared random variable with $k$ degrees of freedom, by Lemma 1 in \cite{laurent2000adaptive}, 
\begin{align*}
	\Pr[Y \geq k +  2\sqrt{kx} + x] \leq \exp(-x).
\end{align*}
Setting $x = 2\sqrt{2(\dimx-1)k} + 2\dimx + k -2$ we get that $k + 2\sqrt{kx}+ x = 2(\dimx-1)$. Substituting in $k =\dimx-1$, we conclude that, 
\begin{align*}
\Pr\left[ \sum_{i=2}^{\dimx} Z_i^2  \leq 2(\dimx -1) \right] \geq 1 - \exp(-2\sqrt{2}(\dimx -1) - 3 \dimx + 3),
\end{align*}
which is greater than .99 for $\dimx \geq 2$. 
\end{proof}

\newtheorem*{lemma:H_step_whp_restated}{\Cref{lemma:H_step_whp} (restated)}

\begin{lemma:H_step_whp_restated} 
Fix a constant $c$, and take $H = 4c$. Then for any $K$ (possibly even with $\Jlin(K \mid \gamma) = \infty$),
\begin{align*}
\Pr_{\matx \sim \sqrt{\dimx} \cdot \dxsphere} \left[\Jlin^{(H)}(K \mid \gamma, \matx)  \geq \frac{\min\left\{\frac{1}{2} \Jlin(K \mid \gamma), c\right\}}{\dimx} \right] \geq \frac{1}{10}.
\end{align*}
\end{lemma:H_step_whp_restated}
\begin{proof} Observe that for the finite horizon value matrix $P_H = \sum_{i=0}^{H-1} (A+BK)^{i\top}(Q + K^\top R K) (A+BK)^i$, we have $\Jlin^{(H)}(K \mid \gamma, \matx) = \matx^\top P_H \matx$. We now observe that, since $P_H \succeq 0$, it has a (possibly non-unique) top eigenvector $v_1$ for which
\begin{align*}
\Jlin^{(H)}(K \mid \gamma, \matx) = \matx^\top P_H \matx &\ge  \langle v_1, \frac{\matx}{\sqrt{\dimx}} \rangle^2  \cdot \dimx\|P_H\| \\
&\ge  \langle v_1, \frac{\matx}{\sqrt{\dimx}} \rangle^2 \cdot \trace(P_H) \\
&= \underbrace{\langle v_1, \frac{\matx}{\sqrt{\dimx}}}_{:=Z} \rangle^2\cdot \Jlin^{(H)}(K \mid \gamma)
\end{align*}
Since $\matx/\sqrt{\dimx}  \sim \dxsphere$, \Cref{lemma:inner_product_lb} ensures that $\Pr[Z \ge 1/\dimx] \ge 1/10$. Hence,
\begin{align*}
\Pr_{\matx \sim \sqrt{\dimx} \cdot \dxsphere} \left[\Jlin^{(H)}(K \mid \gamma, \matx)  \geq \frac{ \Jlin^{(H)}(K \mid \gamma)}{\dimx} \right] \geq \frac{1}{10}.
\end{align*}
The bound now follows from invoking \Cref{lem:H_step_P} to lower bound $\Jlin^{(H)}(K \mid \gamma, \matx)  \ge \min\left\{\frac{1}{2} \Jlin(K \mid \gamma), c\right\}$ provided $H \ge 4c$.

\end{proof}

In the following lemmas, we define $P_H$ to be the following matrix where $K$ is any state-feedback controller.  
\begin{align}
\label{def:ph}
	P_H = \sum_{j=0}^{H-1} \left((\sqrt{\gamma}(A+BK))^t \right)^\top(Q + K^\top R K)(\sqrt{\gamma}(A+BK))^t  
\end{align}
Similarly, we let $\Jnl^{(H)}(K \mid \gamma , \matx_0)$ be the horizon $H$ cost of the nonlinear dynamical system:
\begin{align}
&\Jnl^{(H)}(K  \mid \matx, \gamma) \defeq  \sum_{t=0}^{H-1}  \matx_t^\top Q \matx_t + \matu_t^\top R \matu_t \label{eq:nonlinear_cost} \\
&\text{ s.t }  \matu_t = K\matx_t,\quad   \matx_{t+1} = \sqrt{\gamma} \cdot \Gnl( \matx_t, \matu_t), \quad \matx_0 = \matx. \label{eq:nonlinear_dynamics}
\end{align}
And again overloading notation like before, we let $\Jnl^{(H)}(K \mid \gamma, r)  \defeq \E_{\matx \sim r \cdot \dxsphere} \left[ \Jnl^{(H)}(K \mid \gamma,  \matx) \right] \times \frac{\dimx}{r^2}$.

\begin{lemma}\label{lem:nl_lb} Fix a horizon $H$, constant $\alpha \in (0,\dimx)$, $\matx_0 \in \R^{\dimx}$, and suppose that 
\begin{align*}
\|\matx_0\|^2 \cdot \frac{\alpha }{2H^2 \betanl^2 (1+\|K\|^2) \dimx} \le \rnl^4.
\end{align*}
Furthermore, define $\calX_\alpha := \{\matx_0 \in \R^{\dimx}:\langle \matx_0, \matv_{\max}(P_H)\rangle^2 \ge \alpha\|\matx_0\|^2 / \dimx\}$. Then, if $\matx_0 \in \calX_{\alpha}$, it holds that 
\begin{align*}
\Jnl^{(H)}(K \mid \gamma, \matx_0) \ge   \min\left\{\frac{\alpha\|\matx_0\|^2 }{4\dimx^2}\cdot\Jlin^{(H)}(K \mid \matx_0,\gamma), \sqrt{\frac{\alpha}{\dimx}} \frac{\|\matx_0\|}{2H\betanl (1+\|K\|)}\right\}.\end{align*}
\end{lemma}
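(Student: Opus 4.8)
The plan is to compare the damped nonlinear rollout $\{\xtnl\}_{t\ge 0}$ from $\matx_0$ — i.e. $\xtnlp=\sqrt\gamma\,\Gnl(\xtnl,K\xtnl)$ with $\matx_{0,\mathrm{nl}}=\matx_0$ — against the damped \emph{linear} rollout $\matx^{\mathrm{lin}}_t:=\Acl^t\matx_0$, where $\Acl:=\sqrt\gamma(A+BK)$, and to control the deviation $\matdel_t:=\xtnl-\matx^{\mathrm{lin}}_t$. Writing $Q_K:=Q+K^\top RK\succeq I$, we have $\Jlin^{(H)}(K\mid\matx_0,\gamma)=\matx_0^\top P_H\matx_0=\sum_{t=0}^{H-1}\|Q_K^{1/2}\matx^{\mathrm{lin}}_t\|^2$ for $P_H$ as in \Cref{def:ph}, and the key elementary fact is that, for every $j\le H-1$, the PSD ordering $\Acl^{j\top}Q_K\Acl^{j}\preceq P_H$ implies $\|Q_K^{1/2}\Acl^{j}\matv\|\le\opnorm{P_H}^{1/2}\|\matv\|$ for all $\matv$ — a propagator bound that needs no stability of $\Acl$. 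By \Cref{lemma:jacobian}, $\Gnl(\matx,K\matx)=(A+BK)\matx+\fnl(\matx,K\matx)$ with $\|\fnl(\matx,K\matx)\|\le\betanl(1+\opnorm{K}^2)\|\matx\|^2$ whenever $\|\matx\|+\|K\matx\|\le\rnl$, so $\matdel_0=0$ and $\matdel_t=\sqrt\gamma\sum_{s=0}^{t-1}\Acl^{t-1-s}\fnl(\matx_{s,\mathrm{nl}},K\matx_{s,\mathrm{nl}})$ as long as the rollout stays inside that validity ball.

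I would then split on the first-exit time $\tau:=\min\{t\le H-1:\ \|\xtnl\|+\|K\xtnl\|>\rnl\}$ (set $\tau=H$ if no exit), and write $J:=\Jnl^{(H)}(K\mid\gamma,\matx_0)$, noting $J\ge\sum_{t=0}^{H-1}\|\xtnl\|^2$. If $\tau\le H-1$, then $\|\matx_{\tau,\mathrm{nl}}\|\ge\rnl/(1+\opnorm{K})$, so $J\ge\rnl^2/(1+\opnorm{K})^2$, and the hypothesis $\alpha\|\matx_0\|^2/(2H^2\betanl^2(1+\opnorm{K}^2)\dimx)\le\rnl^4$ is used to verify this already dominates the second term in the $\min$. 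In the complementary case $\tau=H$, \Cref{lemma:jacobian} holds along the entire horizon, so $\|\fnl(\matx_{s,\mathrm{nl}},K\matx_{s,\mathrm{nl}})\|\le\betanl(1+\opnorm{K}^2)\|\matx_{s,\mathrm{nl}}\|^2$, and, using the propagator bound (each exponent $t-1-s\le H-1$) together with $J\ge\sum_s\|\matx_{s,\mathrm{nl}}\|^2$,
\begin{align*}
\|Q_K^{1/2}\matdel_t\|\ \le\ \opnorm{P_H}^{1/2}\sum_{s=0}^{H-1}\|\fnl(\matx_{s,\mathrm{nl}},K\matx_{s,\mathrm{nl}})\|\ \le\ \opnorm{P_H}^{1/2}\betanl(1+\opnorm{K}^2)\,J\qquad(0\le t\le H-1).
\end{align*}

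Finally I would close the loop: from $\xtnl=\matx^{\mathrm{lin}}_t+\matdel_t$, the reverse triangle inequality in the $Q_K^{1/2}$-norm, and $(a-b)^2\ge\tfrac12a^2-b^2$,
\begin{align*}
J=\sum_{t=0}^{H-1}\|Q_K^{1/2}\xtnl\|^2\ \ge\ \tfrac12\sum_{t=0}^{H-1}\|Q_K^{1/2}\matx^{\mathrm{lin}}_t\|^2-\sum_{t=0}^{H-1}\|Q_K^{1/2}\matdel_t\|^2\ \ge\ \tfrac12\,\matx_0^\top P_H\matx_0-H\opnorm{P_H}\betanl^2(1+\opnorm{K}^2)^2J^2 .
\end{align*}
Since $\matx_0\in\calX_\alpha$ means $\langle\matx_0,\matv_{\max}(P_H)\rangle^2\ge\alpha\|\matx_0\|^2/\dimx$, we get $\matx_0^\top P_H\matx_0\ge\opnorm{P_H}\alpha\|\matx_0\|^2/\dimx\ge(\trace(P_H)/\dimx)(\alpha\|\matx_0\|^2/\dimx)=\tfrac{\alpha\|\matx_0\|^2}{\dimx^2}\Jlin^{(H)}(K\mid\gamma)$, using $\opnorm{P_H}\ge\trace(P_H)/\dimx$ and $\trace(P_H)=\Jlin^{(H)}(K\mid\gamma)$. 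Plugging in gives the self-referential bound $J\ge\opnorm{P_H}\big(\tfrac{\alpha\|\matx_0\|^2}{2\dimx}-H\betanl^2(1+\opnorm{K}^2)^2J^2\big)$; splitting on whether $H\betanl^2(1+\opnorm{K}^2)^2J^2\le\tfrac{\alpha\|\matx_0\|^2}{4\dimx}$ yields either $J\ge\opnorm{P_H}\tfrac{\alpha\|\matx_0\|^2}{4\dimx}\ge\tfrac{\alpha\|\matx_0\|^2}{4\dimx^2}\Jlin^{(H)}(K\mid\gamma)$ (the first branch), or else $J$ already exceeds a constant multiple of $\sqrt{\alpha/\dimx}\,\|\matx_0\|/(H\betanl(1+\opnorm{K}))$ (the second branch). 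The step I expect to be the main obstacle is the constant- and power-bookkeeping needed to make the escape case ($\tau\le H-1$), and the residual threshold produced by the self-referential inequality, match the \emph{exact} second-branch quantity $\sqrt{\alpha/\dimx}\,\|\matx_0\|/(2H\betanl(1+\opnorm{K}))$ using only the stated hypothesis; lining up the powers of $1+\opnorm{K}$, $H$, $\betanl$, and especially the fourth power of $\rnl$ is delicate and may require a sharper exit estimate that also invokes the deviation bound truncated at $\tau$.
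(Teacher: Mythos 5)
Your skeleton matches the paper's: both arguments decompose the nonlinear rollout into the linear rollout plus an accumulated error, apply $\langle a+b,\Sigma(a+b)\rangle\ge\tfrac12\langle a,\Sigma a\rangle-\langle b,\Sigma b\rangle$, control the error through $P_H$ (your ``propagator bound'' is the paper's $P_{H-t-1}\preceq P_H$ step), and use $\matx_0\in\calX_\alpha$ together with $\opnorm{P_H}\ge\trace(P_H)/\dimx$ to extract the first branch; your Case A is therefore fine. The gap is exactly where you flagged it: neither your escape case nor the residual case of your self-referential inequality reaches the stated second branch. Concretely, your Case B yields $J\ge\tfrac12(\alpha/\dimx)^{1/2}\|\matx_0\|/\big(\sqrt{H}\,\betanl(1+\opnorm{K}^2)\big)$, which falls below the target $(\alpha/\dimx)^{1/2}\|\matx_0\|/\big(2H\betanl(1+\opnorm{K})\big)$ whenever $\sqrt{H}(1+\opnorm{K})<1+\opnorm{K}^2$, and your exit case gives only $J\ge\rnl^2/(1+\opnorm{K})^2$, which under the stated hypothesis on $\rnl^4$ is below the target by a factor of order $(1+\opnorm{K})^{2}$. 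Since the lemma must hold for arbitrary $K$, this is a genuine shortfall, not just bookkeeping.

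The missing idea is the paper's introduction of an \emph{intermediate} radius $r_1$, generally much smaller than $\rnl$, in place of your exit time from the $\rnl$-ball. Because $\|\fnl(\matx,K\matx)\|\le\betanl(1+\opnorm{K}^2)\|\matx\|^2$ is quadratic in the state, the accumulated error energy under the a priori constraint $\|\matx_t\|\le r_1$ is at most $H^2\opnorm{P_H}\betanl^2(1+\opnorm{K}^2)\,r_1^4$ (quartic in $r_1$), whereas the cost of violating that constraint is only $r_1^2$ (quadratic), since $Q+K^\top RK\succeq I$ forces $\Jnl^{(H)}(K\mid\gamma,\matx_0)\ge\|\matx_t\|^2$. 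The paper therefore splits on whether $\Jnl^{(H)}(K\mid\gamma,\matx_0)\le r_1^2$ and optimizes $r_1^4=\alpha\|\matx_0\|^2/\big(4\betanl^2(1+\opnorm{K}^2)\dimx H^2\big)$ to balance the quartic error term against $\alpha\|\matx_0\|^2\opnorm{P_H}/(4\dimx)$; the complementary case then gives $J\ge r_1^2=(\alpha/\dimx)^{1/2}\|\matx_0\|/\big(2H\betanl\sqrt{1+\opnorm{K}^2}\big)$, which dominates the stated branch via $\sqrt{1+\opnorm{K}^2}\le1+\opnorm{K}$. The hypothesis $\|\matx_0\|^2\alpha/\big(2H^2\betanl^2(1+\opnorm{K}^2)\dimx\big)\le\rnl^4$ exists precisely to guarantee $r_1\le\rnl$ so that \Cref{lemma:jacobian} applies throughout the bounded case --- not to make the escape-to-$\rnl$ cost large, which is how you are using it. Replacing your split at $\tau$ by the split at the optimized $r_1$, and bounding $\max_t\|\matw_t\|^2\le\betanl^2(1+\opnorm{K}^2)^2r_1^4$ pointwise rather than bounding $\sum_s\|\fnl\|$ by $\betanl(1+\opnorm{K}^2)J$ and squaring (which is what costs you the extra power of $1+\opnorm{K}^2$), repairs the argument.
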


\begin{proof} Fix a constant $r_1 \le \rnl$ to be selected. Throughout, we use the shorthand $\beta_1^2 = \betanl^2 (1+\|K\|^2)$ We consider two cases.  
\paragraph{Case 1:} The initial point $\matx_0$ is such that it is always the case that $\|\matx_t\| \le r_{1}$ for all $t \in \{0,1,\dots,H-1\}$ and $ \Jnl^{(H)}(K  \mid \gamma, \matx_0) \le r_1^2$. 
Observe that we can write the nonlinear dynamics as
\begin{align*}
\matx_{t+1} = \sqrt{\gamma}(A+BK) \matx_t + \matw_t,
\end{align*}
where $\matw_t = \sqrt{\gamma}\cdot \fnl(\matx_t,K\matx_t)$. We now write:
\begin{align*}
\matx_{t} &= \matx_{t;0} + \matx_{t;w}, \text{where }\\
&\qquad \matx_{t;0} = \sqrt{\gamma}(A+BK)^{t}\matx_0\\
&\qquad \matx_{t;w} = \sum_{i=0}^{t-1}\gamma^{i/2}(A+BK)^{i}\matw_{t-i}.
\end{align*}
Then, setting $Q_K = Q + K^\top R K$,
\begin{align*}
\Jnl^{(H)}(K \mid \gamma, \matx_0) &= \sum_{t=0}^{H-1} \matx_t^\top Q_K \matx_t\\
&= \sum_{t=0}^{H-1} \left(\matx_{t;0}^\top Q_K \matx_{t;0} + 2\matx_{t;0}^\top Q_K \matx_{t;w}  + \matx_{t;w}^\top Q_K \matx_{t;w} \right)\\
&\ge \frac{1}{2}\sum_{t=0}^{H-1} \matx_{t;0}^\top Q_K \matx_{t;0} - \sum_{t=0}^{H-1} \matx_{t;w}^\top Q_K \matx_{t;w}\\
&= \frac{1}{2}\matx_0^\top P_H \matx_0 - \sum_{t=0}^{H-1} \matx_{t;w}^\top Q_K \matx_{t;w},
\end{align*}
where (a) the last inequality uses the elementary inequality $\langle \matv_1, \Sigma \matv_1 \rangle + \langle \matv_2, \Sigma \matv_2 \rangle+ 2 \langle \matv_1, \Sigma \matv_2 \rangle \ge \frac{1}{2}\langle \matv_1, \Sigma \matv_1 \rangle - \langle \matv_2, \Sigma \matv_2 \rangle$ for any pair of vectors $\matv_1, \matv_2$ and $\Sigma \succeq 0$, and (b) the last inequality recognizes how
\begin{align*}
	\sum_{t=0}^{H-1} \matx_{t;0}^\top Q_K \matx_{t;0} = \Jlin^{(H)}(K \mid \matx_0,\gamma) = \matx_0^\top P_H \matx_0
\end{align*}
for $P_H$ defined above in \Cref{def:ph}. Moreover, for any $t$,
\begin{align*}
\sum_{t=0}^{H-1}\matx_{t;w}^\top Q_K \matx_{t;w} &= \sum_{t=0}^{H-1}\sum_{i=0}^{t-1}\sum_{j=0}^{t-1} \matw_{t-i}^\top\gamma^{(i+j)/2}((A+BK)^{i})^\top Q_K(A+BK)^{j}\matw_{t-j}\\
&\le  H\sum_{t=0}^{H-1}\sum_{i=0}^{t-1}\matw_{t-i}^\top \gamma^i ((A+BK)^{i})^\top Q_K(A+BK)^{i}\matw_{t-i}\\
&=  H\sum_{t=0}^{H-2}\matw_{t}^\top\left(\sum_{i=0}^{H-t-1}\gamma^i ((A+BK)^{i})^\top Q_K(A+BK)^{i}\right)\matw_{t}\\
&=  H\sum_{t=0}^{H-2}\matw_{t}^\top P_{H-t-1}\matw_{t} \le H^2 \|P_H\|_{\op} \max_{t\in\{0,\dots,H-2\}}\|\matw_t\|^2.
\end{align*}
Now, because $\|\matx_t\| \le r_1 \le \rnl$, \Cref{lemma:jacobian} lets use bound $\|\matw_t\|^2\le \betanl^2 (1+\|K\|^2) \|\matx_t\|^4 \le \beta_1^2 r_1^4$, where we adopt the previously defined shorthand $\beta_1^2 = \betanl^2 (1+\|K\|^2)$. Therefore, 
\begin{align*}
\Jnl^{(H)}(K \mid \gamma, \matx_0) \ge \frac{1}{2}\matx_0^\top P_H \matx_0 - H^2 \beta_1^2 r_1^4 \|P_H\|_{\op}. 
\end{align*}
Next, if $\matx_0 \in \calX_{\alpha}$,
\begin{align*}
\Jnl^{(H)}(K \mid \gamma, \matx_0) &\ge \frac{\alpha }{2\dimx} \|\matx_0\|^2 \|P_H\|  - H^2 \beta_1^2 r_1^4 \|P_H\|_{\op}\\
&= \|P_H\|\left(\frac{\alpha }{2\dimx} \|\matx_0\|^2   - H^2 \beta_1^2 r_1^4\right).
\end{align*}
In particular, selecting $r_1^4 = \frac{\alpha }{4 \beta_1^2  \dimx H^2} \|\matx_0\|^2$ (which ensures $r_1 \le r_{\star}$ by the conditions of the lemma), it holds that,
\begin{align*}
\Jnl^{(H)}(K \mid \gamma, \matx_0) \ge \frac{\|P_H\|\alpha }{4\dimx} \|\matx_0\|^2 \ge \frac{\trace(P_H)\alpha }{4\dimx^2} \|\matx_0\|^2 = \frac{\alpha\Jlin^{(H)}(K,\gamma) }{4\dimx^2} \|\matx_0\|^2.
\end{align*}

\paragraph{Case 2:} The initial point $\matx_0$ is such that it is always the case that either $\|\matx_t\| \ge r_{1}$ for all $t \in \{0,1,\dots,H-1\}$ or $ \Jnl^{(H)}(K \mid \gamma, \matx_0) \ge r_1^2$. Therefore, in either case, $\Jnl^{(H)}(K \mid \gamma, \matx_0) \ge r_1^2$. For our choice of $r_1$, this gives
\begin{align*}
\Jnl^{(H)}(K \mid \gamma, \matx_0) \ge \sqrt{\frac{\alpha  \|x_0\| ^2}{4\dimx H^2\beta_1^2}} = \sqrt{\frac{\alpha  \|x_0\|^2}{4\dimx H^2\betanl^2(1+\|K\|^2)}} \ge \sqrt{\frac{\alpha}{\dimx}} \frac{\|x_0\|}{2H\betanl (1+\|K\|)}.
\end{align*}
 Combining the cases, we have
\begin{align*}
\Jnl^{(H)}(K \mid \gamma, \matx_0) \ge   \min\left\{\frac{\alpha\|\matx_0\|^2 }{4\dimx^2}\cdot\Jlin^{(H)}(K \mid \matx_0,\gamma), \sqrt{\frac{\alpha}{\dimx}} \frac{\|x_0\|}{2H\betanl (1+\|K\|)}.\right\}.
\end{align*}

\end{proof}

\begin{proposition}
\label{prop:finite_horizon}
 Let $c$ be a given (integer) tolerance $c$ and  $\calX_{\alpha} := \{\matx \in \R^{d}: \langle \matx , \matv_{\max}(P_H)\rangle^2 \ge \alpha\|\matx\|^2/ \dimx\}$ be defined as in \Cref{lem:nl_lb}. Then, for $\alpha \in (0,2]$, $H = 4c$, and $\matx_0 \in \calX_{\alpha}$ satisfying,
\begin{align}
\|\matx_0\|^2  \le c^2\rnl^4, \quad \text{and} \quad \|\matx_0\|\le\frac{\dimx}{64c^2 \betanl (1+\|K\|)},
\end{align}
it holds that:
\begin{align*}
\frac{ \Jnl^{(H)}(K \mid \gamma, \matx_0)}{\|\matx_0\|^2} \ge \frac{\alpha}{8\dimx^2}\min\left\{\Jlin(K\mid \gamma),c\right\}.
\end{align*}
Moreover, for $r \leq \min\{c\rnl^2,  \frac{\dimx}{64c^2 \betanl (1+\|K\|)}  \}$,
\begin{align*}
\Jnl^{(H)}(K \mid \gamma, r) \ge \frac{1}{80\dimx^2} \min\left\{\Jlin(K \mid \gamma),c\right\}.
\end{align*}
\end{proposition}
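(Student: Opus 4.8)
The plan is to obtain the first (pointwise) inequality as a direct consequence of \Cref{lem:nl_lb} combined with \Cref{lem:H_step_P}, and then to upgrade it to the averaged statement by integrating the pointwise bound over a constant‑probability subset of the sphere identified by \Cref{lemma:inner_product_lb}. Throughout, $\Jlin^{(H)}(K\mid\gamma)$ is read as $\trace(P_H)$, consistent with the proof of \Cref{lem:nl_lb}.

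\textbf{Pointwise bound.} First I would check the hypothesis of \Cref{lem:nl_lb} for the given $\matx_0$, $H=4c$ and $\alpha\le 2$: since $\betanl\ge 1$, $\dimx\ge 1$ and $1+\|K\|^2\ge 1$, the required inequality $\|\matx_0\|^2\cdot\frac{\alpha}{2H^2\betanl^2(1+\|K\|^2)\dimx}\le\rnl^4$ is implied by $\|\matx_0\|^2\le c^2\rnl^4$ (using $H^2=16c^2$). Invoking \Cref{lem:nl_lb} and dividing by $\|\matx_0\|^2$ then yields
\[
\frac{\Jnl^{(H)}(K \mid \gamma, \matx_0)}{\|\matx_0\|^2}\ \ge\ \min\left\{\frac{\alpha}{4\dimx^2}\,\Jlin^{(H)}(K\mid\gamma),\ \ \sqrt{\frac{\alpha}{\dimx}}\cdot\frac{1}{2H\betanl(1+\|K\|)\,\|\matx_0\|}\right\}.
\]
For the first argument, \Cref{lem:H_step_P} (with the constant $c$ and $H=4c$) gives $\Jlin^{(H)}(K\mid\gamma)\ge\min\{\tfrac12\Jlin(K\mid\gamma),c\}\ge\tfrac12\min\{\Jlin(K\mid\gamma),c\}$, so it is at least $\tfrac{\alpha}{8\dimx^2}\min\{\Jlin(K\mid\gamma),c\}$. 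For the second argument, the hypothesis $\|\matx_0\|\le\dimx/(64c^2\betanl(1+\|K\|))$ together with $H=4c$ shows it is at least $\tfrac{8c\sqrt{\alpha}}{\dimx^{3/2}}$, which exceeds $\tfrac{\alpha c}{8\dimx^2}\ge\tfrac{\alpha}{8\dimx^2}\min\{\Jlin(K\mid\gamma),c\}$ because $4096\,\dimx\ge\alpha$ whenever $\dimx\ge 1$ and $\alpha\le 2$. Taking the minimum establishes the first claim.

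\textbf{Averaged bound.} Here I would specialize to $\alpha=1$, so $\calX_{1}$ is the relevant set. For $\matx\sim r\cdot\dxsphere$ we have $\|\matx\|=r$, and the stated conditions $r\le c\rnl^2$ and $r\le\dimx/(64c^2\betanl(1+\|K\|))$ are exactly what is needed for the pointwise bound to apply on the event $\{\matx\in\calX_{1}\}$. By rotational invariance of the uniform law on the sphere, $\Pr_{\matx\sim r\cdot\dxsphere}[\matx\in\calX_{1}]=\Pr_{\matu,\matv\iidsim\dxsphere}[(\matu^\top\matv)^2\ge 1/\dimx]\ge\tfrac{1}{10}$ by \Cref{lemma:inner_product_lb}. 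Since $\Jnl^{(H)}(K\mid\gamma,\matx)\ge 0$ for every $\matx$, and $\Jnl^{(H)}(K\mid\gamma,\matx)\ge\tfrac{r^2}{8\dimx^2}\min\{\Jlin(K\mid\gamma),c\}$ on $\calX_{1}$, taking expectations over $\matx\sim r\cdot\dxsphere$ and multiplying by $\dimx/r^2$ gives
\[
\Jnl^{(H)}(K\mid\gamma,r)\ \ge\ \frac{1}{80\,\dimx}\min\{\Jlin(K\mid\gamma),c\}\ \ge\ \frac{1}{80\,\dimx^2}\min\{\Jlin(K\mid\gamma),c\},
\]
using $\dimx\ge 1$ in the last step.

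\textbf{Main obstacle.} The only step calling for genuine care is ensuring that the second argument of the minimum in \Cref{lem:nl_lb} is never the binding constraint — this is precisely what the smallness conditions on $\|\matx_0\|$ (resp.\ on $r$) are engineered to enforce — and keeping the $\dimx$‑, $c$‑ and $\alpha$‑dependent constants consistent with the choice $H=4c$. Everything else is routine bookkeeping, and the loss of a factor of $\dimx$ between $\tfrac{1}{80\dimx}$ and $\tfrac{1}{80\dimx^2}$ is harmless.
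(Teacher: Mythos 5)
Your proposal is correct and follows essentially the same route as the paper: it feeds the hypotheses into \Cref{lem:nl_lb}, lower-bounds the first branch of the resulting minimum via \Cref{lem:H_step_P} and shows the second branch is never binding under the smallness condition on $\|\matx_0\|$, then integrates the pointwise bound over the event $\{\matx_0 \in \calX_1\}$ using \Cref{lemma:inner_product_lb}. The only cosmetic difference is that you bound each argument of the minimum separately rather than factoring out $\frac{\alpha\|\matx_0\|^2}{8\dimx^2}$ and comparing the residual term to $c$ as the paper does; the constants and conclusions agree.
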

\begin{proof}
 From \Cref{lem:nl_lb}, it holds that for $H = 4c$,
\begin{align}
\|\matx_0\|^2 \cdot \frac{\alpha }{32 c^2 \betanl^2 (1+\|K\|^2) \dimx} \le \rnl^4 \label{eq:xnot_rnl_condition}
\end{align}
and hence 
\begin{align*}
\Jnl^{(H)}(K \mid \gamma, \matx_0) \ge   \min\left\{\frac{\alpha\|\matx_0\|^2 }{4\dimx^2}\cdot\Jlin^{(H)}(K \mid \matx_0,\gamma), \sqrt{\frac{\alpha}{\dimx}} \frac{\|\matx_0\|}{8c\betanl (1+\|K\|)}\right\}.
\end{align*}
Note that, due to $\alpha / \dimx \le 1$, $\betanl^2 \ge 1$,  \Cref{eq:xnot_rnl_condition}  holds as soon as $\|\matx_0\| \le c^2 \rnl^4$. By \Cref{lem:H_step_P}, it then follows that:
\begin{align*}
\Jnl^{(H)}(K \mid \gamma, \matx_0) &\ge   \min\left\{\frac{\alpha\|\matx_0\|^2 }{4\dimx^2}\cdot \min\left\{\frac{1}{2}\Jlin(K \mid \gamma), c\right\}, \sqrt{\frac{\alpha}{\dimx}} \frac{\|\matx_0\|}{8c\betanl (1+\|K\|)}\right\}\\
&\ge   \min\left\{\frac{\alpha\|\matx_0\|^2 }{8\dimx^2}\min\left\{\Jlin(K \mid \gamma), c\right\}, \sqrt{\frac{\alpha}{\dimx}} \frac{\|\matx_0\|}{8c\betanl (1+\|K\|)}\right\}\\
&=   \frac{\alpha\|\matx_0\|^2 }{8\dimx^2}\min\left\{\Jlin(K \mid \gamma), \,c,\, \left(\frac{\alpha\|\matx_0\|^2 }{8\dimx^2}\right)^{-1}\sqrt{\frac{\alpha}{\dimx}} \frac{\|\matx_0\|}{8c\betanl (1+\|K\|)}\right\}.
\end{align*}
Simplifying, the last term gives 
\begin{align*}
\left(\frac{\alpha\|\matx_0\|^2 }{8\dimx^2}\right)^{-1}\sqrt{\frac{\alpha}{\dimx}} \frac{\|\matx_0\|}{8c\betanl (1+\|K\|)} = \frac{1}{c\|\matx_0\|} \cdot \left(\frac{\dimx^3}{\alpha}\right)^{1/2} \frac{1}{64 \betanl (1+\|K\|)} \ge \frac{1}{c\|\matx_0\|} \cdot \frac{\dimx }{64 \betanl (1+\|K\|)},
\end{align*}
where we use $\dimx/\alpha \ge 1$. Thus, for 
\begin{align*}
\|\matx_0\| \le  \frac{\dimx}{64c^2 \betanl (1+\|K\|)},
\end{align*}
the third term in the minimum is at most $c$, so that
\begin{align*}
\Jnl^{(H)}(K \mid \gamma, \matx_0)  \ge  \frac{\alpha\|\matx_0\|^2 }{8\dimx^2}\min\left\{\Jlin(K \mid \gamma), \,c\right\}.
\end{align*}
Lastly, using \Cref{lemma:inner_product_lb},
\begin{align*}
\Jnl^{(H)}(K \mid \gamma, r) &= \frac{\dimx}{r^2}\cdot \E_{\matx_0 \sim r \cdot \dxsphere}\Jnl^{(H)}(K \mid \gamma, \matx_0) \\
&\ge  \Pr[x_0 \in \calX_1] \cdot \frac{1}{8\dimx^2}\min\left\{\Jlin(K \mid \gamma),c\right\}\\
&\ge  \frac{1}{80\dimx^2}\min\left\{\Jlin(K \mid \gamma),c\right\}.
\end{align*}
\end{proof}

\subsection{Search analysis}
\label{subsec:search}
\begin{figure}[t!]
\setlength{\fboxsep}{2mm}
\begin{boxedminipage}{.48\textwidth}
\vspace{5pt}
\begin{center}
\underline{Noisy Binary Search} \\
\end{center}
{\bf Require:} $\fbar_1$ and $\fbar_2$ as defined in \Cref{lemma:binary_search}.\\ 

{\bf Initialize:} $b_0\leftarrow 0, u_0 \leftarrow 1$,  $c \geq \fbar_2 + 2\alpha$\\ \hspace{90pt}\quad for $\alpha \defeq \alpha =  \min \{ |f_1 - \fbar_1|, |f_2 - \fbar_2| \}$, $\epsilon \in (0, \alpha / 2)$\\

{\bf For $t = 1, \dots$}
\begin{center}
\begin{enumerate}
	\item Query $a_t \leftarrow \epseval(f, x_t, c)$ where $$x_t = \frac{b_t+u_t}{2}$$
	\item If $a_t > \fbar_2 + \epsilon $, update $u_t \leftarrow x_t$ and $b_t \leftarrow b_{t-1}$

	\item Else if,  $a_t < \fbar_1 + \epsilon$, update $b_t \leftarrow x_t$ and $u_t \leftarrow u_{t-1}$
	\item Else, break and return $x_t$

\end{enumerate}
\end{center}
\vspace{5pt}
\end{boxedminipage}
\begin{boxedminipage}{.48\textwidth}
\vspace{5pt}
\begin{center}
\underline{Noisy Random Search} \\
\end{center}
{\bf Require:} $\fbar_1$, $\fbar_2$ as defined in \Cref{lemma:random_search}.\\ 

{\bf Initialize:} $c \geq \fbar_2 + 2\alpha$\\ for $\alpha \defeq  \min \{ |f_1 - \fbar_1|, |f_2 - \fbar_2| \}$, $\epsilon \in (0, \alpha / 2)$\\ 

{\bf For $t = 1, \dots$}
\begin{center}
\begin{enumerate}
	\item Sample $x$ uniformly at random from $[0,1]$
	\item Query $a \leftarrow \epseval(f, x, c)$ 

	\item If $a \in [\fbar_1, \fbar_2]$, break and return $x$, 
\end{enumerate}
\end{center}
\vspace{75pt}
\end{boxedminipage}

\caption{Algorithms for 1 dimensional search used as subroutines for Step 4 in the discount annealing algorithm.}
\label{fig:noisy_binary_search}
\end{figure}

\begin{lemma}
\label{lemma:binary_search}
Let $f:[0,1] \rightarrow \R \cup \{\infty \}$ be a nondecreasing function over the unit interval. Then, given $\fbar_1, \fbar_2$ such that $[\fbar_1, \fbar_2] \subseteq [f_1, f_2]$ and for which there exist $[x_1, x_2] \subseteq [0,1]$ such that for all $x' \in [x_1, x_2]$, $f(x') \in [\fbar_1, \fbar_2]$,  binary search as defined in \Cref{fig:noisy_binary_search} returns a value $x_\star \in [0,1]$ such that $f(x_\star) \in [f_1, f_2]$ in at most $\ceil{\log_2(1 / \Delta)}$ many iterations where $\Delta = x_2 - x_1$.
\end{lemma}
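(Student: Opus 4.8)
The plan is to run a standard bisection analysis, but with care taken so that the additive error in the $\epseval$ queries (Definition \ref{defn:queries}) never causes the algorithm to discard the target window $[x_1,x_2]$, and so that the first midpoint that lands in $[x_1,x_2]$ triggers acceptance with a provably valid answer. Throughout I write $x_t=(b_{t-1}+u_{t-1})/2$ for the midpoint queried in round $t$, and I use $a_t=\epseval(f,x_t,c)$, so $|a_t-\min\{f(x_t),c\}|\le\epsilon$, together with the facts $f_1\le\fbar_1\le\fbar_2\le f_2$, $c\ge\fbar_2+2\alpha$, and $\epsilon<\alpha/2$ where $\alpha=\min\{|f_1-\fbar_1|,|f_2-\fbar_2|\}$.

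\textbf{Step 1 (interval invariant).} I would first prove by induction that $[x_1,x_2]\subseteq[b_t,u_t]$ as long as the loop has not returned; the base case is $[b_0,u_0]=[0,1]$. For the step, note $f(b_t)\le f(x_1)\le\fbar_2<\infty$ always, so all quantities are well defined even though $f$ may equal $\infty$. If the reject-high branch fires ($a_t>\fbar_2+\epsilon$), then $\min\{f(x_t),c\}>\fbar_2$; since $c\ge\fbar_2+2\alpha>\fbar_2$, this forces $f(x_t)>\fbar_2\ge f(x_2)$, and monotonicity of $f$ gives $x_t>x_2$, so replacing $u_{t-1}$ by $x_t$ preserves the inclusion. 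Symmetrically, if the reject-low branch fires, then $f(x_t)<\fbar_1\le f(x_1)$, hence $x_t<x_1$, and replacing $b_{t-1}$ by $x_t$ is harmless.

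\textbf{Step 2 (halting and iteration count).} Each round that does not return halves the width $u_t-b_t$, which starts at $1$; by Step 1 this width is bounded below by $x_2-x_1=\Delta$ while the loop runs, so after $\lceil\log_2(1/\Delta)\rceil$ rounds the width would drop below $\Delta$, a contradiction, giving the stated iteration bound. To see the loop does in fact return once the midpoint enters the window: if $x_t\in[x_1,x_2]$ then $f(x_t)\in[\fbar_1,\fbar_2]$, and since $c>\fbar_2$ we get $\min\{f(x_t),c\}=f(x_t)$, so $a_t\in[\fbar_1-\epsilon,\fbar_2+\epsilon]$; neither reject branch can fire and the algorithm returns $x_t$.

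\textbf{Step 3 (correctness of the output).} When the algorithm returns $x_\star=x_t$ it did so through the accept branch, so $\fbar_1-\epsilon\le a_t\le\fbar_2+\epsilon$, hence $\min\{f(x_t),c\}\in[\fbar_1-2\epsilon,\fbar_2+2\epsilon]$. From the lower side, $f(x_t)\ge\min\{f(x_t),c\}>\fbar_1-\alpha\ge\fbar_1-|f_1-\fbar_1|=f_1$. From the upper side, $\min\{f(x_t),c\}<\fbar_2+\alpha$; but $c\ge\fbar_2+2\alpha>\fbar_2+\alpha$, so the case $\min\{f(x_t),c\}=c$ is impossible, forcing $\min\{f(x_t),c\}=f(x_t)<\fbar_2+\alpha\le\fbar_2+|f_2-\fbar_2|=f_2$. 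Thus $f(x_\star)\in(f_1,f_2)\subseteq[f_1,f_2]$, as claimed.

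\textbf{Main obstacle.} Everything hinges on the bookkeeping around the three nearly-coincident thresholds $f_1,\fbar_1$ and $f_2,\fbar_2$ and the clip level $c$: one must verify that the slack $\epsilon$ and the margin $2\alpha$ in $c$ are exactly what is needed so that (i) a midpoint inside $[x_1,x_2]$ can never spuriously trip a reject branch, (ii) a reject move genuinely certifies $x_t\notin[x_1,x_2]$ on the correct side so the invariant survives, and (iii) acceptance certifies $f(x_\star)\in[f_1,f_2]$ despite both the $\pm\epsilon$ noise and the possibility that $f$ (or even $c$) obscures the true value — the $\min\{f(x_t),c\}=c$ case in Step 3 being the one that would break without the factor $2\alpha$. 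Keeping $f\equiv\infty$ on part of the domain in mind throughout is what makes this slightly fiddly rather than automatic.
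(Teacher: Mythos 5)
The paper does not actually prove \Cref{lemma:binary_search}: it declares the analysis of these one-dimensional search routines ``standard'' and omits it. Your argument is a correct and complete instantiation of exactly the standard noisy-bisection proof the authors are alluding to, with the three ingredients in the right places: the invariant $[x_1,x_2]\subseteq[b_t,u_t]$ (which both forces termination within $\lceil\log_2(1/\Delta)\rceil$ halvings and guarantees a midpoint eventually triggers acceptance), and the separate certification that acceptance implies $f(x_\star)\in[f_1,f_2]$, where you correctly isolate why the clip level $c\ge \fbar_2+2\alpha$ is needed to rule out the $\min\{f(x_t),c\}=c$ case. One point worth flagging: your Steps 1--3 implicitly read the reject-low branch as $a_t<\fbar_1-\epsilon$, whereas \Cref{fig:noisy_binary_search} literally writes $a_t<\fbar_1+\epsilon$. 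Under the literal reading a midpoint inside $[x_1,x_2]$ with $f(x_t)=\fbar_1$ can return $a_t\in[\fbar_1-\epsilon,\fbar_1+\epsilon)$ and trip the reject-low branch, which moves $b_t$ into the target window and breaks both your invariant and the iteration bound; the figure's ``$+\epsilon$'' is almost certainly a sign typo, and your reading is the one under which the lemma as stated is true. The only other loose ends are cosmetic: the degenerate case $\alpha=0$ (where $\epsilon\in(0,\alpha/2)$ is empty) and the usual off-by-one at $2^{-T}=\Delta$, neither of which affects the substance.
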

\begin{lemma}
\label{lemma:random_search}
Let $f:[0,1] \rightarrow \R \cup \{\infty\}$ be a function over the unit interval. Then, given given $\fbar_1, \fbar_2$ such that $[\fbar_1, \fbar_2] \subseteq [f_1, f_2]$ and for which there exist $[x_1, x_2] \subseteq [0,1]$ such that for all $x' \in [x_1, x_2]$, $f(x') \in [\fbar_1, \fbar_2]$, with probability $1-\delta$, noisy random search as defined in \Cref{fig:noisy_binary_search}  returns a value $x_\star \in [0,1]$ such that $f(x_\star) \in [f_1, f_2]$ in at most $1 / \Delta \log(1 / \delta)$ many iterations where $\Delta = x_2 - x_1$.
\end{lemma}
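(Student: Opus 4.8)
\textbf{Proof proposal for \Cref{lemma:random_search}.}

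The plan is to analyze noisy random search as a sequence of independent Bernoulli trials, where each trial ``succeeds'' if the sampled point $x$ lands in a subinterval on which the estimated value $a$ is guaranteed to fall within $[\fbar_1,\fbar_2]$. First I would establish the key deterministic fact that drives the argument: if $x' \in [x_1,x_2]$ then by hypothesis $f(x') \in [\fbar_1,\fbar_2] \subseteq [f_1,f_2] \subseteq [0, c - 2\alpha]$ (using the choice $c \ge \fbar_2 + 2\alpha$), so the query $a = \epseval(f,x',c)$ returns $|a - \min\{f(x'),c\}| = |a - f(x')| \le \epsilon < \alpha/2$. Hence $a \in [\fbar_1 - \epsilon, \fbar_2 + \epsilon] \subseteq [f_1, f_2]$, where the last inclusion uses $\alpha = \min\{|f_1 - \fbar_1|, |f_2 - \fbar_2|\}$ and $\epsilon < \alpha$. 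So whenever $x \in [x_1,x_2]$, the algorithm's acceptance test $a \in [\fbar_1,\fbar_2]$ — wait, I should be careful here: the test in \Cref{fig:noisy_binary_search} checks $a \in [\fbar_1,\fbar_2]$, and I have only shown $a \in [\fbar_1 - \epsilon, \fbar_2 + \epsilon]$. The cleaner route, matching how the lemma is used, is to observe that whenever the algorithm \emph{does} accept and return $x$, i.e.\ $a \in [\fbar_1,\fbar_2]$, then $f(x) \in [a - \epsilon, a + \epsilon] \subseteq [\fbar_1 - \epsilon, \fbar_2 + \epsilon] \subseteq [f_1,f_2]$, which gives correctness of the output. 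For the iteration bound I would instead argue that sampling $x$ in a slightly shrunk subinterval (or simply noting the test is satisfied on a positive-measure set containing a subinterval of $[x_1,x_2]$ of length at least some constant fraction of $\Delta$) guarantees acceptance with probability at least $\Delta$ per round; the simplest clean statement uses that on $[x_1,x_2]$ itself the returned estimate lies in $[\fbar_1 - \epsilon, \fbar_2 + \epsilon]$ and, after adjusting $\fbar_1, \fbar_2$ by $\epsilon$ in the algorithm's conditionals (as the binary search version does with ``$+\epsilon$'' offsets), acceptance is certain.

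Second, I would set up the probabilistic bound. Let $p \ge \Delta = x_2 - x_1$ be the probability that a single uniformly random $x \in [0,1]$ lands in $[x_1,x_2]$, conditioned on which the algorithm terminates and returns a valid point. The number of rounds until the first such success is geometric with parameter $p$. The probability that the algorithm has \emph{not} terminated after $T$ rounds is at most $(1-p)^T \le (1-\Delta)^T \le \exp(-\Delta T)$. Setting $T = \lceil \Delta^{-1} \log(1/\delta)\rceil$ makes this at most $\delta$, so with probability $1-\delta$ the algorithm halts within $\Delta^{-1}\log(1/\delta)$ iterations, and by the deterministic fact above, the returned $x_\star$ satisfies $f(x_\star) \in [f_1,f_2]$.

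The main obstacle — really the only subtlety — is reconciling the acceptance test ``$a \in [\fbar_1,\fbar_2]$'' stated in the pseudocode with the $\epsilon$-slack in $\epseval$: one must either widen the test to $[\fbar_1 - \epsilon, \fbar_2 + \epsilon]$ (matching the ``$+\epsilon$'' convention used in the noisy binary search box), or shrink the ``certain-success'' subinterval by the amount needed so that $f$ stays strictly inside $[\fbar_1 + \epsilon, \fbar_2 - \epsilon]$ there. Either fix is a one-line adjustment and costs only constant factors in $\Delta$ that are absorbed into the stated bound; since the lemma is applied with a concrete polynomial lower bound on $\Delta$ (e.g.\ $\Delta \ge 1/(5200\traceb{\Pst}^4)$ times $\gamma_0$ in the proof of \Cref{theorem:nonlinear_algorithm}), these constants are immaterial. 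I would present the argument with the test written as $a \in [\fbar_1 - \epsilon, \fbar_2 + \epsilon]$ for consistency with the rest of the search analysis, note the deterministic containment $f(x_\star) \in [f_1,f_2]$, and then invoke the geometric-tail computation to conclude.
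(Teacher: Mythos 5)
Your proposal is correct, and in fact the paper omits the proof of \Cref{lemma:random_search} entirely (it is dismissed as ``standard''), so there is nothing to diverge from: the geometric-trials argument you give --- each round lands in $[x_1,x_2]$ with probability at least $\Delta$, acceptance there is (essentially) guaranteed, and $(1-\Delta)^T \le e^{-\Delta T} \le \delta$ for $T \ge \Delta^{-1}\log(1/\delta)$ --- is exactly the intended one. The one subtlety you flag is real: with the acceptance test written literally as $a \in [\fbar_1,\fbar_2]$, landing in $[x_1,x_2]$ only yields $a \in [\fbar_1-\epsilon,\fbar_2+\epsilon]$, so acceptance is not certain; widening the test by $\epsilon$ (as the binary-search box does) restores certainty while preserving correctness of the output, since $2\epsilon < \alpha$ keeps $[\fbar_1-2\epsilon,\fbar_2+2\epsilon] \subseteq [f_1,f_2]$. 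The only step worth making explicit is that the cap $c \ge \fbar_2 + 2\alpha$ also guarantees the converse direction on acceptance: if $f(x) > c$ then $a \ge c - \epsilon > \fbar_2 + \epsilon$, so the test rejects, and hence any accepted point satisfies $\min\{f(x),c\} = f(x)$, which is what licenses the containment $f(x_\star) \in [f_1,f_2]$.
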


The analysis of the correctness and runtime of this classical algorithms for search problems in 1 dimension are standard. We omit the proofs for the sake of concision.

\section{Additional Experiments}\label{sec:experiments_app}

\subsection{Cart-pole dynamics}\label{sec:cartpole}

The state of the cart-pole is given by $\matx=(x,\theta,\dot{x},\dot{\theta})$, where $x$ and $\dot{x}$ denote the horizontal position and velocity of the cart, and $\theta$ and $\dot{\theta}$ denote the angular position and velocity of the pole, with $\theta=0$ corresponding to the upright equilibrium. 
The control input $u \in\R$ corresponds to the horizontal force applied to the cart. 
The continuous time dynamics are given by 
\begin{equation*}
\left[\begin{array}{cc}
m_p + m_c  & -m_p l \cos(\theta) \\ 
-m_p l \cos(\theta) & m_pl^2
\end{array}\right]
\left[\begin{array}{c}
\ddot{x} \\ \ddot{\theta}
\end{array}\right]
=
\left[\begin{array}{c}
u - m_p l \sin(\theta) \dot{\theta}^2 \\ 
m_p g l \sin(\theta)
\end{array}\right]
\end{equation*}
where $m_p$ denotes the mass of the pole, $m_c$ the mass of the cart, $l$ the length of the pendulum, and $g$ acceleration due to gravity. 
For our experiments, we set all parameters to unity, i.e. $m_p=m_c=l=g=1$.

\subsection{$\hinf$ synthesis}\label{sec:hinf}
Consider the following linear system
\begin{equation}\label{eq:hinfsys}
\matx_{t+1} = A\matx_t + B \matu_t + \matw_t, \quad 
\matz_t = \left[\begin{array}{c}
Q^{1/2} \matx_t \\ R^{1/2}\matu_t
\end{array}\right],
\end{equation}
where $\matw$ denotes an additive disturbance to the state transition, and $\matz$ denotes the so-called performance output.
Notice that $\|\matz_t\|_2^2 = \matx_t^\top Q \matx_t + \matu_t^\top R\matu_t$.
The $\hinf$ optimal controller minimizes the $\hinf$-norm of the closed-loop system from input $\matw$ to output $\matz$, 
i.e. the smallest $\eta$ such that 
\begin{equation*}
\sum_{t=0}^T \|\matz_t\|_2^2 \leq \eta \sum_{t=0}^T \eta \| \matw_t\|_2^2
\end{equation*}
holds for all $\matw_{0:T}$, all $T$, and $\matx_0=0$. 
In essence, the $\hinf$ optimal controller minimizes the effect of the worst-case disturbance on the cost.
In this setting, additive disturbances $\matw$ serve as a crude (and unstructured) proxy for modeling error. 
We synthesize the $\hinf$ optimal controller using Matlab's \texttt{hinfsyn} function. 
For the system \eqref{eq:hinfsys} with perfect state observation, we require only static state feedback $\matu_t=K\matx_t$ to implement the controller.

\subsection{Difference between LQR and discount annealing as a function of discount}

In \Cref{fig:err_vs_discount} 
we plot the error $\| K_\text{pg}^*(\gamma) - K_\text{lin}^*(\gamma)\|_F$ between the policy $K_\text{pg}^*(\gamma)$ returned by policy gradient and the optimal policy $K_\text{lin}^*(\gamma)$ for the (damped) linearized system, as a function of the discount $\gamma$ during the discount annealing process. 
Observe that for small radius of the ball of initial conditions ($r=0.05$), the optimal controller from policy gradient remains very close to the exact optimal controller for the linearized system; however, for larger radius ($r=0.7$) the controller from policy gradient differs significantly.

\begin{figure}[b!]
	\centering
	\includegraphics[width=0.5\textwidth]{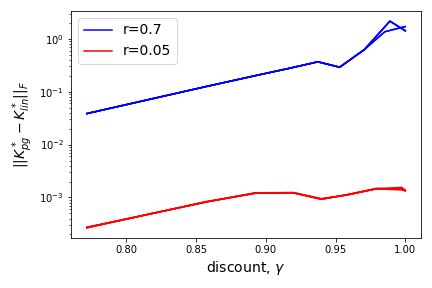}
	\caption{Error between the policy returned by policy gradient and optimal LQR policy for the (damped) linearized system during discount annealing, for two different radius values, $r=0.05$ and $r=0.7$, of the ball of initial conditions. Five independent trials are plotted for each radius value.The values across trials highly overlap.}
	\label{fig:err_vs_discount}
\end{figure}

\end{document}